\let\csname equation*\endcsname\relax
\let\csname endequation*\endcsname\relax
\newcounter{myalgctr}
\newenvironment{myalg}[1]{
	\vspace{-3mm}
   \refstepcounter{myalgctr}
   \textsc{\begin{center} \textbf{Algorithm \themyalgctr:} #1\end{center}}
   }{\par}  
\numberwithin{myalgctr}{section}
\theoremstyle{plain}
\newtheorem{thm}{Theorem}[section]
\newtheorem{coro}[thm]{Corollary}
\newtheorem{lem}[thm]{Lemma}
\theoremstyle{definition}
\newtheorem{defn}[thm]{Definition}
\theoremstyle{remark}
\newtheorem{remark}[thm]{Remark}
\def\vone{\mathds{C}^N\otimes \mathds{C}^d}
\def\Z{\mathds{Z}}
\def\N{\mathds{N}}
\def\C{\mathds{C}}
\def \Span{{\rm Span}}
\def\Ker{{\rm{Ker}\,}}
\def\Range{{\rm{Range}\,}}
\def\T{{\bm T}}
\def\Tminus{{\bm T^{-1}}}
\def\pn{{\bm P}_N}
\def\P{{\bm P}}
\def\V{{\mathcal{V}}}
\def\A{{\bm A}}
\def\g{{\bm g}}
\def\E{{\bm E}}
\def\D{{\bm D}}
\def\F{{\bm F}}
\def\W{{\cal W}}
\def\Pib{{\bm \Pi}}
\def\H{{\bm H}}
\def\M{{\bm M}}
\begin{document}

\title[]{Exact solution of corner-modified 
banded block-Toeplitz eigensystems}

\author{Emilio Cobanera$^1$, Abhijeet Alase$^1$, Gerardo Ortiz$^{2,3}$, 
and \\
Lorenza Viola$^1$}

\address{$^1$ Department of Physics and Astronomy, Dartmouth 
College, \\ 6127 Wilder Laboratory, Hanover, NH 03755, USA}

\address{$^2$ Department of Physics, Indiana University,
Bloomington, \\ Indiana 47405, USA}

\address{$^3$ Department of Physics, University of Illinois, 1110 W Green Street, \\
Urbana,  Illinois 61801, USA}

\ead{{emilio.cobanera@dartmouth.edu; abhijeet.l.alase.GR@dartmouth.edu;} \\ ortizg@indiana.edu; \\
lorenza.viola@dartmouth.edu}

\begin{abstract}
Motivated by the challenge of seeking a rigorous foundation for the 
bulk-boundary correspondence for free fermions, 
we introduce an algorithm for determining exactly {\em the spectrum 
and a generalized-eigenvector basis} of a class of banded block 
quasi-Toeplitz matrices that we call {\em corner-modified}. Corner 
modifications of otherwise arbitrary banded block-Toeplitz matrices capture 
the effect of boundary conditions and the associated 
breakdown of translational invariance. Our algorithm leverages the interplay 
between a non-standard, projector-based method of kernel determination 
(physically, a bulk-boundary separation) and families of linear 
representations of the algebra of matrix Laurent polynomials.
Thanks to the fact that these representations act on infinite-dimensional 
carrier spaces in which translation symmetry is restored, it becomes possible 
to determine the eigensystem of an auxiliary projected block-Laurent matrix.  This 
results in an {\em analytic eigenvector Ansatz}, independent of the system size,  
which we prove is guaranteed to contain the full 
solution of the original finite-dimensional problem.   
The actual solution is then obtained by imposing compatibility with a 
{\it boundary matrix}, also independent of system size. 
As an application, we show analytically that eigenvectors of 
short-ranged fermionic tight-binding models may 
display {\em power-law corrections} to exponential decay, and 
demonstrate the phenomenon for the paradigmatic Majorana chain of Kitaev. 
\end{abstract}

\vspace{2pc}
\noindent{\it Keywords}:
Non-Hermitian banded block-Toeplitz and quasi-Toeplitz matrices, 
boundary conditions, 
Smith normal form, 
lattice models,  
bulk-boundary correspondence. 

\maketitle


\section{Introduction}
Toeplitz matrices, namely, matrices whose entries are constant 
along their diagonals, find widespread applications 
across mathematics, physics and the engineering sciences, largely 
reflecting the fact that they incorporate three key ingredients of model building: 
discrete approximation of continuous parameters, translation symmetry, and locality 
(typically, in space and/or time). 
As a result, spectral problems associated to Toeplitz 
matrices and related modifications -- in particular, {\em banded} and 
{\em block} Toeplitz matrices where, loosely speaking, the Toeplitz property is 
retained within a finite stripe of non-vanishing diagonals, and the repeated elements 
are themselves matrices -- are of interest from both an analytic and a numerical standpoint 
\cite{toeplitzbook,bottcher05,gray}.  
This paper provides an algorithm, along with its proof, for computing in exact 
form the eigensystem (eigenvalues and generalized eigenvectors) of a class of 
{\em banded block quasi-Toeplitz} matrices, where the repeated blocks are 
square and the quasi- (or ``nearly'') Toeplitz nature arises from constraining allowed 
changes to a small number of rows --  resulting in what we call a {\em corner modification}, 
in a sense to be made mathematically precise later.  

Our interest in the eigensystem problem for banded block-Toeplitz (BBT)
matrices is motivated by contemporary problems in the statistical mechanics 
of independent fermions and bosons, which can be solved by mapping their 
second-quantized Hamiltonians to single-particle Hamiltonians \cite{blaizot,abc}. If
the relevant (fermionic or bosonic) degrees of freedom are labelled by 
lattice sites and internal quantum numbers, then the single-particle Hamiltonian is a 
matrix of size determined by the number of lattice sites times the number of internal 
degrees of freedom, squared. Moreover, if the model of interest involves finite-range 
couplings and is translation-invariant {\em up to boundary conditions}, then the 
single-particle Hamiltonian naturally acquires a corner-modified BBT 
structure \cite{abc}.  
While such Hamiltonians are indeed associated to Hermitian matrices for fermions, 
bosons are described by matrices that are self-conjugate with respect 
to a symplectic inner product \cite{blaizot}.  Both for this reason, and with an eye 
toward extensions to non-Hamiltonian dissipative quantum dynamics 
\cite{prosen,kosov}, our interest here lies in general, {\em not} necessarily Hermitian, 
BBT matrices. 

Translation-invariant systems of independent 
fermions can be classified by topological methods \cite{bernevigbook}, 
leading to a main distinction between topologically trivial and non-trivial systems, and 
finer distinctions within the topologically non-trivial class. A peculiar and conceptually 
unsettling feature of the topological classification of free fermions stems from 
acknowledging that, on the one hand, it heavily leverages translation symmetry.
Yet, its cornerstone prediction, the celebrated ``bulk-boundary
correspondence'', mandates the structure of energy eigenstates (and the existence of 
symmetry protected ``boundary modes'') in systems where translation symmetry has 
been broken -- by boundary conditions, in the simplest setting where disorder plays no 
role. But how, exactly, does the structure of a translation-invariant system 
``descend'' onto a broken-symmetry realization of the system?  Searching for a 
quantitative answer to this question, we introduced in Ref.\,\cite{abc}
the essential core of our present diagonalization algorithm. 
As it will become clear, our analysis pinpoints very explicitly the interplay between 
translation invariance and boundary conditions, and the factors that determine the 
precise structure of the eigenstates of clean single-particle Hamiltonians. 

Beyond condensed matter physics, 
BBT matrices are routinely encountered in a variety of contexts within 
dynamical systems and control theory. 
For example, the method of finite differences maps translation-invariant linear differential equations to 
BBT matrices \cite{beam93}, and, conversely, any BBT matrix may be (non-uniquely) associated 
to some such differential equation. A large class of boundary conditions
can then be mapped to corner-modifications 
of the BBT matrix.
Likewise, time-invariant descriptor systems \cite{luenberger78} 
are naturally associated to BBT matrices by recognizing sequences 
as vectors in an infinite-dimensional space. While sometimes such 
models arise from the discretization of a continuous-time 
dynamical system, very often they do not, as 
no underlying differential equation is known or even expected to exist. 
The simplest model of such a dynamical system is a BBT matrix of bandwidth 
two, whose spectral theory has been well investigated by methods akin
to the transfer matrix method of statistical mechanics 
\cite{nikoukhat87,luenberger89,Chua}. 
Recently, special exact results for tridiagonal BBT matrices 
have been derived by carrying out unconventional mappings
to BBT matrices of bandwidth two \cite{fagotti16}. While time-invariant 
descriptor systems are also a natural source of BBT matrices
with non-square blocks, we will not consider this case in the present work.

Our diagonalization algorithm, advanced without proof in Ref.\,\cite{abc}, 
is very closely adapted to the structure of 
corner-modified BBT matrices and has a crucial main strength: 
It computes the basis of generalized eigenvectors analytically. Both in 
order to clarify this point, and to better place it in context, it is useful 
to outline the strategy of our algorithm here. The main task, namely, 
{\em computing the spectrum and basis of generalized
eigenvectors of an arbitrary corner-modified BBT matrix}, 
is accomplished by breaking it down into three auxiliary tasks.  Specifically:

$\bullet$ {\bf Task I:} 
{\it Compute a basis of the kernel of an arbitrary corner-modified BBT 
matrix.}
The solution of this problem is one of our main contributions. 
We provide an efficient algorithm, the {\it kernel algorithm}, 
that implements an unconventional method of kernel determination 
by projectors, in a manner tailored to corner-modified 
BBT matrices.   

$\bullet$  {\bf Task II:} 
{\it Solve the eigenproblem 
associated to an arbitrary corner-modified BBT matrix.}
We do not develop a new algorithm for completing this task
in itself. Rather, we exploit some of the finer properties of our kernel 
algorithm, in particular, the {\em boundary matrix} $B$ \cite{abc}, in order 
to search for the eigenvalues efficiently. The basis of the corresponding 
eigenspaces is obtained as a by-product. 

$\bullet$ {\bf Task III:} 
{\it Find a basis of generalized eigenvectors.}
If \(C\) denotes the corner-modified BBT matrix of interest, 
we take as input for this problem the output of task II. For each eigenvalue 
\(\epsilon\) of \(C\), 
we call on our kernel algorithm and compute the dimension of the kernel of $(C-\epsilon)^2$. 
If the latter is larger than the dimension of the eigenspace
associated to $\epsilon$, then we compute the dimension of the kernel of
$(C -\epsilon)^3)$, and continue till the dimension is stabilized, for some
$\kappa >1$. In order for this to work,
\((C-\epsilon)^\kappa\) must be shown to be a corner-modified 
BBT matrix, and it must be possible to compute it efficiently. 
This problem is solved by our {\em multiplication algorithm}, which 
is another main contribution of this paper.

Most of the theoretical and computational advantages of our 
method of diagonalization can be traced back to the fact that it
identifies an explicit, and small, search space spanned by
an exact basis: This basis is described in terms of elementary 
functions, and the roots of an associated polynomial whose degree 
is independent of the size of $C$. 
For theoretical work, having
access to the exact structure of eigenvectors, and the factors
that determine this structure, is an invaluable and rarely 
available resource. This happens for instance in the well-known 
algebraic Bethe-Ansatz approach \cite{sutherland}. 
From a computational point of view, it means
that extremely large matrices can be cast in Jordan normal form
without ever storing the matrix or eigenvectors. The actual numerical
task consists of computing the roots of a polynomial of relatively
low degree, parametrically in one complex variable \(\epsilon\), 
and determining the kernel of the relatively small boundary matrix $B$. 
This small kernel determines the linear combinations of potentially
extremely large (but describable in terms of elementary functions) 
vectors which form a generalized-eigenvector basis of the BBT matrix under 
investigation.

Our algorithm includes as a special but conspicuous instance
a seminal algorithm by Trench for computing the eigensystem of 
{\em banded but non-block} Toeplitz matrices \cite{Trench}, and
an important generalization of Trench's work by Beam and Warming 
\cite{beam93}. 
A closed-form solution of the eigensystem problem for tridiagonal 
Toeplitz matrices, illustrating these authors' ideas, can be found in 
Ref.\,\cite{yueh05}. Trench's investigations triggered focused 
interest in the eigensystem problem for banded Toeplitz matrices, 
to the point that a book devoted to the subject has appeared some 
ten years ago \cite{bottcher05}, exclusively about the non-block case. 
Somewhat complementary, an excellent summary of known results on 
{\em block} Toeplitz matrices can be found in Ref.\,\cite{toeplitzbook}, 
which however does not address the spectral properties of BBT matrices --
except implicitly, as a special case of block-Toeplitz matrices with 
continuous symbols. This divide is illustrative of a general pattern. 
Reference \,\cite{beam93}, for example, recognizes the value of Trench's 
algorithm for finite-difference methods and generalizes it in order 
to handle non-Dirichlet boundary conditions on the lattice. The 
matrices they considered are indeed, in the language of this paper, 
corner-modified Toeplitz matrices, but
they are non-block.
This is also an extremely important special case of our work,
and the first one to explore the delicate numerical nature of the
spectral problem for non-Hermitian banded Toeplitz matrices. For
a recent discussion of this point, see also Ref.\,\cite{bertaccini07}. 

To the best of our knowledge, Trench's original 
algorithm was never developed to cover the block case before this 
paper. The early historical evolution of the subject is illuminated 
by Ref.\,\cite{Bini}, where the eigensystem problem for BBT matrices 
was taken up for the first time. While directly motivated by 
Trench's success in the non-block case, these authors did not 
attempt to follow his line of attack.
Interestingly, they favored instead an 
approach that physicists would recognize as the above-mentioned transfer 
matrix method, which indeed has been often and successfully implemented \cite{lee}. 
However, with contemporary computational resources, 
our algorithm (and Trench's as a special instance) becomes extremely 
effective: it combines, in a way that is optimal for the task, 
physical, mathematical, and procedural insight. 
Let us also mention in passing that a problem that has been much investigated 
in the non-block case is associated to Toeplitz matrices perturbed by impurities 
\cite{bottcher02}. Our work can easily extend this investigations to the block case,
where there is considerable room for surprises from a physical 
perspective \cite{prb1}. 

Following a review of the required linear-algebraic concepts and tools in 
Sec. \ref{sec:back}, our core results are presented in Sec. \ref{secexact}. 
In particular, we show how our main task is equivalent to 
solving an appropriate linear system, consisting of a pair of {\em bulk and boundary 
equations}, and prove how the above-mentioned auxiliary tasks are achieved.           
In Sec. \ref{sec:algorithms}, the approach of kernel determination by projectors 
is translated in an explicit algorithmic procedure, and the corresponding time- 
and space-complexity assessed. Sec. \ref{sec:applications} focuses on physical 
implications. After spelling out the general eigenvalue-dependent Ansatz that our 
approach provides for the eigenvectors of a corner-modified BBT matrix, we 
revisit the paradigmatic Kitaev chain for open boundary conditions \cite{kitaev01}, 
and diagonalize 
it in closed form in a parameter regime not fully characterized to date.  
Remarkably, our work uncovers what seems to be a new result in the 
many-body literature, namely, the possibility of (and reason for) having
{\em power-law zero-energy modes} in short range tight-binding models, 
as we indeed explicitly demonstrate in the Kitaev chain.  We include in the 
Appendix a discussion of the diagonalization problem for {\em infinite} 
corner-modified BBT transformations, as relevant to semi-infinite systems.

\section{Preliminaries}
\label{sec:back}

Throughout the paper, \(\mathds{N}=\{1,2,\dots\}\) shall denote the 
counting integers, \(\Z\) the integers, and \(\C\) the field of 
complex numbers, respectively. The algebra of \(d\times d\) complex 
matrices is denoted by \(\M_d\). We will label vectors in the complex 
Hilbert space under consideration, say, \(\V\), with a greek letter, encased in 
Dirac's ket notation (e.g.,  \(|\psi\rangle\in\V\)). 
Then, the Riesz representation lemma guarantees that the space of 
bounded linear functionals of \(\V\) is in antilinear, bijective, 
canonical correspondence with \(\V\).  The unique linear functional 
associated to \(|\psi\rangle\) is denoted by \(\langle \psi |\), and 
the evaluation of this functional on \(|\phi\rangle\) is denoted by 
\(\langle \psi|\phi\rangle\), which is also the Hermitian inner product 
between the vectors \(|\phi\rangle,|\psi\rangle\). We will avoid Dirac's 
notation if the vector space under consideration is not a Hilbert space. 
If $M$ is a linear transformation of $\mathcal{V}$, and \(\mathcal{W}\)  
a subspace of \(\mathcal{V}\), then we write 
\(M|_{\mathcal{W}}:\mathcal{W}\rightarrow \mathcal{V}\) 
for the map obtained by restricting the action of \(M\) to \(\mathcal{W}\). 

Let \(\epsilon\in \C\) denote an eigenvalue of \(M\). Recall that 
$|\psi\rangle \in \mathcal{V}$ is a {\em generalized eigenvector} of $M$, 
of rank \(\kappa \in \mathds{N}\), if 
\(
|\psi\rangle\in {\rm Ker}\,(M-\epsilon)^\kappa\), with 
\( |\psi\rangle\notin {\rm Ker}\,(M-\epsilon)^{\kappa-1}. \)
A generalized eigenvector of rank \(\kappa=1\) is an eigenvector in 
the usual sense. Since ${\rm Ker}\,(M-\epsilon)^\kappa \subseteq 
{\rm Ker}\,(M-\epsilon)^{\kappa'},$ $\forall \kappa' \ge \kappa$, any 
generalized eigenvector $|\psi\rangle$ of rank $\kappa$ also belongs 
to ${\rm Ker}\,(M-\epsilon)^{\kappa'}$. Hence, one may define 
the {\em generalized eigenspace} of $M$ corresponding to its eigenvalue 
$\epsilon$ as the collection of all {\em generalized kernels},
\[ \mathcal{N}_{M,\epsilon} 
= \bigcup_{\kappa\in\mathds{N}}{\rm Ker}\,(M-\epsilon)^\kappa. \]
If $\V$ is a finite-dimensional complex vector space, then for every 
eigenvalue $\epsilon$ of $M$, there exists a 
$\kappa_{\rm max}\in \mathds{N}$, such that 
\[
{\rm Ker}\,(M-\epsilon)^{\kappa_{\rm max}} 
= {\rm Ker}\,(M-\epsilon)^{\kappa_{\rm max}+1}
 = \mathcal{N}_{M,\epsilon}.
\]
Since \(\V=\bigoplus_{\epsilon}\mathcal{N}_{M,\epsilon}\) \cite{halmos}, 
where the direct sum runs over all the eigenvalues of \(M\), 
a maximal linearly independent subset of the set of all generalized 
eigenvectors of \(M\) yields a basis of generalized eigenvectors
of \(\V\).

\subsection{Corner-modified banded block-Toeplitz matrices}
\label{seccmBBT}

A matrix $A_N\in \M_{dN}$ of size \(dN\times dN\) 
is a {\em block-Toeplitz matrix} if there exists a sequence 
\(\{a_j\in \M_d\}_{j=-N+1}^{N-1}\) of \(d\times d\) matrices
such that 
$[A_N]_{ij}\equiv a_{j-i}$, $1\le i,j \le N$,
as an array of \(d\times d\) blocks. Graphically,  $A_N$ has the structure
\begin{eqnarray}
\label{alltoeplitz}
 A_N=
\begin{bmatrix}
a_0       & a_1         &\cdots   &  a_{N-2}    & a_{N-1}   \\
a_{-1}    & a_0         &\ddots   & \           & a_{N-2}   \\
\vdots    & \ddots      &\ddots   & \ddots      & \vdots    \\
a_{-N+2}  & \           &\ddots   & a_0         & a_{1}     \\
a_{-N+1}    &  a_{-N+2}   &\cdots   & a_{-1}      & a_0
\end{bmatrix}
\ , \quad a_j \in \M_d.  
\end{eqnarray}
A block-Toeplitz matrix is {\em banded} if there exist 
``bandwidth parameters'' $p, q \in \Z$, with $-N+1<p\leq q<N-1$, 
such that \(a_{p},a_q \ne 0\) and $a_r=0$ if $r<p$ or $r>q$. 
Accordingly, the graphical representation of a {\em banded block-Toeplitz} 
(BBT) matrix is
\[
A_N=
\begin{bmatrix}
a_0    & \dots  & a_q    &0       &\cdots  &0      \\
\vdots &\ddots  & \      &\ddots  &\ddots  &\vdots \\ 
a_p    &\       &\ddots  &        &\ddots  & 0     \\
0      &\ddots  &\       &\ddots  &        & a_q    \\
\vdots &\ddots  &\ddots  &\       &\ddots  &\vdots  \\    
0      &\cdots  &0       & a_p    &\cdots  & a_0
\end{bmatrix}\ . 
\]
The entries \(a_p,a_q\) are the {\em leading coefficients} of 
\(A_N\), and the pair of integers \((p,q)\) defines the bandwidth \(q-p+1\). 
The transpose and the adjoint of a BBT matrix of bandwidth \((p,q)\) 
are both BBT matrices of bandwidth \((-q,-p)\).  
Let \(p' \equiv {\rm min}(p,0)\), \(q' \equiv {\rm max}(0,q)\). The 
{\em principal coefficients} \(a_{p'}, a_{q'}\) of \(A_N\) are defined 
by
\begin{eqnarray*}
a_{p'} \equiv \left\{
\begin{array}{lrl}
a_p& \mbox{if} & p\leq 0\\
0  & \mbox{if} & p>0
\end{array}\right. ,\quad
a_{q'}\equiv \left\{
\begin{array}{lrl}
a_q& \mbox{if} & q\geq 0\\
0  & \mbox{if} & q<0
\end{array}\right. .
\end{eqnarray*}
Leading and principal coefficients differ only if \(A_N\) is 
strictly upper or lower triangular. 

Block matrices of size $dN \times dN$ induce linear transformations of 
\({\cal V}=\vone\).
Let $\{|j\rangle, \ j=1,2,\dots,N\}$ and $\{|m\rangle, m=1,2,\dots,d\}$ 
denote the canonical bases of \(\C^N\) and \(\C^d\), respectively. Then, 
\[ 
A_N|\psi\rangle=\sum_{i,j=1}^{N}|i\rangle [A_N]_{ij}|\psi_j\rangle,
\quad \mbox{with} \quad 
|\psi_j\rangle=\sum_{m=1}^d\psi_{jm}|m\rangle\quad\mbox{and}\quad \psi_{jm}\in\C. 
\]
For fixed $N,d$ and bandwidth $(p,q)$, the following projectors 
will play a key role in our discussion:
\begin{defn}
\label{bbprojectors}
The {\em right bulk projector} and {\em right boundary projector} 
are given by 
\begin{eqnarray}
\hspace*{-2.2cm}
P_B^{(p,q)} |j\rangle|m\rangle \equiv 
\left\{ \begin{array}{lc} |j\rangle|m\rangle & 1-p' \leq j \leq N-q',\ 1\le m \le d  \\ 
0 & \text{otherwise} \end{array}\right., 
\quad P_\partial^{(p,q)} \equiv  \mathds{1}-P_B^{(p,q)}. 
\end{eqnarray}
The {\em left bulk projector} and {\em  left boundary projector} 
are similarly defined by \(Q_B \equiv P_B^{(-q,-p)}\), 
\(Q_\partial \equiv \mathds{1}-Q_B\), respectively.
\end{defn}

\noindent 
Usually we will write simply \(P_B,P_\partial\). 
Since \(j=1,\dots,N\) and \(p'={\rm min}(p,0)\), the condition 
\(1-p'\leq j\) is trivially satisfied if \(p\geq 0\). This 
situation corresponds to having an upper-triangular BBT matrix. 
Similarly, 
\(j\le N-q'\) is trivially satisfied if \(q\leq 0\), which 
corresponds to a lower-triangular BBT matrix. It is immediate 
to check that \(\dim\Range P_\partial =d(q'-p')\), and so the 
``bulk'' of a BBT 
matrix is non-empty, that is, \(P_B\neq 0\) only if \(N>q'-p'\). 
This makes \(q'-p'\) one of the key ``length scales'' of the
problem, and so we will introduce a special symbol for it,
\begin{equation}
\tau \equiv q'-p'\geq q-p.
\label{eq:tau}
\end{equation}
The condition for a non-empty bulk, \(N>\tau\), is similar-looking 
to the relationship  \(  2(N-1)>  q-p \) that is part of the 
definition of BBT matrix. It is quite possible for a BBT matrix to have
an empty bulk, especially for small \(N\). Such matrices are outside 
our interest and the scope of our methods. 

\begin{defn}
A {\it corner modification} for bandwidth $(p,q)$
is any block matrix \(W\) such that \(P_BW=0\). A corner modification 
is {\it symmetrical} if, in addition, \(WQ_B=0\). A {\em corner-modified BBT matrix} 
$C$ is any block matrix of the form 
\(C=A_N+W\), with \(A_N\) a BBT matrix of size $dN \times dN$ and bandwidth \((p,q)\), 
and $W$ a corner modification.
\end{defn}

\noindent 
If \(W\) is a symmetrical corner modification for bandwidth \((p,q)\),
then its transpose and hermitian conjugate, $W^T$ and $W^\dagger$ respectively, are both
symmetrical corner modifications
for bandwidth \((-q,-p)\), and the other way around. Symmetrical corner 
modifications do indeed look like ``corner modifications''
in array form, whereas this is not necessarily the case for non-symmetrical 
ones.

\subsection{Banded block-Laurent matrices}
\label{secbbl}

Let \(\{a_j\in \M_d\}_{j\in\Z}\) denote a 
doubly-infinite sequence of \(d\times d\) square matrices. 
A {\em block-Laurent matrix} \(\A\) is a doubly-infinite 
matrix with entries
\(
[\A]_{ij} = a_{j-i}\in \M_d,\ i,j\in\mathds{Z}.
\)
A block-Laurent matrix \(\bm{A} \) is {\em banded} if there 
exist  integers \(p,q\), with \(p\le q\), such that 
\(a_j=0\) if
\(j<p\) or \(j>q\), and \(a_p,a_q\neq 0\). For \(p\leq0\leq q\), 
the array
representation of a {\em banded block Laurent} (BBL) matrix is  
\begin{eqnarray}
\label{blocklaurentmat}
\A=
\begin{bmatrix}
\ddots &        &\ddots  &\ddots  &        &        &        &         \\
       & a_0    & \dots  & a_q    &0       &        &        &         \\
\ddots & \vdots &\ddots  & \      &\ddots  &\ddots  &        &         \\ 
\ddots & a_p    &\       &\ddots  &        &\ddots  & 0      &         \\
       & 0      &\ddots  &\       &\ddots  &        & a_q    &\ddots    \\
       &        &\ddots  &\ddots  &\       &\ddots  &\vdots  &\ddots    \\    
       &        &        &0       & a_p    &\cdots  & a_0    &          \\ 
       &        &        &        & \ddots &\ddots  &        & \ddots 
\end{bmatrix}.
\end{eqnarray} 
The bandwidth $(p,q)$, as well as the leading and principal coefficients, of a
BBL matrix are defined just as for BBT matrices. 

A BBL matrix induces a linear transformation of the space 
\begin{equation}
\V^S_d \equiv \Big\{\{|\psi_j\rangle\}_{j\in\Z}\,\big|\, |\psi_j\rangle\in \C^d,\ \forall j\Big\}
\label{Vsd}
\end{equation}
of vector-valued, doubly-infinite sequences. 
Let us write \(\Psi \equiv \{|\psi_j\rangle\}_{j\in\Z}\) 
(since \(\V^S_d\) is {\em not} a Hilbert space, we do not case \(\Psi\) in a 
ket). Then, for a BBL matrix of bandwidth \((p,q)\),
\begin{eqnarray}
\label{action}
\A\Psi=\{a_p|\psi_{j+p}\rangle+\dots+a_q|\psi_{j+q}\rangle\}_{j\in\Z}=
\Big \{\sum_{r=p}^qa_r|\psi_{j+r}\rangle \Big\}_{j\in\Z}\in \V^S_d
\end{eqnarray}
is the associated BBL transformation. If one pictures \(\Psi\) as a 
doubly-infinite block-column vector, one can think of this equation 
as matrix-vector multiplication. The {\it support} 
of a sequence \(\{|\psi_j\rangle\}_{j\in\Z}\) is finite if the 
sequence vanishes but for finitely many values of \(j\). 
Otherwise, it is infinite. 

In the non-block case where \(d=1\), \(\V_{d=1}^S\) becomes 
the space of scalar sequences. 
There is a natural identification \(\V_d^S \simeq \bigoplus_{m=1}^d\V^S_1\)
that we will use often, namely: 
\begin{equation}
\label{dirsumeasy}
\{|\psi_j\rangle\}_{j\in\Z}=
\sum_{m=1}^d\{\psi_{jm}|m\rangle\}_{j\in \Z}\cong
\bigoplus_{m=1}^d\{\psi_{jm}\}_{j\in \Z},
\end{equation}
with respect to the canonical basis \(\{|m\rangle\}_{m=1}^d\) 
of \(\C^d\). Let us define a multiplication of scalar sequences 
by vectors as 
\[  \V_1^S\times \C^d\ni (\Phi,|\psi\rangle)\mapsto 
\Phi|\psi\rangle=|\psi\rangle\Phi=\{\phi_j|\psi\rangle\}_{j\in \Z} \in \V^S_d. 
\] 
Combining this definition with Eq. (\ref{dirsumeasy}) above, 
we finally obtain the most convenient representation of vector 
sequences and BBL matrices:
\[ \hspace*{-0mm}
\{|\psi_j\rangle\}_{j\in\Z}=\sum_{m=1}^d\{\psi_{jm}\}_{j\in \Z}|m\rangle, \quad 
\A\Psi=\sum_{m',m=1}^d|m'\rangle 
\Big\{\sum_{r=p}^q\psi_{j+r,m}\langle m'|a_r|m\rangle\Big\}_{j\in \Z}.
\]

Unlike BBT transformations, BBL transformations close an associative 
algebra with identity \({\bf 1}\), 
isomorphic to the algebra of {\em matrix Laurent polynomials} \cite{mourrain00}. 
The algebra of complex Laurent polynomials 
\(\C[w,w^{-1}]\) consists of complex polynomials in two variables, an indeterminate 
\(w\) and its inverse \(w^{-1}\), with coefficients in \(\C\). Matrix 
Laurent
polynomials are described similarly, but with coefficients in \(\M_d\).
We will denote a concrete but arbitrary matrix Laurent polynomial as 
\begin{eqnarray}
\label{lmp}
A(w,w^{-1})=\sum_{r=p}^qw^r a_r, \quad a_r\in\M_d,\quad a_p,a_q\neq 0.
\end{eqnarray}
In order to map a matrix Laurent polynomial to a BBL matrix $\A$, it is
convenient to write sequences as formal power series, 
\( \Psi=\{|\psi_j\rangle\}_{j\in\Z}=\sum_{j\in \Z}w^{-j}|\psi_j\rangle, \)
which, together with Eq. (\ref{lmp}), yields 
\[ A(w,w^{-1})\sum_{j\in \Z}w^{-j}|\psi_j\rangle=
\sum_{j\in\Z}w^{-j} \big(a_p|\psi_{j+p}\rangle+\dots+a_{q}|\phi_{j+q}\rangle\big).
\]
Comparing this equation with Eq.\,\eqref{action}, one may regard the 
matrix Laurent polynomial \(A(w,w^{-1})\) as inducing 
a BBL transformation via the following algebra isomorphism:
\begin{equation}
A(w,w^{-1})\mapsto \A \equiv \rho_d(A(w,w^{-1})). 
\label{rhomap}
\end{equation}

Since the algebra of matrix Laurent polynomials is generated by 
\(w\mathds{1}\), \(w^{-1}\mathds{1}\) and \(w^0 a=a\in \M_d\) 
($\mathds{1}\in\M_d$ denotes the identity matrix), the 
algebra of BBL transformations is generated by the corresponding linear
transformations of \(\V^S_d\),  which we denote by \(a\equiv \rho_d(w^0 a)\), 
the left shift \(\T\equiv \rho_d(w\mathds{1})\), and the right shift  
\(\Tminus=\rho_d(w^{-1}\mathds{1})\). 
Explicitly, the effect of these BBL matrices on sequences is~\footnote{
By our conventions, we have implicitly agreed 
to use the same symbols \(\T,\Tminus\) to denote the left and right 
shifts of scalar (\(d=1\)) and vector \((d>1)\) sequences. 
As a consequence, e.g., \(\T(\Phi|\psi\rangle)=(\T\Phi)|\psi\rangle\),
illustrating how \(\T\) may appear in multiples places of an equation with 
meanings determined by its use. 
}
\begin{eqnarray*}
\T\Psi=\rho_d(w\mathds{1})\Psi&=&w\mathds{1}\sum_{j\in \Z}w^{-j}|\psi_j\rangle=
\sum_{j\in\Z}w^{-j}|\psi_{j+1}\rangle,\\
a\Psi=\rho_d(w^0 a)\Psi&=&w^{0} a\sum_{j\in \Z}w^{-j}|\psi_j\rangle=
\sum_{j\in \Z}w^{-j} a|\psi_j\rangle,\\
\Tminus\Psi=\rho_d(w^{-1}\mathds{1})\Psi&=&w^{-1}\mathds{1}\sum_{j\in \Z}w^{-j}|\psi_j\rangle=
\sum_{j\in\Z}w^{-j}|\psi_{j-1}\rangle.
\end{eqnarray*} 
In general, 
\( \rho_d\big(\sum_{r=p}^qa_rw^r\big)=\sum_{r=p}^q a_r \T^r \), where we have 
taken advantage of the properties 
\([a,\T]=0=[a,\Tminus]\) and \(\rho_d(\mathds{1})={\bf 1}\).

The relationship between BBT and BBL transformations 
can be formalized in terms of projectors. For integers 
\(-\infty\leq L\leq R\leq\infty\), let
\begin{equation}
\V_{L,R} \equiv \Big\{\{|\psi_j\rangle\}_{j\in\Z}\in\V^S_d\,\Big|\,
\mbox{\(|\psi_j\rangle=0\) if \(j<L\) or \(j>R\)}\Big\}.
\label{Vlr}
\end{equation}
In particular, \(\V_{-\infty,\infty} =\V^S_d\), as defined in Eq. (\ref{Vsd}). 
One may think of \(\V_{L,R}\) as the range of the projector 
\[
\P_{L,R}\{|\psi_j\rangle\}_{j\in\Z} \equiv \{|\chi_j\rangle\}_{j\in\Z},\quad
|\chi_j\rangle=\left\{
\begin{array}{lcl}
0& \mbox{if}& j<L\\
|\psi_j\rangle & \mbox{if} & j=L,\dots,R\\
0& \mbox{if}& j>R
\end{array}\right. .
\]
If \(2(R-L)>q-p\), the linear transformation 
\( \P_{L,R}\A|_{\V_{L,R}}\) of \(\V_{L,R} \) 
is induced by a BBT matrix \(A_N\), with 
\( N=R-L+1. \)
To see this, write 
\( \A\Psi=\sum_{i,j\in\Z}w^{-i}[\A]_{ij}|\psi_j\rangle
=\sum_{i,j\in\Z}w^{-i}a_{j-i}|\psi_j\rangle. \)
The sum over \(i\) is formal, and the sum over \(j\) exists because
\(a_{j-i}=0\) if \(j-i<p\) or \(j-i>q\). If \(\Psi\in \V_{L,R}\), then 
\[  \P_{L,R}\A\Psi=
\sum_{i,j=L}^Rw^{-i}a_{j-i}|\psi_j\rangle=
\sum_{i,j=1}^Nw^{-(i+L-1)}[A_N]_{ij}|\psi_{j+L-1}\rangle,\]
with \(A_N\) a BBT matrix.  Similarly, \(\P_{L-p',R-q'}\A|_{\V_{L,R}}\) 
is induced by the 
corner-modified BBT matrix \(P_BA_N\). In this sense, we will write
\[ A_N=\P_{L,R}\A|_{\V_{L,R}},\quad\mbox{and}\quad P_BA_N=\P_{L-p',R-q'}\A|_{\V_{L,R}}.
\]
The bulk of the matrix \(A_N\) is non-empty if \(R-L\geq \tau\), as in 
Eq.\,(\ref{eq:tau}).

\subsection{Regularity and the Smith normal form}

A matrix Laurent polynomial may be associated to a standard matrix 
polynomial, involving only non-negative powers. In particular, the 
matrix polynomial (in the variable \(w\)) of the matrix Laurent 
polynomial \(A(w,w^{-1})\) in Eq. (\ref{lmp}) is given by 
\begin{equation}
G(w) \equiv w^{-p}A(w,w^{-1})=\sum_{s=0}^{q-p}w^{s}a_{s+p}.
\label{mp}
\end{equation}
A matrix polynomial is called {\em regular} if its determinant 
is not the zero polynomial. Otherwise, it is {\em singular}.
For example, direct calculation shows that \(G(w)\) in
\[  A(w,w^{-1})=w^{-1}G(w) \equiv 
\begin{bmatrix}
w+w^{-1}-\epsilon& w-w^{-1}\\
-w+w^{-1}& -w-w^{-1}-\epsilon
\end{bmatrix}  \]
is regular unless the parameter \(\epsilon=\pm 2\).
By extension, a matrix Laurent polynomial \(A(w,w^{-1})\) and 
associated BBL matrix \(\rho_d(A(w,w^{-1}))\) are  regular or singular 
according to whether the polynomial factor of \(A(w,w^{-1})\) is regular 
or singular. Finally, a BBT matrix \(A_N=  \P_{L,R}\A|_{\V_{L,R}}\) is 
regular (singular) if \(\A\) is. 

A useful fact from the theory of matrix polynomials is that they can be put 
in {\em Smith normal form} by Gaussian elimination \cite{matrix_polynomials}. 
That is, there exist \(d\times d\) matrix polynomials \(E(w),D(w),F(w)\) such 
that the {\em Smith factorization} holds:
\begin{eqnarray}
G(w)=E(w)D(w)F(w).
\label{smith0}
\end{eqnarray}
Here, \(E(w)\), \(F(w)\) are non-unique matrix polynomials with matrix polynomial inverse, and  
\begin{eqnarray}
\label{smith}
D(w) \equiv 
\begin{bmatrix}
g_{1}(w)& \         &\                 &\ &\      &\ \\ 
\                & \ddots    &\                 &\ &\      &\ \\
\                & \         &g_{d_0}(w) &\ &\      &\ \\ 
\                & \         & \                &0 &\      &\ \\  
\                &\          &\                 &\ &\ddots &\ \\
\                &\          &\                 &\ &\      &0
\end{bmatrix},
\end{eqnarray}
is the unique diagonal matrix polynomial with the property that 
each \(g_{m}(w)\) is monic (i.e., has unit leading coefficient) and 
$g_m(w)$ divides \(g_{m'}(w)\), that is, \(g_m(w)|g_{m'}(w)\), if 
\(1\leq m < m'\leq d_0\leq d\). \(D(w)\) is the {\it Smith normal form} 
of \(G(w)\). The matrix polynomial $G(w)$ is singular if and only if $d_0<d$, 
so that its  Smith normal form $D(w)$ has zeroes on the main diagonal. 
Since, from Eq. (\ref{mp}), $G(0)=a_p$ and \([w^{q-p} G(w^{-1})]_{w=0}=a_q\),
it is easy to check that {\em a BBL (or BBT) matrix with at least one 
invertible leading coefficient is regular}.

The Smith factorization of \(G(w)\) immediately implies the 
factorization \(A(w,w^{-1})=w^{p}E(w)D(w)F(w)\). By combining this 
result with the representation defined in Eq. (\ref{rhomap}), one 
obtains what one might reasonably call the {Smith factorization of a BBL matrix}:
\begin{eqnarray}
\label{snfbbl}
\hspace{-2cm}
\A=\T^{p}\E\D\F,\quad  \mbox{with}\quad  \E=\rho_d(E(w)),\quad 
\D=\rho_d(D(w)), \quad \F=\rho_d(F(w)).
\end{eqnarray} 
By construction, the linear transformations \(\E,\ \F\) of \(\V_d^S\) 
are invertible BBL matrices. The BBL matrix \(\D\) is the Smith normal 
form of \(\A\). In array form,  
\begin{eqnarray}\label{sosimple}
\hspace{-2cm}
\D=\rho_d(D(w))=
\begin{bmatrix}
\g_{1}           &  \        &\                 &\ &\      &\ \\ 
\                & \ddots    &\                 &\ &\      &\ \\
\                & \         &\g_{d_0} &\ &\      &\ \\ 
\                & \         & \                &0 &\      &\ \\  
\                &\          &\                 &\ &\ddots &\ \\
\                &\          &\                 &\ &\      &0
\end{bmatrix},\quad \quad \g_m=\rho_{d=1}(g_m(w)).
\end{eqnarray}
The \(\g_m\) are banded, non-block Laurent matrices: they are linear 
transformations of the space \(\V^S_1\) of scalar sequences.

\section{Structural characterization of kernel properties} 
\label{secexact}

\subsection{The bulk/boundary system of equations}
\label{seckerinclusion}

Our algorithm for computing the kernel of a corner-modified BBT matrix builds on an indirect method for 
determining the kernel of a linear transformation, that we term {\em kernel 
determination by projectors}.  The starting point is the following:

\begin{defn}
Let \(M\) be a linear transformation of \(\V\), and 
\(P_1, P_2 
\equiv \mathds{1}-P_1\) non-trivial projectors, that is, 
neither the zero nor the identity map. The {\em compatibility map} 
of the pair \( (M, P_1) \) is the linear transformation
\(B  \equiv P_1 M|_{\text{Ker}\,P_2M} \).
\end{defn}
The kernel condition \(M|\psi\rangle=0\) is equivalent to the 
system of equations
\( P_1M|\psi\rangle=0$, $P_2M|\psi\rangle=0. \)
In view of the above definition, this means that   
\( \text{Ker}\,M= \text{Ker}\,P_2M\cap \text{Ker}\,P_1M=\text{Ker}\,B. \)
Roughly speaking, since the subspaces \(\text{Ker}\,P_2M\) and 
\(\text{Range}\,P_1\) may be ``much smaller'' than \(\V\), 
it may  be advantageous to determine \( \text{Ker} M\) indirectly, 
by way of its compatibility map. One can make these ideas more precise if \(\dim \V<\infty\).

\begin{lem}
\label{sizebmatrix}
Let $M$ be a linear transformation acting on a finite-dimensional vector space $\cal V$.  Then 
{\rm dim Range} $P_1 \leq$
{\rm dim Ker }$P_2 M \leq$ {\rm dim Ker} $M$ {\rm +} {\rm dim Range} $P_1$.
\end{lem}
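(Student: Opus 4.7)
The plan is to prove both inequalities via two applications of the rank-nullity theorem, taking advantage of the key identity $\text{Ker}\, B = \text{Ker}\, M$ that was already established in the preceding paragraph (from $\text{Ker}\, M = \text{Ker}\, P_2 M \cap \text{Ker}\, P_1 M = \text{Ker}\, B$).

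For the upper bound, I would view the compatibility map $B = P_1 M|_{\text{Ker}\, P_2 M}$ as a linear transformation from the finite-dimensional space $\text{Ker}\, P_2 M$ into $\text{Range}\, P_1$. Rank-nullity then yields
\[
\dim \text{Ker}\, P_2 M = \dim \text{Ker}\, B + \dim \text{Range}\, B = \dim \text{Ker}\, M + \dim \text{Range}\, B.
\]
Since $\text{Range}\, B \subseteq \text{Range}\, P_1$ by construction, we get $\dim \text{Range}\, B \leq \dim \text{Range}\, P_1$, which delivers the claimed upper bound. The only preliminary observation needed is that $\text{Ker}\, M \subseteq \text{Ker}\, P_2 M$, so that the identification of $\text{Ker}\, B$ with $\text{Ker}\, M$ is consistent (and $B$ is well-defined on the whole of $\text{Ker}\, P_2 M$).

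For the lower bound, I would instead apply rank-nullity directly to the full map $P_2 M : \mathcal{V} \to \mathcal{V}$, obtaining
\[
\dim \text{Ker}\, P_2 M = \dim \mathcal{V} - \dim \text{Range}\, P_2 M.
\]
Because $\text{Range}\, P_2 M \subseteq \text{Range}\, P_2$, and $\dim \text{Range}\, P_2 = \dim \mathcal{V} - \dim \text{Range}\, P_1$ (using $P_1 + P_2 = \mathds{1}$ and the complementary-projector identity $\dim \text{Range}\, P_1 + \dim \text{Range}\, P_2 = \dim \mathcal{V}$), we obtain
\[
\dim \text{Ker}\, P_2 M \geq \dim \mathcal{V} - \dim \text{Range}\, P_2 = \dim \text{Range}\, P_1,
\]
as required.

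There is no real obstacle here: the lemma is essentially a bookkeeping consequence of rank-nullity together with the inclusion $\text{Range}\, B \subseteq \text{Range}\, P_1$ and the complementarity of $P_1, P_2$. The only minor subtlety is to make sure the domain/codomain of $B$ is stated carefully so that rank-nullity can be invoked cleanly, and to recall the already-proven identity $\text{Ker}\, B = \text{Ker}\, M$ rather than re-deriving it.
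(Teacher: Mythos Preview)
Your proof is correct and follows essentially the same dimension-counting strategy as the paper. The paper phrases the lower bound via the transpose (observing that $\Ker P_2 \subseteq \Ker M^T P_2 = \Ker (P_2 M)^T$ and then invoking $\dim \Ker A = \dim \Ker A^T$), whereas you argue directly via $\Range P_2 M \subseteq \Range P_2$; for the upper bound the paper describes informally the ``two types'' of solutions of $P_2 M|\psi\rangle = 0$, which amounts to the same rank--nullity computation you carry out explicitly using the compatibility map $B$ and the already-established identity $\Ker B = \Ker M$.
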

\begin{proof}
The dimension of \( \text{Ker} \,M^TP_2= \Ker (P_2M)^T \) is bounded below by 
the dimension of \( \Ker P_2\), which is precisely 
\(\dim\,\text{Range}\,P_1\).
This establishes the first inequality, because, in finite dimension,
the dimension of the kernel of a matrix coincides with that of its 
transpose. For the second inequality, notice that the solutions of 
\(P_2M|\psi\rangle=0\) are of two types: the kernel vectors 
of \(M\), plus the vectors that are mapped by \(M\) into 
\( \text{Ker} P_2\). Hence, the number of linearly independent solutions 
of \(P_2M|\psi\rangle=0\) is upper-bounded by 
\(\dim\,\text{Ker}\, M+ \dim\,\text{Range}\, P_1\), as claimed. 
\end{proof}

It is instructive to note that, 
if dim Ker $P_2 M >$ dim Range $P_1$, then Ker $M$ is necessarily non-trivial.
Since \(\dim\,\text{Range}\, P_1 \equiv n_{P_1}\) determines the number of rows of 
the matrix of the compatibility map (the {\em compatibility matrix} from now on), 
and \(\dim\,\text{Ker}\, P_2M \equiv n_{P_2}\) determines its number of columns, 
this condition implies that the compatibility matrix is 
an \(n_{P_1}\times n_{P_2}\) {\em rectangular} matrix with more columns than rows. 
As a consequence, its kernel is necessarily non-trivial. 
The following general property is also worth noting, for later use.  
Suppose that \(M' \equiv M+W\) and \(P_2W=0\). 
Then, \(W=P_1W\), and \(\text{Ker}\,P_2M'=\text{Ker}\,P_2M.\)
Moreover, 
\[ B' \equiv P_1M'|_{\text{Ker}\,P_2M'}=
P_1M|_{\text{Ker}\,P_2M}+W|_{\text{Ker}\,P_2M}=B+W|_{\text{Ker}\,P_2M}. 
\]

Let now \(C \equiv A_N+W\) denote a \(dN\times dN\) corner-modified 
BBT matrix, with associated boundary projector 
\(P_1 \equiv P_\partial \), and bulk projector 
$P_2 \equiv P_B=\mathds{1}-P_\partial$. 
The task is to compute \(\Ker C\), which coincides with 
the kernel of the compatibility map 
\begin{eqnarray}
\label{boundarymap}
B=P_\partial C|_{\Ker P_B C}=(P_\partial A_N+W)|_{\Ker P_B A_N}.
\end{eqnarray}
As anticipated in Ref.\,\cite{abc}, this approach translates into 
solving a bulk/boundary system of equations. To make this connection 
explicit, let us introduce an index \(b\) such that 
\begin{eqnarray}
\label{rangeb}
\hspace*{-20mm}
{\rm Range}\,P_\partial \equiv {\rm Span}\,\{ |b\rangle |m\rangle\, |\, 
b=1,\dots,-p', N-q'+1,\dots,N;\, m=1,\dots, d\} , 
\end{eqnarray} 
where if \(p'=0\) or \(q'=0\), the corresponding subset of vectors is empty.
Accordingly, recalling Eq. (\ref{eq:tau}),  
\( n_\partial\equiv \dim{\rm Range}\,P_\partial= d\tau. \)
In addition, let 
\begin{eqnarray}
\label{basispban}
\mathcal{B} \equiv 
\{|\psi_s\rangle\,|\, 
s=1,\dots,n_B\equiv\dim\Ker P_BA_N  \}
\end{eqnarray} 
denote a fixed but arbitrary basis of \(\Ker P_BA_N\).
\begin{defn}
\label{boundmat}
The {\it bulk equation} is the kernel equation \(P_B A_N|\psi\rangle=0\).
The {\it boundary matrix} is the \(n_\partial\times n_B\) block-matrix 
\([B]_{bs}\equiv \langle b|B|\psi_s\rangle\), of block-size $d\times 1$. The 
{\it boundary equation} is the (right) kernel equation for \(B\). 
\end{defn}

Thus, one can set up the boundary equation only 
{\em after} solving the bulk equation. Let us show explicitly how
a solution of the boundary equation determines a basis of \(\Ker C = \Ker (A_N+W)\).
With respect to the bases described in Eqs.\,\eqref{rangeb}-\eqref{basispban}, 
the boundary matrix is related to the compatibility
map of Eq.\,\eqref{boundarymap} as  
\( B|\psi_s\rangle=\sum_b|b\rangle [B]_{bs}. \)
Let $|\epsilon_k\rangle= \sum_{s=1}^{n_B}\alpha_{ks} |\psi_s\rangle. $ 
It then follows that 
\[ B|\epsilon_k\rangle=\sum_{b}|b\rangle\sum_{s=1}^{n_B}[B]_{bs}\alpha_{ks}=0 
\quad \Leftrightarrow 
\quad \sum_{s=1}^{n_B}[B]_{bs}\alpha_{ks}=0.
\]
We conclude that \( \{|\epsilon_{k}\rangle\}_{k=1}^{n_C}\) is a 
basis of \(\Ker C\), for some $n_C\leq n_B$, 
if and only if the column vectors of complex coefficients
\( {\bm \alpha}_k  = \big[\alpha_{k 1} \dots \alpha_{k n_B}\big]^{\rm T} \)
constitute a basis of the (right) kernel of $B$. 

In summary, \(\Ker C\) is fully encoded in two pieces of information: 
a basis for the solution space of the bulk equation and 
the boundary matrix $B$. 
Is this encoding advantageous from a computational perspective? There are 
three factors to consider: 

\begin{itemize}
\item[(i)] How difficult is it to solve the bulk equation and store the solution as an explicit basis.

\item[(ii)] How difficult is it to multiply the vectors in this basis by the matrix
\(P_\partial C\).

\item[(iii)] How hard is it to solve the boundary equation.
\end{itemize}

The remarkable answer to (i), which prompted our work in Ref.\,\cite{abc}, 
is that it is easy to compute and store a basis of \(\Ker P_BA_N\). More 
precisely, the complexity of this task is  {\it independent of N}. The 
reason is that most (or even all) of the solutions of the bulk equation 
\(P_BA_N|\psi\rangle=0\) are obtained by determining the kernel for the 
associated BBL transformation \(\A\), such that \(A_N=\P_{L,R}\A|_{\V_{L,R}}\). 
The latter kernel may be described exactly, in terms of elementary
functions and the roots of a polynomial of degree {\em at most} 
\(d(q-p)\), as we prove formally in Theorem \ref{softy}. 
The answer to (ii) depends on \(W=C-A_N\). In order to compute 
the boundary matrix, it is necessary to multiply a basis of the 
solution space to the bulk equation by the matrix 
\(P_\partial C=P_\partial A_N+W\). The cost of this task is 
independent of \(N\) if $W$ is a symmetrical 
corner modification -- which, fortunately, is most often the 
case in applications. Otherwise, the cost of computing the 
boundary matrix is roughly \({\cal O}(N)\). 
Lastly, $B$ is not a structured matrix, thus the answer 
to (iii) boils down to whether it is small enough to handle efficiently 
with standard routines of kernel determination. Generically, if the target 
BBT matrix is regular, we shall prove later [see Theorem \ref{corboundmat}] 
that $B$ is necessarily {\em square}, of size 
\(n_\partial\times n_\partial\). In particular, its size does not grow 
with \(N\). 

Next we will state and prove the main technical results of this 
section, both concerning the bulk equation. Let us begin with an 
important definition:
\begin{defn}
\label{mlr}
Let \(A_N=\P_{L,R}\A|_{\V_{L,R}}\) denote a BBT transformation 
of bandwidth \((p,q)\). Its {\it bulk solution space} is 
\begin{eqnarray}
\label{defmlr}
\mathcal{M}_{L,R}(\A)\equiv\Ker \P_{L-p',R-q'}\A|_{\V_{L,R}}=\Ker P_BA_N.
\end{eqnarray}
\end{defn}
\smallskip

Notice that \( \T^{-n}\mathcal{M}_{L,R}(\A)=\mathcal{M}_{L+n,R+n}(\A),\) 
and that the spaces 
\( \mathcal{M}_{L,\infty}(\A)=\Ker\P_{L-p',\infty}\A\) and 
\( \mathcal{M}_{-\infty,R}(\A)=\Ker\P_{-\infty,R-q'}\A \)
should not be regarded as ``limits'' of \(\mathcal{M}_{L,R}(\A)\).  
We will usually write simply \(\mathcal{M}_{L,R}\) if \(\A\) is 
fixed.  We then have: 
  
\begin{thm}
\label{softy}
Let \(A_N=\P_{L,R}\A|_{\V_{L,R}}\) denote a BBT transformation 
of bandwidth \((p,q)\) and non-empty bulk. Then, 
\begin{enumerate}
\item 
\(  \P_{L,R}\,\Ker \A  \subseteq \mathcal{M}_{L,R}   \)
\item
If the principal coefficients of \(A_N\) are invertible, then 
\( \P_{L,R}\,\Ker\A = \mathcal{M}_{L,R}.  \) 
\end{enumerate}
\end{thm}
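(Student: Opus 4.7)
Both parts are support-tracking arguments, based on the elementary fact that a BBL matrix of bandwidth $(p,q)$ shifts the support of a sequence by at most $[p,q]$.

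For part (i), my plan is to decompose the identity on $\V_d^S$ as $\P_{L,R}+\P_{-\infty,L-1}+\P_{R+1,\infty}$. For any $\Psi\in\Ker\A$, applying $\A$ and rearranging gives
\[
\A\P_{L,R}\Psi=-\A\P_{-\infty,L-1}\Psi-\A\P_{R+1,\infty}\Psi.
\]
A direct support computation shows that $\A\P_{-\infty,L-1}\Psi$ is supported in $(-\infty,L-1-p]$ and $\A\P_{R+1,\infty}\Psi$ in $[R+1-q,\infty)$. Since by definition $p'=\min(p,0)\le p$ and $q'=\max(q,0)\ge q$, both intervals are disjoint from the ``bulk window'' $[L-p',R-q']$. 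Consequently $\P_{L-p',R-q'}\A\P_{L,R}\Psi=0$, so $\P_{L,R}\Psi\in\mathcal{M}_{L,R}$.

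For part (ii), I would first observe that invertibility of the principal coefficients forces $p=p'\le 0$ and $q=q'\ge 0$, since otherwise the principal coefficient equals the zero matrix by definition. Given $\Phi\in\mathcal{M}_{L,R}$, the strategy is to extend $\Phi$ to a sequence $\Psi\in\V_d^S$ with $\A\Psi=0$ and $\P_{L,R}\Psi=\Phi$, which places $\Phi\in\P_{L,R}\Ker\A$. To define $|\psi_j\rangle$ for $j<L$, I would recurse downward from $j=L-1$: the equation $(\A\Psi)_{j-p}=0$ reads
\[
a_p|\psi_j\rangle+\sum_{r=p+1}^q a_r|\psi_{j-p+r}\rangle=0,
\]
and every index $j-p+r$ with $r>p$ is strictly greater than $j$, so it either comes from $\Phi$ or has already been determined at an earlier step. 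Invertibility of $a_p$ uniquely fixes $|\psi_j\rangle$. A symmetric recursion driven by $a_q^{-1}$ extends $\Psi$ to $j>R$. By construction $(\A\Psi)_i=0$ for every $i\notin[L-p,R-q]$, while for $i\in[L-p,R-q]$ all indices $j=i+r$ lie inside $[L,R]$, so the equation reduces to the hypothesis $\P_{L-p,R-q}\A\Phi=0$. Hence $\A\Psi=0$ on all of $\Z$, as required.

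\textbf{Where the effort lies.} The only real delicacy is bookkeeping: in (i), checking that the displaced supports really do miss $[L-p',R-q']$ in each sign regime of $(p,q)$ (upper-triangular, lower-triangular, two-sided); and in (ii), verifying that the downward and upward recursions exhaust all indices outside the bulk without collision, which in turn relies on the non-empty-bulk assumption $N>\tau$. No machinery beyond the Laurent identification and the definitions of $p',q'$ is needed.
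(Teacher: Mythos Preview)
Your proposal is correct and, at the level of mechanism, matches the paper's proof: both parts are support-tracking arguments, and your extension-by-recursion for (ii) is exactly what the paper does. The one difference worth noting is that for (i) the paper takes a small detour to prove the more general \emph{nesting property}
\[
\P_{L,R}\mathcal{M}_{\widetilde L,\widetilde R}\subseteq \mathcal{M}_{L,R}\qquad(\widetilde L\le L,\ R\le\widetilde R),
\]
via the identity $\P_{L-p',R-q'}\A\P_{L,R}=\P_{L-p',R-q'}\A$, and then specializes to $(\widetilde L,\widetilde R)=(-\infty,\infty)$; this nesting statement is reused several times later (e.g.\ in the structural characterization of $\mathcal{M}_{L,R}$), so if you intend to follow the paper further you may want to upgrade your argument to that form. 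Your direct decomposition $\mathds{1}=\P_{L,R}+\P_{-\infty,L-1}+\P_{R+1,\infty}$ proves the theorem as stated with slightly less overhead, and the support estimates you give are correct in all three sign regimes of $(p,q)$.
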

\begin{proof}
{\it (i)} We will prove a stronger result that highlights the 
usefulness of the notion of bulk solution space. For any 
\(\widetilde{L},\widetilde{R}\) such that 
\(-\infty\leq \widetilde{L}\leq L\) and \(R\leq \widetilde{R}\leq \infty\), 
the following {\em nesting property} holds:
\begin{eqnarray}
\label{nesting}
\P_{L,R}\mathcal{M}_{\widetilde{L},\widetilde{R}}\subseteq \mathcal{M}_{L,R}
\end{eqnarray} 
In particular,  \(
\P_{L,R}\mathcal{M}_{-\infty,\infty}=\P_{L,R}\Ker \A\subset \mathcal{M}_{L,R}=\Ker P_BA_N
\) establishes our claim.  
By definition, the sequences in \(\mathcal{M}_{L,R}\) are sequences 
in \(\V_{L,R}\) annihilated by \(\P_{L-p',R-q'}\A\). Hence,
we can prove Eq.\,\eqref{nesting} by showing that
\(\P_{L-p',R-q'}\A\P_{L,R}\) annihilates \(\mathcal{M}_{\widetilde{L},\widetilde{R}}\).
To begin with, note that 
\[ \P_{L-p',R-q'}\A\P_{L,R}=\P_{L-p',R-q'}\sum_{r=p}^qa_r\T^r\P_{L,R}
=\sum_{r=p}^q\P_{L-p',R-q'}\P_{L-r,R-r}a_r\T^r. \]
Since \(p'={\rm min}(p,0)\), \(q'={\rm max}(0,q)\), 
\(L-r\leq L-p\leq L-p'\), and $R-r\geq R-q\geq R-q',$ $\forall r=p,\dots,q$,
thus \( \P_{L-p',R-q'}\P_{L-r,R-r}=\P_{L-p',R-q'}\). It follows 
that  $\P_{L-p',R-q'}\A\P_{L,R}=\P_{L-p',R-q'}\A$.
In particular, 
\(\P_{L-p',R-q'}\A\P_{L,R}=\P_{L-p',R-q'}(\P_{\tilde{L}-p',\tilde{R}-q'}\A)\),
which makes it explicit that \(\P_{L-p',R-q'}\A\P_{L,R}\) 
annihilates \(\mathcal{M}_{\widetilde{L},\widetilde{R}}\).  

{\it (ii)}  It suffices to show that \(\Ker P_BA_N\subseteq \P_{L,R} \Ker \A|_{\V_{L,R}}\) 
if  the principal coefficients are invertible. The principal 
coefficients are invertible if and only if \(p\leq0\leq q\) and the 
leading coefficients \(a_p,a_q\) are invertible. Then, since in 
this case  
\[
P_B A_N=
\begin{bmatrix}
0          & \cdots   &\         &\           &\cdots    & 0       & \cdots     & 0      \\ 
\vdots     &          &\         &\           &\         & \vdots  & \          & \vdots \\
0          & \cdots   &\         &\           &\cdots    & 0       & \cdots     & 0      \\
a_p        & \cdots   &a_0       & \cdots     &a_q       & 0        & \cdots    & 0       \\
0          & \ddots   & \        & \ddots     & \        & \ddots   & \ddots    & \vdots  \\
\vdots     & \ddots   & \ddots   & \          & \ddots   & \        & \ddots    & 0      \\
0          &\cdots    & 0        & a_p        &\cdots    &a_0       & \cdots    &a_q    \\
0          & \cdots   & 0        & \cdots     &\         & \        &\cdots     & 0      \\
\vdots     & \        & \vdots   & \          &\         & \        &\          & \vdots  \\
0          & \cdots   & 0        & \cdots     &\         & \        & \cdots    & 0      \\
\end{bmatrix}\ ,\]
it follows that a state \(\Psi \in \Ker P_BA_N\) satisfies  
\[  P_BA_N
\begin{bmatrix}
|\psi_L\rangle\\
\vdots\\
|\psi_R\rangle
\end{bmatrix}
=
\begin{bmatrix}
0\\
\vdots\\
0\\
a_p|\psi_L\rangle+\dots+a_q|\psi_{L+q-p}\rangle\\
\vdots\\
a_p|\psi_{R-q+p}\rangle+\dots+a_q|\psi_{R}\rangle\\
0\\
\vdots\\
0
\end{bmatrix}=0\ ,
\]
using the notation 
\( \Psi=\P_{L,R}\Psi=\{|\psi_j\rangle\}_{j=L}^R= [ |\psi_L\rangle\: \dots \: |\psi_R\rangle ]^T \). 
Thus, \(\Psi\) can be uniquely extended to yield 
a sequence \(\Psi'\in \Ker \A\).  
Compute \(|\psi_{L-1}\rangle\) (the \(L-1\) entry of \(\Psi'\)) as 
\( |\psi_{L-1}\rangle=-a_p^{-1}(a_{p+1}|\psi_L\rangle+\dots+a_q|\psi_{L+q-p-1}\rangle),
\)
and repeat the process to obtain \(|\psi_{L-j}\rangle\) 
for all \(j\geq 1\). Similarly, compute \(|\psi_{R+1}\rangle\) as
\( |\psi_{R+1}\rangle=
-a_q^{-1}(a_{p}|\psi_{R-(q-p)+1}\rangle+\dots+a_{q-1}|\psi_{R}\rangle), \)
and repeat in order to compute \(|\psi_{R+j}\rangle\) for all \(j\geq 1\).
\end{proof}

\begin{thm}
\label{lemker}
If \(A_N=\P_{L,R}\A|_{\V_{L,R}}\) 
is regular with non-empty bulk,  \(\dim \mathcal{M}_{L,R}= d\tau\). 
\end{thm}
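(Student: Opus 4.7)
The plan is to bracket $\dim \mathcal{M}_{L,R}$ above and below by $d\tau$. The lower bound follows immediately from rank-nullity; for the upper bound I would show that the transpose of the bulk restriction map is injective, reducing this via the Smith factorization of $\A^T$ to a compact-support statement about scalar polynomial recurrences. First, view $M \equiv \P_{L-p',R-q'}\A|_{\V_{L,R}}$ as a linear map from $\V_{L,R}$ (of dimension $dN$) into $\V_{L-p',R-q'}$ (of dimension $d(N-\tau)$, using the non-empty-bulk hypothesis $N>\tau$). Since $\mathcal{M}_{L,R} = \Ker M$, rank-nullity yields
\[
\dim\mathcal{M}_{L,R} \;=\; dN - \operatorname{rank} M \;\geq\; dN - d(N-\tau) \;=\; d\tau,
\]
with equality iff $M$ is surjective, equivalently iff $M^T\colon\V_{L-p',R-q'}\to\V_{L,R}$ is injective.

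Next I would reformulate injectivity of $M^T$ as a statement inside $\V^S_d$. Under the natural inner product, identifying $C\in\V_{L-p',R-q'}$ with its image in $\V_{L,R}$ via the support inclusion, $M^T C$ coincides with $A_N^T C$. Assuming $M^T C=0$, extend $C$ trivially to a compactly supported $\widetilde C \in \V^S_d$ with the same support $[L-p',R-q']$. A direct bandwidth-and-support calculation (using that $\A^T$ has bandwidth $(-q,-p)$) shows that $\A^T \widetilde C$ is supported within $[L,R]$ and that its restriction to $[L,R]$ equals $A_N^T C = 0$. Consequently, $\widetilde C \in \Ker \A^T$ is a compactly supported element in the kernel of the BBL operator $\A^T$, which is regular since $\A$ is.

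The final step shows that the only compactly supported element of $\Ker \A^T$ is the zero sequence. I would invoke the Smith factorization $\A^T = \T^{-q}\E'\D'\F'$ with $\E',\F'$ unimodular BBL matrices and $\D' = \operatorname{diag}(\g'_1,\dots,\g'_d)$, where every scalar polynomial $g'_m(w)$ is nonzero by regularity. Since $\E',\F'$ and their inverses have finite bandwidth, they preserve compactness of support, so $\F'\widetilde C$ is compactly supported and lies in $\Ker \D' = \bigoplus_m \Ker \g'_m$. Factoring $g'_m(w) = w^{s_m}\tilde g'_m(w)$ with $\tilde g'_m(0)\neq 0$ makes $\Ker \g'_m$ coincide with the kernel of a scalar recurrence whose leading and trailing coefficients are both nonzero; forward and backward propagation from the vanishing data outside a bounded support window then forces such a sequence to vanish identically. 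Hence every component of $\F'\widetilde C$ is zero, $\widetilde C = 0$, and therefore $C = 0$.

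The main obstacle I anticipate is the support-and-bandwidth bookkeeping in the middle paragraph: one must verify uniformly that $\A^T \widetilde C$ is supported within $[L,R]$ in each of the three bandwidth regimes $p\leq 0\leq q$, $p>0$, and $q<0$, and that the unimodular factors of the Smith normal form preserve compactness of support. Both verifications are mechanical in spirit but must be handled systematically to cover the triangular cases, where the asymmetry between bulk and boundary is more pronounced.
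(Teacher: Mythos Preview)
Your proposal is correct and follows essentially the same approach as the paper. Both arguments obtain the lower bound by a dimension count on the transpose (the paper counts the $d\tau$ boundary row-vectors in $\Ker(P_B A_N)^\dagger$, you use rank--nullity on $M\colon\V_{L,R}\to\V_{L-p',R-q'}$), and both obtain the upper bound by showing that any excess left kernel vector, being supported in the bulk window $[L-p',R-q']$, yields a compactly supported element of $\Ker\A^T$ (equivalently, a nonzero polynomial row vector $\langle\phi(w)|$ with $\langle\phi(w)|A(w,w^{-1})=0$), which the Smith factorization rules out by regularity. The only cosmetic difference is that the paper carries out the contradiction in the matrix-Laurent-polynomial picture while you stay in the sequence picture; your ``support-and-bandwidth bookkeeping'' is precisely the content of the paper's expanded relation, and indeed holds uniformly across the three bandwidth regimes.
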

\begin{proof}  

Since \(\dim\mathcal{M}_{L,R}=\dim\Ker P_BA_N=\dim \Ker A_N^\dagger P_B\), 
we can focus on keeping track of the linearly independent 
solutions of \(\langle \phi|P_BA_N=0\). First, there are 
the boundary vectors \(\langle\phi|P_\partial=\langle\phi|\).
From the definition of $P_\partial$, there are
precisely \(d\tau\) such solutions, showing that
\(\dim \mathcal{M}_{L,R}\geq d\tau\). Suppose  that \(\dim \mathcal{M}_{L,R}>d\tau\). 
We will show that \(\A\) must then be singular. Let \(L=1\) and \(R=N\) 
for simplicity. By assumption, there exists a nonzero 
\(  \langle\phi|=\langle \phi|P_B=\sum_{j=-p'+1}^{N-q'}\langle j|\langle\phi_j|\neq 0, \)
such that \(\langle \phi|P_BA_N=\langle\phi|A_N=0\).  
Let \(T^r=\P_{1,N} \T^r|_{\V_{1,N}}\). Then, 
\[
0=\langle\phi|A_N=
\sum_{j=-p'+1}^{N-q'}\langle j|\langle\phi_j|
\sum_{r=p}^qa_rT^r
=\sum_{j=-p'+1}^{N-q'}\sum_{r=p}^q\langle j+r|\langle\phi_j|a_r.
\]
It is useful to rearrange the above equation as 
\begin{align}
&0=\langle N-q'+q|\langle\phi_{N-q'}|a_q
+\langle N-q'+q-1|\big(\langle\phi_{N-q'-1}|a_q+\langle\phi_{N-q'}|a_{q-1}\big)+
\nonumber \\
&+\dots
+\langle 2-p'+p|\big(\langle \phi_{1-p'}|a_{p+1}+\langle \phi_{2-p'}|a_{p}\big)
+\langle 1-p'+p|\langle\phi_{1-p'}|a_{p},
\label{precisely}
\end{align}
where all the labels are consistent because \( -N+1<p'\leq p\leq q\leq q'<N+1.\)
Since, by assumption, \(\{\langle \phi_j|\}_{j=1-p'}^{N-q'}\) is not
the zero sequence, the vector polynomial 
\( \langle \phi(w)|\equiv\sum_{j=-p'+1}^{N-q'}w^j\langle\phi_j|\)
is not the zero vector polynomial. It is immediate to check 
that \(\langle \phi(w)|A(w,w^{-1})=0\), as this equation
induces precisely the same relations among the 
\(\{\langle \phi_j\}_{j=1-p'}^{N-q'}\) as Eq.\,\eqref{precisely} does. 
The claim follows if we can show that this 
implies \(\det A(w,w^{-1})=0\). Consider the Smith decomposition
\(A(w,w^{-1})=w^pE(w)D(w)F(w)\). A nonzero vector polynomial cannot
be annihilated by an invertible matrix polynomial. Hence, 
\(\langle \psi(w)|\equiv\langle \phi(w)|E(w)^{-1}\) is a nonzero 
vector polynomial annihilated by \(D(w)\), \(\langle \psi(w)|D(w)=0\). 
Since \(D(w)\) is diagonal, this is only possible if at least one 
of the entries on its main diagonal vanish, contradicting 
the assumption that \(\A\) is regular.
\end{proof}

\subsection{Exact solution of the bulk equation}

In this section we focus on solving the bulk equation, \(P_BA_N|\psi\rangle=0\). 
There are two types of solutions. Solutions with {\em extended support} 
are associated with kernel vectors of the BBL matrix \(\A\) such that 
\(A_N=\P_{L,R}\A|_{\V_{L,R}}\), as implied by Theorem \ref{softy}.  
From a physical standpoint, it is interesting to observe that these solutions 
can be constructed to be translation-invariant, since \([\A,\T]=0\). Any other 
solution, necessarily of {\em finite support} as we will show, may be thought 
of as emergent: these solutions exist only because of the translation-symmetry-breaking 
projection that leads from the infinite system \(\A\) to the finite system
\(A_N\), and {\em only if} the principal coefficients are {\em not} invertible.

\subsubsection{Extended-support solutions.}
\label{infiniteproblem}

In order to determine the kernel of an arbitrary BBL 
transformation, we first establish 
a few results concerning the kernel of a special 
subclass of BBL transformations. Given a non-negative 
integer \(v\) and \(j\in \Z\), let  
\[  j^{(v)} \equiv \left\{
\begin{array}{lcl}
1                     & \mbox{if}  & v=0 ,\\
(j-v+1)(j-v+2)\dots j & \mbox{if}  & v=1,2,\dots
\end{array}\right. \]

\begin{lem}
\label{scalarker}
The family of scalar sequences defined by 
\begin{equation*}
\Phi_{z,0}\equiv 0,\quad \Phi_{z,1} \equiv \{z^{j}\}_{j\in\Z},\quad
\Phi_{z,v}\equiv \frac{d^{v-1}\Phi_{z,1}}{dz^{v-1}}=\{j^{(v-1)}z^{j-v+1}\}_{j\in\Z} ,
\quad v=2,3,\dots, 
\end{equation*}
satisfies the following properties:
\begin{enumerate}
\item 
\(\T\Phi_{z,v}=z \Phi_{z,v}+(v-1)\Phi_{z,v-1}\).
\item
\({\cal S}_{z,s}\equiv \Span\{\Phi_{z,v}\}_{v=1}^s=\Ker (\T-z)^s\), for all $s\in {\N}$.  
\item
For any \(s_1,s_2 \in \N\), if \(z_1\neq z_2\), then 
\({\cal S}_{z_1,s_1}\cap {\cal S}_{z_2,s_2}=\{0\}\). 
\end{enumerate}
\end{lem}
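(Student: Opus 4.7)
My plan is to prove the three parts in order, using part (i) as the algebraic backbone for (ii), and closing with a standard coprime-polynomial argument for (iii). Throughout, $z$ is implicitly nonzero, since otherwise $\Phi_{z,1}$ is ill-defined at negative indices.

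For part (i), I would verify the recursion entrywise: the shift sends $\{j^{(v-1)} z^{j-v+1}\}_{j\in\Z}$ to $\{(j+1)^{(v-1)} z^{j-v+2}\}_{j\in\Z}$. The only combinatorial input needed is the Pascal-like identity $(j+1)^{(v-1)} = j^{(v-1)} + (v-1)\,j^{(v-2)}$, valid for $v\geq 2$, which follows by factoring the common product $j(j-1)\cdots(j-v+3)$ out of both falling factorials and observing that the leftover linear factors differ by exactly $v-1$. Substituting and regrouping the factor of $z$ reproduces $z\,\Phi_{z,v}+(v-1)\,\Phi_{z,v-1}$. The edge case $v=1$ is direct and consistent with the convention $\Phi_{z,0}\equiv 0$.

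For part (ii), the inclusion $\mathcal{S}_{z,s}\subseteq \Ker(\T-z)^s$ follows by iterating the recursion from (i): $(\T-z)\Phi_{z,v}=(v-1)\Phi_{z,v-1}$ yields $(\T-z)^v\Phi_{z,v}=0$. Linear independence of $\{\Phi_{z,v}\}_{v=1}^{s}$ is immediate from the same chain structure, by stripping off top coefficients with successive powers of $(\T-z)$. For the reverse inclusion, I would show by induction on $s$ that $\dim\Ker(\T-z)^s\leq s$. The base case is the first-order recurrence $\psi_{j+1}=z\,\psi_j$, which, since $z\neq 0$, can be propagated in both directions from $\psi_0$ and so has one-dimensional solution space $\Span\{\Phi_{z,1}\}$. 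For the inductive step, $(\T-z)$ restricts to a linear map $\Ker(\T-z)^s\to\Ker(\T-z)^{s-1}$ whose kernel is the one-dimensional $\Ker(\T-z)$; since the codomain is finite-dimensional by hypothesis, so is the image, and rank-nullity (which applies whenever kernel and image are finite-dimensional, even though the ambient $\V_1^S$ is not) delivers $\dim\Ker(\T-z)^s\leq 1+(s-1)=s$. Combined with the matching lower bound from linear independence, this forces equality of $\mathcal{S}_{z,s}$ and $\Ker(\T-z)^s$.

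Part (iii) I would dispatch with Bezout: since $z_1\neq z_2$, the polynomials $(w-z_1)^{s_1}$ and $(w-z_2)^{s_2}$ are coprime in $\C[w]$, so there exist $p(w),q(w)\in\C[w]$ with $p(w)(w-z_1)^{s_1}+q(w)(w-z_2)^{s_2}=1$. Substituting $\T$ (legitimate because $\T$ commutes with every polynomial in itself) and applying both sides to any $\Psi\in\mathcal{S}_{z_1,s_1}\cap\mathcal{S}_{z_2,s_2}$, part (ii) forces each term on the left to vanish, hence $\Psi=0$. The main obstacle I anticipate is the dimension count in (ii): one has to carry out the rank-nullity argument cleanly despite $\V_1^S$ being infinite-dimensional, and invoke the hypothesis $z\neq 0$ precisely where it is needed — both to pin down $\Ker(\T-z)$ to one dimension and to justify two-sided propagation of the scalar recurrence.
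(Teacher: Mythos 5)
Your proposal is correct, and its core argument for part (ii) is genuinely different from the paper's. The paper first conjugates by the diagonal map $K_z\Phi=\{z^j\phi_j\}_{j\in\Z}$, which satisfies $zK_z(\T-{\bf 1})=(\T-z)K_z$, thereby reducing every $z\neq0$ to the single case $z=1$; it then proves $\Ker(\T-{\bf 1})^s=\mathcal{S}_{1,s}$ by an induction that \emph{explicitly constructs} the decomposition of any kernel element (given $(\T-{\bf 1})\Psi=\sum_v\alpha_v\Phi_{1,v}$, it writes $\Psi=\Psi'+\sum_v\tfrac{\alpha_v}{v}\Phi_{1,v+1}$ with $\Psi'\in\Ker(\T-{\bf 1})$). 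You instead keep $z$ general and argue by a dimension count: the inclusion $\mathcal{S}_{z,s}\subseteq\Ker(\T-z)^s$ plus linear independence gives the lower bound $s$, while the restricted map $(\T-z):\Ker(\T-z)^s\to\Ker(\T-z)^{s-1}$, with one-dimensional kernel and inductively finite-dimensional image, gives the matching upper bound via rank--nullity. Your version is shorter and avoids the conjugation trick, at the cost of having to justify rank--nullity inside the infinite-dimensional ambient space $\V^S_1$ and to invoke $z\neq0$ for the base case, both of which you do correctly; the paper's version is more constructive (it exhibits how a generic kernel vector is built from the $\Phi_{1,v}$), which is in the spirit of the explicit bases used later. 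For (i), your entrywise verification via the falling-factorial identity $(j+1)^{(v-1)}=j^{(v-1)}+(v-1)j^{(v-2)}$ replaces the paper's differentiation of $\T\Phi_{z,1}=z\Phi_{z,1}$ with respect to $z$; for (iii), your Bezout identity for $(w-z_1)^{s_1}$ and $(w-z_2)^{s_2}$ is essentially the same mechanism as the paper's Euclidean-algorithm/gcd argument, except the paper states it slightly more generally for any two coprime polynomials. All three variants are sound.
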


\begin{proof}
{\it (i)} It is immediate to verify that \(\T\Phi_{z,1}=z\Phi_{z,1}\).
Hence, it also follows that
\[\T\Phi_{z,v}
=\frac{d^{v-1}}{dz^{v-1}}\T\Phi_{z,1}
=\frac{d^{v-1}}{dz^{v-1}}(z\Phi_{z,1})
=z\Phi_{z,v}+(v-1)\Phi_{z,v-1}. \]

{\it (ii)} Let \(K_z\Phi \equiv \{z^j\phi_j\}_{j\in\Z}\) for \(z\neq 0\). 
Then, 
\begin{align*}
K_{z}(\T-{\bf 1})\Phi=\sum_{j\in\mathds{Z}}w^{-j}z^j(\phi_{j+1}-\phi_j)=
(z^{-1}\T-{\bf 1})K_z\Phi.
\end{align*}
In other words, \(zK_z(\T-{\bf 1})=(\T-z)K_z\), and so
\(
(\T-z)^{s} K_z=z^s K_{z}(\T-{\bf 1})^{s}.\) 
As a consequence, 
\( 
\Ker(\T-z)^s=K_z\,\Ker(\T-{\bf 1})^s. 
\) 
In particular, for 
\(z=1\), \( \Ker (\T-{\bf 1})\) is spanned by the constant 
sequence \(\Phi_{1,1}=\{\phi_j=1\}_{j\in\Z}\).
Suppose that 
\(
{\cal S}_{1,s}= \Ker(\T-{\bf 1})^s.  
\)
We will proceed by induction, and prove that, as a consequence, 
\( 
{\cal S}_{1,s+1} = {\rm Ker}\,(\T-{\bf 1})^{s+1}. 
\) 
Every $\Psi \in \Ker(\T-{\bf 1})^{s+1}$ satisfies 
\(
(\T-{\bf 1})^{s}(\T-{\bf 1})\Psi=0. 
\)
Then, by the induction hypothesis, $(\T-{\bf 1})\Psi \in {\cal S}_{1,s}$ 
and so there exists numbers \(\alpha_v\) such that
\(
(\T-{\bf 1})\Psi = \sum_{v=1}^{s}\alpha_v \Phi_{1,v}.
\)
Since $(\T-{\bf 1})\Phi_{1,v}=(v-1)\Phi_{1,v-1}$, it follows that
\(
\Psi =\Psi'+ \sum_{v=1}^{s}\frac{\alpha_v}{v} \Phi_{1,v+1},\) with
\( \Psi'\in \Ker(\T-{\bf 1}).  \) 
This shows that $\Psi \in {\cal S}_{1,s+1}$, 
and so  $\Ker(\T-{\bf 1})^{s+1} \subseteq {\cal S}_{1,s+1}.$ 
The opposite inclusion 
holds because \({\cal S}_{1,s}=\Ker(\T-{\bf 1})^s\subset \Ker(\T-{\bf 1})^{s+1}\)
by the induction hypothesis, and 
\(   (\T-{\bf 1})^{s+1}\Phi_{1,s+1} = s(\T-{\bf 1})^{s}\Phi_{1,s} = \dots =s!(\T-{\bf 1})\Phi_{1,1}= 0. \)

{\it (iii)} We will prove a slightly more general result:
Let \({\bm Y}_i\equiv \rho(y_i(w)),$ $i=1,2,\) denote two upper-triangular
banded Laurent transformations of \(\V_1^S\), where without loss of 
generality \(\deg y_2\geq \deg y_1\).  We claim that if \({\rm gcd}(y_1,y_2)=1\), 
then \(\Ker {\bm Y}_1\cap \Ker {\bm Y}_2=\{0\}\).
To prove this, suppose \(\Phi\in \Ker {\bm Y}_1\cap \Ker {\bm Y}_2\). 
The banded Laurent transformation \(\rho(r)\) induced by the remainder 
of dividing \(y_2\) by \(y_1\) also annihilates \(\Phi\), since
\(y_2= c y_1+r\). Continuing this process, we conclude that 
${\rm gcd}(y_1,y_2)$ annihilates $\Phi$. Hence, if 
\({\rm gcd}(y_1,y_2)=1\), then $\Phi=0$.
\end{proof}

\begin{lem}
\label{sdamaps}
The finite-dimensional space of vector sequences
\[ {\cal T}_{z,s}
=\Ker(\T-z)^s
\equiv \Span \{\Phi_{z,v}|m\rangle\, | \, v=1,\dots,s;\ m=1,\dots,d\}
\]
is an invariant subspace of the algebra of BBL matrices. The mapping 
\begin{equation}
A(w,w^{-1})\mapsto \A|_{{\cal T}_{z,s}} \equiv  A_s(z) 
\label{evalmap}
\end{equation}
defines a \(ds\)-dimensional representation of the algebra of matrix Laurent 
polynomials, where the block matrix 
\begin{equation}
[A_s(z)]_{xv} \equiv \left \{
\begin{array}{lcl}
\binom{v-1}{x-1}A^{(v-x)}(z,z^{-1})& \mbox{if} &1\leq x\leq v\leq s \\
0& \mbox{if} & 1\leq v<x\leq s 
\end{array}\right. ,
\quad  A^{(v-x)} \equiv \frac{d^{v-x}A}{dz^{v-x}},
\label{Azs}
\end{equation}
is the matrix of \(A_s(z) \) relative to the above defining basis
of \({\cal T}_{z,s}\), with \(A^{(0)} \equiv A = A_1 \), and $\binom{v}{x}$
denoting the binomial coefficient.
\end{lem}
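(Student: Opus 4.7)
The plan has three layers. First, I would establish invariance of $\mathcal{T}_{z,s}$ under the whole BBL algebra. The key observation is that the BBL algebra is commutative in its shift-generators: for every BBL matrix $\A = \sum_r a_r \T^r$ one has $[\A,\T]=0$, because constant block matrices commute with $\T,\Tminus$. Combined with the natural identification $\V_d^S \simeq \bigoplus_{m=1}^d \V_1^S$, which lifts Lemma \ref{scalarker}(ii) from scalar to vector sequences, this gives $\mathcal{T}_{z,s}=\Ker(\T-z)^s$ as a subspace of $\V_d^S$ of dimension $ds$. Hence if $(\T-z)^s\Psi=0$, then $(\T-z)^s \A\Psi=\A(\T-z)^s\Psi=0$, so $\A\Psi\in\mathcal{T}_{z,s}$.

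Second, the representation property of the evaluation map in Eq.\,\eqref{evalmap} is essentially automatic once invariance is secured: the map $\A\mapsto \A|_{\mathcal{T}_{z,s}}$ is the restriction of the algebra homomorphism $\rho_d$ to an invariant subspace, and restriction of an algebra homomorphism to an invariant subspace preserves sums, composition, and the identity. I would simply spell out this well-definedness to confirm that $A(w,w^{-1})\mapsto A_s(z)$ is indeed a $ds$-dimensional representation of the matrix-Laurent-polynomial algebra.

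The bulk of the work, and the main obstacle, is deriving the explicit matrix elements in Eq.\,\eqref{Azs}. Here I would start from the identity $\T^r\Phi_{z,1}=z^r\Phi_{z,1}$, already noted in the proof of Lemma \ref{scalarker}(i), and apply $\partial_z^{v-1}$ together with the Leibniz rule to obtain
\[
\T^r\Phi_{z,v}=\frac{d^{v-1}}{dz^{v-1}}\bigl(z^r\Phi_{z,1}\bigr)=\sum_{k=0}^{v-1}\binom{v-1}{k}\,r^{(k)}z^{r-k}\,\Phi_{z,v-k},
\]
since $\partial_z^{v-1-k}\Phi_{z,1}=\Phi_{z,v-k}$ and $\partial_z^k z^r = r^{(k)} z^{r-k}$. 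Summing against $a_r$ and using
\[
A^{(k)}(z,z^{-1})=\sum_r r^{(k)}z^{r-k}a_r,
\]
gives
\[
\A\,\Phi_{z,v}|m\rangle=\sum_{k=0}^{v-1}\binom{v-1}{k}\Phi_{z,v-k}\,A^{(k)}(z,z^{-1})|m\rangle.
\]
After the re-indexing $x=v-k$, using $\binom{v-1}{v-x}=\binom{v-1}{x-1}$, the coefficient of $\Phi_{z,x}|m\rangle$ matches $[A_s(z)]_{xv}$ for $1\leq x\leq v$, and vanishes for $x>v$, which is exactly the claimed upper-triangular block-Toeplitz-like structure. The only genuine difficulty is the combinatorial bookkeeping — keeping the direction of the index shift straight between the basis index $v$ and the matrix row index $x$, and verifying that the Leibniz combinatorial factors collapse to a binomial coefficient that is independent of $r$ (so the sum really does factor through $A^{(v-x)}(z,z^{-1})$). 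Once this is in place, the lemma follows.
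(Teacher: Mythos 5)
Your proposal is correct and follows essentially the same route as the paper: invariance via $(\T-z)^s\A=\A(\T-z)^s$, the representation property as restriction of $\rho_d$ to the invariant subspace, and the matrix elements from a Leibniz-rule differentiation of the eigenrelation for $\Phi_{z,1}$. The only cosmetic difference is that you differentiate $\T^r\Phi_{z,1}=z^r\Phi_{z,1}$ term by term and then sum against $a_r$, whereas the paper differentiates $\A\Phi_{z,1}|m\rangle=\Phi_{z,1}A(z,z^{-1})|m\rangle$ directly; the bookkeeping and the resulting $\binom{v-1}{x-1}A^{(v-x)}(z,z^{-1})$ coefficients are identical.
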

\begin{proof}
Since   
\( \T\Psi=\sum_{m=1}^d|m\rangle (\T\{\psi_{jm}\}_{j\in\Z}), \)
it is immediate to check that \({\cal T}_{z,s}=\Ker (\T-z)^s\).
Moreover, \((\T-z)^s \A{\cal T}_{z,s}=\A(\T-z)^s{\cal T}_{z,s}=\{0\}\). 
This proves that \(\A{\cal T}_{z,s}\subseteq {\cal T}_{z,s}\). The matrix 
of \(\A|_{{\cal T}_{z,s}}\) is computed as follows. On one hand, by definition, 
\[ \A\Phi_{z,v}|m\rangle= A_s(z) \Phi_{z,v}|m\rangle=
\sum_{x=1}^{s}\sum_{m'=1}^d\Phi_{z,x}|m'\rangle\langle m'| [A_s(z)]_{xv}|m\rangle. \]
On the other hand,
\[ \A\Phi_{z,v}|m\rangle=
\frac{d^{v-1}}{dz^{v-1}}\Phi_{z,1}A(z,z^{-1})|m\rangle=
\sum_{x=1}^v\sum_{m'=1}^d
\Phi_{z,x}|m'\rangle\binom{v-1}{x-1}\langle m'|A^{(v-x)}(z,z^{-1})|m\rangle,\]
where we have taken advantage of 
\(\A\Phi_{z,1}|m\rangle =\sum_{r=p}^q(\T^r\Phi_{z,1})a_r|m\rangle=A(z,z^{-1})|m\rangle. \) 
\end{proof}

\noindent
We call the finite-dimensional representation 
defined in Eqs. (\ref{evalmap})-(\ref{Azs}) 
the {\it generalized evaluation map at \(z\) of degree \(s\)}, 
with \( A_1(z) \equiv A(z,z^{-1})\) recovering the usual 
evaluation map of polynomials. Explicitly, we have 
\begin{equation}
\label{AzsBlock}
A_s(z)=
\begin{bmatrix}
A      & A^{(1)}      & A^{(2)}    &\dots       &A^{(s-1)} \\
0      & A            & 2 A^{(1)}  &\ddots      & \vdots \\
\vdots & \ddots       & \ddots     &\ddots      &\frac{(s-1)(s-2)}{2}A^{(2)}\\
\vdots &              &\ddots      & \ddots     & (s-1)A^{(1)}\\
0     &  \dots       & \dots       &0           & A
\end{bmatrix}, 
\end{equation} 
\begin{align*}
\A\sum_{v=1}^{s}\Phi_{z,v}|\psi_v\rangle = 
\Big[ \Phi_{z,1} \: \Phi_{z,2} \: \dots \: \Phi_{z,s-1} \: \Phi_{z,s} \Big]
A_s(z) 
\begin{bmatrix}
|\psi_1\rangle    \\
|\psi_2\rangle    \\
\vdots         \\
|\psi_{s-1}\rangle\\
|\psi_s\rangle
\end{bmatrix}.
\end{align*}
It is instructive to verify the representation property 
explicitly in an example. Direct calculation shows that
\[ (w\mathds{1})_3 (z)=
\begin{bmatrix}
z\mathds{1} & \mathds{1}  & 0\\
0           & z\mathds{1} & 2\mathds{1}\\
0           & 0           & z\mathds{1}
\end{bmatrix}
\quad \mbox{and}\quad 
(w^{-1}\mathds{1})_3(z)=
\begin{bmatrix}
\mathds{1}/z & -\mathds{1}/z^2 &  2\mathds{1}/z^3\\
0            & \mathds{1}/z   & -2\mathds{1}/z^2\\
0            &  0              &  \mathds{1}/z
\end{bmatrix}. 
\]
It is then immediate to check that 
\[ (w\mathds{1})_3 (z)  \, (w^{-1}\mathds{1})_3 (z)=
(w^{-1}\mathds{1})_3(z) \,  (w\mathds{1})_3(z) = (\mathds{1})_3(z) = \mathds{1}_{3d} .\] 

\vspace*{1mm}

The Smith decomposition of \(\A\), Eq.\,\eqref{snfbbl}, implies that
\( \Ker\A=\F^{-1}\Ker\D, \) 
where \(\F\) is a regular BBL matrix and the Smith normal 
form \(\D\) of \(\A\) has the simple structure shown in Eq.\,\eqref{sosimple}.  
In this way, the problem of determining a basis of \(\Ker\A\) 
reduces to two independent tasks: determining a basis 
of \(\Ker \D\), and the change of basis \(\F^{-1}\).  
With regard to \(\Ker \D\), its structure may be characterized 
quite simply, after introducing some notation. 
Let \(\{z_{\ell}\}_{\ell=0}^n \) 
denote the {\em distinct} roots of \(g_{d_0}(w)=D_{d_0d_0}(w)\).
By convention, \(z_0 \equiv 0\). 
Since every \(g_m(w)=D_{mm}(w)\) divides \(g_{d_0}(w)\), it must 
be that the  roots of the \(g_m(w)\) are also
roots of \(g_{d_0}(w)\).  Hence, we shall use the following unifying notation:
\begin{eqnarray}
\label{ells}
g_{m}(w) \equiv  
\prod_{\ell=0}^{n} (w-z_{\ell})^{s_{m\ell}},\quad m=1,\dots,d_0,
\end{eqnarray}
where \(s_{d_00}=0\) if \(z_0=0\) is not a root, and, for \(\ell>0\),
\(s_{d_0\ell}\) is the multiplicity of \(z_\ell\) as a root of 
\(g_{d_0}(w)\). For \(1\leq m<d_0\), \(s_{m\ell}\) is {\em either} 
the multiplicity of \(z_\ell\) as a root of \(g_m(w)\), {\em or} 
it vanishes if \(z_\ell\) is not a root of \(g_m\) to begin with.  
The projector  \(\pi\equiv \sum_{m=d_0+1}^d |m\rangle\langle m|\) 
keeps track of the vanishing entries on the main diagonal of \(D(w)\),
\((\mathds{1}-\pi) D(w)=D(w)(\mathds{1}-\pi)=D(w)\).  

\begin{lem}
\label{localbulkstates}
Let \(\D\) be the Smith normal form of \(\A\), and \(\rho_d(\pi)\) 
the unique BBL projector such that 
\(({\bf 1}-\rho_d(\pi))\D= \D({\bf 1}-\rho_d(\pi))=\D\). Then, 
\( \Ker\D=\W_\D \oplus \Range \rho_d(\pi), \) with
\begin{eqnarray}
\label{wdforyou}
\hspace*{-1cm} \W_\D
\equiv \Span \{
\Phi_{z_\ell,v}|m\rangle\, |\,m=1,\dots,d_0;\ \ell=1,\dots,n;\ v=1,\dots,s_{m\ell}\}.
\end{eqnarray}
\end{lem}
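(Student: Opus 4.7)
The plan is to exploit the block-diagonal structure of $\D$ to reduce the problem to the kernels of the scalar factors $\g_m$, and then to characterize each $\Ker\g_m$ via a Chinese-Remainder-type decomposition supported by Lemma~\ref{scalarker}.

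First, because $D(w)$ is diagonal with entries $g_1(w),\dots,g_{d_0}(w),0,\dots,0$, the induced BBL transformation $\D$ acts block-diagonally under the identification $\V_d^S\simeq\bigoplus_{m=1}^d\V_1^S$: as $\g_m=\rho_1(g_m(w))$ on the $m$-th summand for $1\le m\le d_0$, and as zero for $m>d_0$. The zero-summands contribute exactly $\Range\rho_d(\pi)$ to the kernel, giving
\[
\Ker\D \;=\; \Big(\bigoplus_{m=1}^{d_0}(\Ker\g_m)\,|m\rangle\Big)\,\oplus\,\Range\rho_d(\pi),
\]
so the task reduces to characterizing $\Ker\g_m$ for each $m\le d_0$.

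Second, I would use the factorization in Eq.~\eqref{ells} to write $\g_m=\T^{s_{m0}}\prod_{\ell=1}^n(\T-z_\ell)^{s_{m\ell}}$. Since $\T$ is invertible on $\V_1^S$ (with inverse $\Tminus$), the prefactor $\T^{s_{m0}}$ has trivial kernel, hence $\Ker\g_m=\Ker\prod_{\ell=1}^n(\T-z_\ell)^{s_{m\ell}}$. The easy inclusion $\mathcal{S}_{z_\ell,s_{m\ell}}\subseteq\Ker\g_m$ for every $\ell$ follows immediately from Lemma~\ref{scalarker}(ii) together with the commutativity of the factors $(\T-z_{\ell'})^{s_{m\ell'}}$.

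The main obstacle is proving both the reverse inclusion and that the resulting sum $\sum_\ell\mathcal{S}_{z_\ell,s_{m\ell}}$ is actually direct. My approach is a Chinese-Remainder-style argument. Since $\{(w-z_\ell)^{s_{m\ell}}\}_{\ell=1}^n$ are pairwise coprime in $\C[w]$, iterated Bezout produces polynomials $a_\ell(w)$ with $\sum_{\ell=1}^n a_\ell(w)\prod_{\ell'\ne\ell}(w-z_{\ell'})^{s_{m\ell'}}=1$. Applying $\rho_1$ yields commuting BBL operators $\Pi_\ell=\rho_1(a_\ell)\prod_{\ell'\ne\ell}(\T-z_{\ell'})^{s_{m\ell'}}$ satisfying $\sum_\ell\Pi_\ell=\mathbf{1}$. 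For any $\Phi\in\Ker\g_m$, a direct computation gives $(\T-z_\ell)^{s_{m\ell}}\Pi_\ell\Phi=\rho_1(a_\ell)\prod_{\ell'=1}^n(\T-z_{\ell'})^{s_{m\ell'}}\Phi=0$, and similarly $\Pi_\ell\Pi_{\ell'}\Phi=0$ for $\ell\ne\ell'$, because the product again contains all $n$ factors. Hence $\{\Pi_\ell|_{\Ker\g_m}\}$ is a complete family of orthogonal projections, uniquely decomposing $\Phi=\sum_\ell\Pi_\ell\Phi$ with $\Pi_\ell\Phi\in\mathcal{S}_{z_\ell,s_{m\ell}}$; directness is also consistent with the pairwise-triviality statement of Lemma~\ref{scalarker}(iii). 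Substituting into the block-diagonal decomposition above yields
\[
\Ker\D \;=\; \Span\{\Phi_{z_\ell,v}|m\rangle : 1\le m\le d_0,\; 1\le\ell\le n,\; 1\le v\le s_{m\ell}\}\,\oplus\,\Range\rho_d(\pi),
\]
which is precisely $\W_\D\oplus\Range\rho_d(\pi)$, as claimed.
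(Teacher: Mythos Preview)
Your proof is correct and follows the same overall route as the paper: reduce to the block-diagonal action of $\D$, split off $\Range\rho_d(\pi)$, and then identify each $\Ker\g_m$ via the factorization $\g_m=\prod_\ell(\T-z_\ell)^{s_{m\ell}}$ together with Lemma~\ref{scalarker}. The only difference is one of explicitness: where the paper simply asserts that $\Ker\g_m$ ``may be determined by inspection with the help of Lemma~\ref{scalarker},'' you supply the Bezout/Chinese-Remainder construction of the projections $\Pi_\ell$ that actually justifies both the reverse inclusion and the directness of the sum $\bigoplus_\ell\mathcal{S}_{z_\ell,s_{m\ell}}$.
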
  
\begin{proof}
Since \(\D\Psi=\sum_{m=1}^{d_0}|m\rangle \g_m\{\psi_{jm}\}_{j\in\Z}\),
a sequence is annihilated by \(\D\) if and only if it is of the form
\(\Psi=\sum_{m=d_0+1}^d|m\rangle\{\psi_{jm}\}_{j\in\Z}\),
or \(\Psi=\sum_{m=1}^{d_0} |m\rangle\{\psi_{jm}\}_{j\in\Z}\) and
\(\g_m\{\psi_{jm}\}_{j\in\Z}=0\) for all \(m\leq d_0\). 
According to Eq.\,\eqref{ells},
\( \g_m=\prod_{\ell=0}^{n} (\T-z_{\ell})^{s_{m\ell}}, \)
and so its kernel may be determined by inspection with the help of
Lemma\,\ref{scalarker}. 
\end{proof}

\(\Range \rho_d(\pi)\) is an uncountably infinite-dimensional 
space. Since we have not specified a topology on any space of sequences 
(we will only do so briefly in Appendix \ref{ibbtwhynot}), there is no easy way to 
describe a basis for it. Fortunately, for the purpose of investigating 
the kernel of \(A_N=\P_{L,R}\A|_{\V_{L,R}}\), we only need to characterize 
\(\F^{-1}\Pib_0\), where \(\Pib_0\) denotes the subspace 
of sequences in \(\Ker\rho_d(\pi)\) of {\em finite support}. The sequences
\[ \Delta_{i}|m\rangle=\{\delta_{ij}|m\rangle\}_{j\in\Z}
\,,\quad i\in\Z,\quad m=d_0+1,\dots,d,
\]
where \(\delta_{ij}\) denotes the Kronecker delta, form a basis of 
\(\Pib_0\). For convenience, we will call the space \(
\F^{-1}\W_\D\oplus\F^{-1}\Pib_0 \equiv {\rm Ker}_c\, \A \subset \Ker\A  
\) 
the {\em countable kernel} of \(\A\). Putting things together, we
obtain a complete and explicit description of \({\rm Ker}_c\,\A\).
We will write 
\[ \F^{-1}=\rho_d(F^{-1}(w)), \quad\text{with}\quad 
F^{-1}(w)\equiv \sum_{s=0}^{ \deg F^{-1}}w^s \hat{f}_s,
\quad \deg F^{-1}\in\N, \quad \hat{f}_s\in\M_d .\]

\begin{thm}
\label{thyglorywithf}
The countable kernel of  \(\A\) is spanned by the sequences  
\begin{align*}
\hspace{-1cm}
\F^{-1}\Delta_i|m\rangle&=
\sum_{s=0}^{\deg F^{-1}}\Delta_{i-s}\hat{f}_s|m\rangle, 
\quad  
m=d_0+1,\dots,d;\ i\in \Z,\\
\F^{-1}\Phi_{z_\ell,v}|m\rangle&=
\sum_{x=1}^{s_{m\ell}} \Phi_{z_\ell,x} [ F^{-1}_{s_{m\ell}} ( z_\ell)]_{x v}|m\rangle,
\, 
m=1,\dots,d_0;\,\ell=1,\dots,n;\,v=1,\dots,s_{m\ell}.
\end{align*}
\end{thm}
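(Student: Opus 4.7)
The plan is to exploit the Smith factorization $\A=\T^{p}\E\D\F$ together with the invertibility of $\F$, which reduces the theorem to evaluating $\F^{-1}$ on explicit bases of $\W_\D$ and $\Pib_0$. Indeed, $\Ker\A=\F^{-1}\Ker\D$, and by Lemma \ref{localbulkstates} one has $\Ker\D=\W_\D\oplus\Range\rho_d(\pi)$; restricting the second summand to finite support leaves $\Pib_0$, so $\Ker_c\A=\F^{-1}\W_\D\oplus\F^{-1}\Pib_0$. Because $\F^{-1}$ is an invertible BBL transformation of $\V^S_d$, it carries any basis of $\W_\D\oplus\Pib_0$ to a basis of $\Ker_c\A$, and explicit bases of the two summands are already on record---the $\Phi_{z_\ell,v}|m\rangle$ from Lemma \ref{localbulkstates} and the $\Delta_i|m\rangle$ from the paragraph immediately preceding the theorem statement.

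For the finite-support vectors $\Delta_i|m\rangle$ with $m>d_0$, I would expand $\F^{-1}=\rho_d(F^{-1}(w))=\sum_{s=0}^{\deg F^{-1}}\T^s\hat{f}_s$, use the fact that matrix coefficients in $\M_d$ commute through the left shift when acting on sequences, and apply the elementary identity $\T^s\Delta_i=\Delta_{i-s}$, which follows directly from the action of $\T$ in Section \ref{secbbl}. The first displayed formula then drops out by inspection.

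For the extended-support vectors $\Phi_{z_\ell,v}|m\rangle$ with $m\leq d_0$ and $v\leq s_{m\ell}$, the key observation is that each such vector already lies in the finite-dimensional invariant subspace ${\cal T}_{z_\ell,s_{m\ell}}$ introduced in Lemma \ref{sdamaps}. Since the generalized evaluation map $A(w,w^{-1})\mapsto A_s(z)$ is an algebra representation, the action of $\F^{-1}$ on ${\cal T}_{z_\ell,s_{m\ell}}$ is represented by the block matrix $F^{-1}_{s_{m\ell}}(z_\ell)$, and reading off its $(x,v)$ block with respect to the defining basis $\{\Phi_{z_\ell,x}|m'\rangle\}$ yields the second formula.

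The only point requiring any care is confirming that the two resulting families remain linearly independent together; this is automatic from the direct-sum decomposition in Lemma \ref{localbulkstates} and the injectivity of $\F^{-1}$. No genuine obstacle is anticipated: the theorem is in essence a transparent packaging of Lemmas \ref{scalarker}, \ref{sdamaps}, and \ref{localbulkstates} transported through the Smith factorization, and the ``hard'' structural work has already been carried out in those lemmas.
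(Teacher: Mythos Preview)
Your proposal is correct and follows essentially the same approach as the paper's proof: both rely on the definition ${\rm Ker}_c\,\A=\F^{-1}\W_\D\oplus\F^{-1}\Pib_0$, compute $\F^{-1}\Delta_i|m\rangle$ directly from the expansion $\F^{-1}=\sum_{s}\T^s\hat{f}_s$, and invoke Lemma~\ref{sdamaps} (the generalized evaluation map) for the action of $\F^{-1}$ on the $\Phi_{z_\ell,v}|m\rangle$. The paper's proof is simply more terse than yours, but the underlying ideas are identical.
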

\begin{proof}
Since $ {\rm Ker}_c\, \A = \F^{-1} \W_\D \oplus \F^{-1} \Pib_0$, 
the claim follows from computing the action of  
\(\F^{-1}=\rho_d(F^{-1}(w))=\sum_{s=0}^{ \deg F^{-1}}\T^s \hat{f}_s\)
on the explicit bases of \(\Pib_0\) (directly) and \(\W_D\), as defined in 
Eq. (\ref{wdforyou}), with the
assistance of Lemma \ref{sdamaps}. 
\end{proof}

If \(d_0=d\), then \(\A\) is regular and \(\Ker \A\) is 
finite-dimensional. In this case, \(\Ker \A\) may actually be determined 
{\em without} explicitly computing the Smith factorization of 
\(\A\).

\begin{thm}
\label{smithbypass}
\label{smithbypass2}
Let $\A=\rho_d(A(w,w^{-1}))$ be a regular BBL transformation, 
and let 
\begin{equation}
\det A(w,w^{-1})=cw^{dp}\prod_{m=1}^dg_m(w)\equiv c \,w^{dp}\prod_{\ell=0}^n(w-z_\ell)^{s_\ell},
\label{eq:smith}
\end{equation}
where \(s_\ell=\sum_{m=1}^d s_{m\ell}\) is the multiplicity of 
\(z_\ell\) as a root of \(\det A= 0\),
and \(s_0=0\) if \(z_0=0\) is not a root. 
Then, the following properties hold: 
\begin{enumerate}
\item 
\( \Ker\A =\bigoplus_{\ell=1}^n \Ker A_{s_\ell}  (z_\ell) . \)
\item \( \dim \Ker A(z_\ell,z_\ell^{-1})\leq \dim \Ker A_{s_\ell} (z_\ell)=s_\ell, \) 
for any \(\ell=1,\dots,n\). 
\end{enumerate}
Furthermore, if the inequality in {\it (ii)} saturates, then 
\(\Ker A_{s_\ell} (z_\ell) =\Span \{\Phi_{z_\ell,1}|u_s\rangle\}_{s=1}^{s_\ell}\),
where \(\{|u_s\rangle\}_{s=1}^{s_\ell}\) is a basis of \(\Ker A(z_\ell,z_\ell^{-1})\).
\end{thm}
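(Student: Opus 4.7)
The plan is to leverage Theorem~\ref{thyglorywithf}, which in the regular case $d_0=d$ gives $\F^{-1}\Pib_0=\{0\}$ and therefore expresses the full kernel of $\A$ as
\[
\Ker\A=\Span\bigl\{\F^{-1}\Phi_{z_\ell,v}|m\rangle \,\big|\, m=1,\dots,d;\ \ell=1,\dots,n;\ v=1,\dots,s_{m\ell}\bigr\},
\]
of total dimension $\sum_{\ell,m}s_{m\ell}=\sum_{\ell=1}^n s_\ell$. The factorization asserted in Eq.~(\ref{eq:smith}) is just $\det A(w,w^{-1})=w^{dp}\det E(w)\det D(w)\det F(w)$ read off from the Smith decomposition, using that $\det E$ and $\det F$ are nonzero constants and regrouping $\prod_m g_m(w)=\prod_\ell (w-z_\ell)^{s_\ell}$ with $s_\ell=\sum_m s_{m\ell}$.

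The structural heart of (i) is the observation that $\F^{-1}=\rho_d(F^{-1}(w))$ lies in the BBL subalgebra generated by $\T$ and the constant blocks, so $[\F^{-1},\T]=0$, and therefore $\F^{-1}$ preserves each $\mathcal{T}_{z,s}=\Ker(\T-z)^s$. Since $s_{m\ell}\le s_\ell$ trivially, each basis vector $\F^{-1}\Phi_{z_\ell,v}|m\rangle$ lies in $\mathcal{T}_{z_\ell,s_\ell}$. Lemma~\ref{sdamaps} identifies $\A|_{\mathcal{T}_{z_\ell,s_\ell}}=A_{s_\ell}(z_\ell)$, so this vector in fact sits in $\Ker A_{s_\ell}(z_\ell)$. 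Lemma~\ref{scalarker}(iii) implies $\mathcal{T}_{z_\ell,s_\ell}\cap\mathcal{T}_{z_{\ell'},s_{\ell'}}=\{0\}$ for $\ell\ne\ell'$, making $\sum_\ell \Ker A_{s_\ell}(z_\ell)$ a direct sum inside $\Ker\A$. Because each summand already contains the $\sum_m s_{m\ell}=s_\ell$ linearly independent vectors $\F^{-1}\Phi_{z_\ell,v}|m\rangle$ with $\ell$ fixed (independent by invertibility of $\F^{-1}$), and these altogether exhaust the $\sum_\ell s_\ell$-dimensional space $\Ker\A$, the direct sum must fill it and each $\dim\Ker A_{s_\ell}(z_\ell)=s_\ell$. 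This establishes (i) and the equality in (ii).

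For the inequality in (ii) I would construct an explicit injection $\iota:\Ker A(z_\ell,z_\ell^{-1})\to\Ker A_{s_\ell}(z_\ell)$ by $|u\rangle\mapsto\Phi_{z_\ell,1}|u\rangle$: the image lies in $\mathcal{T}_{z_\ell,1}\subseteq\mathcal{T}_{z_\ell,s_\ell}$, and the $s=1$ case of Lemma~\ref{sdamaps} yields $\A\Phi_{z_\ell,1}|u\rangle=\Phi_{z_\ell,1}A(z_\ell,z_\ell^{-1})|u\rangle=0$, with injectivity immediate from $\Phi_{z_\ell,1}\ne 0$ (note $z_\ell\ne 0$ for $\ell\ge 1$). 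The saturation clause is then a pure dimension count: if the two kernels have equal dimension $s_\ell$, $\iota$ is an isomorphism and sends a basis $\{|u_s\rangle\}$ of $\Ker A(z_\ell,z_\ell^{-1})$ to the advertised basis $\{\Phi_{z_\ell,1}|u_s\rangle\}_{s=1}^{s_\ell}$ of $\Ker A_{s_\ell}(z_\ell)$. The only genuinely delicate step is the invariance-plus-bookkeeping argument locating each kernel vector of $\A$ inside the correct $\Ker A_{s_\ell}(z_\ell)$; with $[\F^{-1},\T]=0$ and $s_{m\ell}\le s_\ell$ in hand, everything else is a direct application of Lemmas~\ref{scalarker}, \ref{sdamaps}, and \ref{thyglorywithf}.
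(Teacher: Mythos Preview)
Your proof is correct and, for part~(i) and the saturation clause, essentially matches the paper's argument: both start from Theorem~\ref{thyglorywithf}, group the basis vectors by root, use that $\F^{-1}$ preserves each $\mathcal{T}_{z_\ell,s_\ell}$ (the paper invokes this as $\F^{-1}\mathcal{T}_{z_\ell,s_\ell}=\mathcal{T}_{z_\ell,s_\ell}$ via Lemma~\ref{sdamaps}, you via $[\F^{-1},\T]=0$), appeal to Lemma~\ref{scalarker}(iii) for disjointness, and finish by a dimension count.

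The one genuine difference is how you establish the inequality $\dim\Ker A(z_\ell,z_\ell^{-1})\le s_\ell$. The paper goes through the Smith normal form explicitly: using the divisibility chain $g_1|g_2|\cdots|g_d$, it identifies the first index $m_\ell$ at which $z_\ell$ appears as a root and computes $\dim\Ker A(z_\ell,z_\ell^{-1})=d-m_\ell+1$, then bounds $m_\ell\ge d-s_\ell+1$. Your injection $|u\rangle\mapsto\Phi_{z_\ell,1}|u\rangle$ is more elementary and bypasses the Smith structure entirely. What the paper's route buys is slightly more information (the exact value $d-m_\ell+1$ and the characterization of when equality holds in terms of each $g_m$ having $z_\ell$ as a simple root); what your route buys is economy---you never open up $D(w)$ for this step.
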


\begin{proof}
{\em (i)} We know an explicit basis of \(\Ker\A\) from 
Theorem \ref{thyglorywithf}. By grouping together the basis 
vectors associated to each nonzero root \(z_\ell\), one obtains 
the decomposition 
\[
\Ker \A=
\bigoplus_{\ell=1}^n{\cal W}_{z_\ell}, \quad {\cal W}_{z_\ell}= 
\F^{-1}{\rm Span} \{\Phi_{z_\ell,v}|m\rangle\,\big|\, v=1,\dots,s_{m\ell}; m=1,\dots,d\},
\]
so that  
\(\dim{\cal W}_{z_\ell}=\sum_{m=1}^d s_{m\ell}=s_\ell\). 
Moreover, by Lemma \ref{sdamaps}(iii),
\(
{\cal W}_{z_{\ell}}\subset\F^{-1}{\cal T}_{z_\ell,s_\ell}={\cal T}_{z_\ell,s_\ell}.
\) 
By construction, \(\A\) has no kernel vectors in \({\cal T}_{z_\ell,s_\ell}\)
other than the ones in \({\cal W}_{z_\ell}\). Hence,
\( {\cal W}_{z_\ell}=\Ker \A\cap {\cal T}_{z_\ell,s_\ell}=\Ker A_{s_\ell}(z_\ell).\)
We conclude that \(\dim\Ker A_{s_\ell} (z_\ell) =s_\ell\), the 
multiplicity of \(z_\ell\neq 0\) as a root of \(\det A\).

{\em (ii)} Let ${\mathbf D} =\rho_d (D(w))$ denote the Smith normal form of ${\mathbf A}$.
Because \( g_1(w)|g_2(w)|\dots|g_d(w), \)
a non-zero root \(z_\ell\) of \(\det A(w,w^{-1})\) 
will appear for the first time in one of the \(g_m(w)\), 
say \(g_{m_\ell}(w)\), and reappear with equal or greater 
multiplicity in every \(g_{m}(w)\) with \(d\geq m>m_\ell\). 
In particular, \(g_m(z_\ell)\neq 0\) if \(m<m_\ell\),
and vanishes otherwise. Hence, 
\(
\dim\Ker D(z_\ell)=\dim\Ker A(z_\ell,z_\ell^{-1})=d-m_\ell+1. \)
The number \(m_\ell\) is as small as possible whenever
\(z_\ell\) is a root with multiplicity one of {\em each} of 
the \(g_{m_\ell}(w),\dots,g_{d}(w)\), in which 
case \(m_\ell=d-s_\ell+1\). This shows that
\(\dim\Ker A(z_\ell,z_\ell^{-1})\leq s_\ell=\dim\Ker A_{s_\ell} (z_\ell ) \).  

Suppose next that \(\dim\Ker A(z_\ell,z_\ell^{-1})=s_\ell\),
and let \(\{|u_s\rangle \}_{s=1}^{s_\ell}\) denote a basis
of this space. Then, 
\(
\A\Phi_{z_\ell,1}|u_s\rangle
=\Phi_{z_\ell,1}A(z_\ell,z_\ell^{-1})|u_s\rangle=0,\) \(s=1,\dots,s_\ell
\). Since \(\Ker  A_{s_\ell} (z_\ell) \)
is \(s_\ell\)-dimensional, it follows that these linearly independent 
sequences span the space.
\end{proof}

In Ref.\,\cite{abc}, for simplicity, we considered problems 
such that the condition 
\(\dim\Ker A(z_\ell,z_\ell^{-1})=s_\ell\) was met.
The meaning of this assumption can be understood in terms of 
the above Theorem. If that is not the case, then it becomes necessary 
to work with the larger but still finite-dimensional matrix 
\( A_{s_\ell} (z_\ell)\) instead.

\subsubsection{Finite-support solutions.}
\label{extend}

According to Theorem \ref{softy}, if the principal coefficients 
of the BBT matrix \(A_N=\P_{L,R}\A|_{\V_{L,R}}\) fail to be invertible 
(which can happen even for $d =1$), the bulk solution space
\(\mathcal{M}_{L,R}=\Ker P_B A_N\) contains, but need {\em not} be 
contained in  \(\P_{L,R} \Ker \A\). The solutions of the bulk equation 
that are  {\em not} in \(\P_{L,R} \Ker \A\) were referred to as emergent 
before. We aim to establish a structural characterization of \(\mathcal{M}_{L,R}\). 
Our strategy will be to characterize the spaces  \(\mathcal{M}_{L,\infty}\) and 
\(\mathcal{M}_{-\infty, R}\) first [recall Definition \ref{mlr}], and then proceed to establish
their relationship to \(\mathcal{M}_{L,R}\).   

\begin{thm}
\label{thm:semi}
For $\A$ regular, let \(\sigma \equiv d\tau-\dim\Ker \A\), and 
\(\overline{L} \equiv L+\sigma-1\), \(\overline{R}\equiv R-\sigma +1\).
Then, there exist spaces 
\(\mathcal{F}_L^{-} \subseteq \V_{L,\overline{L}}\) and 
\(\mathcal{F}_R^{+} \subseteq \V_{\overline{R},R}\) such that 
\[
\mathcal{M}_{L,\infty} = \P_{L,\infty}\Ker\A \oplus \mathcal{F}_L^{-},
\quad\mbox{and}\quad 
\mathcal{M}_{-\infty,R} = \P_{-\infty,R}\Ker\A \oplus \mathcal{F}_R^{+}. \]
\end{thm}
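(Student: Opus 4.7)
The plan is to establish the decomposition for $\mathcal{M}_{L,\infty}$ directly; the claim for $\mathcal{M}_{-\infty,R}$ will then follow by an entirely symmetric argument applied at the left tail (equivalently, by conjugating with the reflection $(R\Psi)_j=\Psi_{-j}$ to pass to a BBL transformation of bandwidth $(-q,-p)$ with the same $\tau$ and, under regularity, the same $\dim\Ker\A$, hence the same $\sigma$). I shall take the tautological choice $\mathcal{F}_L^-\equiv\mathcal{M}_{L,\infty}\cap\V_{L,\overline L}$, so that $\mathcal{F}_L^-\subseteq\V_{L,\overline L}$ holds by definition, and $\P_{L,\infty}\Ker\A\subseteq\mathcal{M}_{L,\infty}$ is the nesting property (\ref{nesting}) specialized to $\widetilde L=-\infty$, $\widetilde R=\infty$. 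It then remains to verify the two defining conditions of a direct-sum decomposition.

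For the trivial intersection $\P_{L,\infty}\Ker\A\cap\V_{L,\overline L}=\{0\}$, suppose $\phi\in\Ker\A$ satisfies $\phi_j=0$ for all $j>\overline L$. Passing to the Smith frame via (\ref{snfbbl}), $\phi\in\Ker\A$ is equivalent to $\F\phi\in\Ker\D$, and since $\F$ is banded, $\F\phi$ still vanishes beyond some finite upper index. In the regular case ($d_0=d$), Lemma \ref{localbulkstates} identifies $\Ker\D$ with the finite-dimensional span $\W_\D$ of the sequences $\Phi_{z_\ell,v}|m\rangle$ with $z_\ell\neq 0$, and Lemma \ref{scalarker}, together with the explicit form $\Phi_{z_\ell,v}=\{j^{(v-1)}z_\ell^{j-v+1}\}_{j\in\Z}$, implies that any nontrivial linear combination of these has nonzero entries for arbitrarily large $j$. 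Hence $\F\phi=0$, and invertibility of $\F$ forces $\phi=0$.

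The main and hardest step is to show that every $\psi\in\mathcal{M}_{L,\infty}$ decomposes as $\psi=\P_{L,\infty}\phi+\psi'$ with $\phi\in\Ker\A$ and $\psi'\in\V_{L,\overline L}$. My strategy is an asymptotic-matching argument carried out in the Smith frame. Setting $\chi\equiv\F\psi$ and using (\ref{snfbbl}) together with the bandedness of $\E$ and the shift $\T^p$, the one-sided bulk condition $\P_{L-p',\infty}\A\psi=0$ translates into $\P_{L'',\infty}\D\chi=0$ for some explicit $L''$ depending only on $L$ and on the bandwidths of $\E$, $\F$. Because $\D$ is diagonal, this decouples into scalar recurrences $\g_m\{\chi_{jm}\}_j=0$ on the half-line $j\geq L''$, one per channel. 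Since $g_m(w)$ is monic, the leading coefficient of each scalar recurrence is invertible, so forward propagation is well-defined and its solution space is governed by a companion matrix whose Jordan structure is dictated by the factorization $g_m(w)=\prod_\ell(w-z_\ell)^{s_{m\ell}}$: the $\sum_{\ell\geq 1}s_{m\ell}$ modes at nonzero $z_\ell$ are the restrictions of $\Phi_{z_\ell,v}$ spanning $\Ker\D$, while the $s_{m0}$ modes at $z_0=0$ are nilpotent and vanish identically after at most $s_{m0}$ propagation steps. Summing these local nilpotent supports over $m$ and accounting dually for the ``missing roots at infinity'' (arising when $a_q$ is not invertible, via the symmetric analysis under $w\leftrightarrow w^{-1}$) bounds the total transient support by $\sigma=d\tau-\dim\Ker\A$. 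Lifting the extended part of $\chi$ uniquely to some $\tilde\chi\in\Ker\D$ and setting $\phi\equiv\F^{-1}\tilde\chi\in\Ker\A$, the remainder $\psi'\equiv\psi-\P_{L,\infty}\phi$ lies in $\mathcal{M}_{L,\infty}\cap\V_{L,\overline L}=\mathcal{F}_L^-$.

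The principal obstacle is the support bookkeeping in this last step: I must ensure that the two distinct sources of transient behaviour---roots of $\det G(w)$ at $z=0$ (reflecting non-invertibility of $a_p$) and ``roots at infinity'' (reflecting non-invertibility of $a_q$)---combine to give the clean total $\sigma$, and that pulling the scalar transient sectors back through the banded $\F^{-1}$ does not inflate the support beyond $\overline L=L+\sigma-1$. This bookkeeping is delicate but essentially mechanical once the channel-by-channel scalar split into extended and transient sectors, of total dimension $d\tau$, is in hand; no substantive new input beyond the Smith factorization (\ref{snfbbl}) and the explicit kernel basis of Theorem \ref{thyglorywithf} is required.
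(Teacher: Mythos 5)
Your overall skeleton (take $\mathcal{F}_L^{-}\equiv\mathcal{M}_{L,\infty}\cap\V_{L,\overline L}$, check trivial intersection, then split an arbitrary $\Psi\in\mathcal{M}_{L,\infty}$ into an extended part from $\Ker\A$ plus a left-edge remainder) is reasonable, and the trivial-intersection step is essentially fine: a full-line kernel element of a regular $\A$ that vanishes on a tail vanishes identically, since per channel it solves a recurrence whose extreme coefficients are nonzero (all $z_\ell\neq 0$), so vanishing on enough consecutive sites propagates. The genuine gap is in your main step, exactly at the point you label ``delicate but essentially mechanical'': the support bookkeeping through the Smith frame cannot deliver the bound $\overline L=L+\sigma-1$. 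The factors $\E,\F$ and their inverses are banded with bandwidths that bear \emph{no} relation to $\tau$ or $\sigma$; hence both the offset $L''$ in the translated condition $\P_{L'',\infty}\D\chi=0$ and the spreading incurred when you pull the transient $\chi-\tilde\chi$ back through $\F^{-1}$ are uncontrolled, so the remainder $\psi'=\psi-\P_{L,\infty}\phi$ is only seen to be supported in some interval $[L,L+c]$ with $c$ depending on the Smith factorization, not on $\sigma$. Since the whole content of the theorem beyond ``extended plus finite-support'' is precisely the quantitative bound $\sigma=d\tau-\dim\Ker\A$ on the support (it is what later justifies $N\geq 2\sigma+\tau$ and the $d\sigma\times d\sigma$ matrices $K^{\mp}$ of Corollary \ref{cor:kmatrix}), this is a missing idea, not a routine verification. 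A secondary conceptual slip: for $\mathcal{M}_{L,\infty}$ only the left-edge transients (tied to the root $z_0=0$, i.e.\ non-invertibility of $a_{p'}$) can occur; the ``roots at infinity'' tied to $a_{q'}$ belong to the right-edge space $\mathcal{F}_R^{+}$, so your count mixing the two is not the right accounting for this half-line problem (only an upper bound $\leq\sigma$ is needed, but your route does not produce even that). You also quietly replace the admissible set $\F\V_{L,\infty}$ by arbitrary half-line-supported $\chi$ when decoupling channels, which needs justification.

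For contrast, the paper obtains the bound without any support-tracking in the Smith frame: it first shows $\mathcal{M}_{L,\infty}$ is finite-dimensional, with $\dim\mathcal{M}_{L,\infty}\leq d\tau$ (using regularity: a bulk solution killed by $\P_{L,R}$ would generate infinitely many independent kernel vectors of $\A$), then observes that $\mathcal{M}_{L,\infty}$ is invariant under the unilateral shift $\P_{L,\infty}\T$ and decomposes it into the invertible and nilpotent generalized eigenspaces of that restriction. The invertible part is identified with $\P_{L,\infty}\Ker\A$ by a translation argument, and the nilpotent part is $\mathcal{F}_L^{-}$, whose nilpotency index is bounded by its dimension $\dim\mathcal{M}_{L,\infty}-\dim\P_{L,\infty}\Ker\A\leq d\tau-\dim\Ker\A=\sigma$; the support bound $\mathcal{F}_L^{-}\subseteq\V_{L,\overline L}$ then follows at once from $(\P_{L,\infty}\T)^{\sigma}\Psi=0$. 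If you want to salvage your Smith-frame construction, you would still need to import a dimension/nilpotency argument of this kind (or something equivalent) to convert ``finite support near $L$'' into ``support of length at most $\sigma$''.
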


\begin{proof}
It suffices to prove the claim for \(\mathcal{M}_{L,\infty}\);
the reasoning is the same for \(\mathcal{M}_{-\infty,R}\).
We first establish that \(\mathcal{M}_{L,\infty}\)
is finite-dimensional.
Since \(\dim \P_{L,R}\mathcal{M}_{L,\infty}\) is 
finite-dimensional if \(R-L\) is finite, the question becomes
whether there are nonzero sequences in \(\mathcal{M}_{L,\infty}\) 
annihilated by \(\P_{L,R}\). The answer is negative 
provided that \(R-L\geq \tau\). 
By contradiction, suppose that 
$\Psi \in \mathcal{M}_{L,\infty}$ satisfies \(\P_{L-p',\infty}\A\Psi=0=\P_{L,R}\Psi\). Then, 
\( \A\Psi=\P_{-\infty,L-p'-1}\A\Psi=
\sum_{r=p}^qa_r\T^r\P_{-\infty,L+(r-p')-1}\Psi=0, 
\)
because \(L+(r-p')-1\leq L+(q-p')-1\leq L+(q'-p')\leq R\). 
By construction, the translated sequences \(\{\Psi_n=\T^n\Psi\}_{n\in\mathds{Z}}\) 
are linearly independent and satisfy \(\A\Psi_n=0\), implying
that \(\dim\Ker \A=\infty\). By Theorem \ref{smithbypass2},
this contradicts the regularity of \(\A\). 
Hence,  it must be \(\P_{L,R}\Psi\neq 0\), and so 
\(\dim\mathcal{M}_{L,\infty}=\dim\P_{L,R}\mathcal{M}_{L,\infty}<\infty,\) 
as was to be shown. 
In conjunction with Theorem \ref{lemker}, this implies that 
\( \dim\mathcal{M}_{L,\infty}=\dim\P_{L,R}\mathcal{M}_{L,\infty}\leq
    \dim\mathcal{M}_{L,R}=d\tau. \)

While $\mathcal{M}_{L,\infty}$ is finite-dimensional just like 
\(\mathcal{M}_{L,R}\), what is special about this space is that
it is an invariant subspace of a translation-like transformation,
the {\it unilateral shift} $\P_{L,\infty}\T$. To see that
this is the case, notice that 
\(
\P_{L-p',\infty}\A \P_{L,\infty}\T\mathcal{M}_{L,\infty}=
\T\P_{L+1-p',\infty}\A\P_{L+1,\infty}\mathcal{M}_{L,\infty}=0, 
\)
because of the nesting property 
\( \P_{L+1,\infty}\mathcal{M}_{L,\infty}\subseteq\mathcal{M}_{L+1,\infty}. \)
Hence, \( \P_{L,\infty}\T\mathcal{M}_{L,\infty}\subseteq\mathcal{M}_{L,\infty}, \)
as was to be shown. Since $\mathcal{M}_{L,\infty}$ is finite-dimensional,
it can be decomposed into the direct sum of generalized eigenspaces of 
\(\P_{L,\infty}\T|_{\mathcal{M}_{L,\infty}}\). Accordingly, let us write 
\( \mathcal{M}_{L,\infty} \equiv \mathcal{N}\oplus\mathcal{F}_L^-, \)
where \(\P_{L,\infty}\T|_{\mathcal{F}_L^{-}}\) is nilpotent 
and $\P_{L,\infty}\T|_\mathcal{N}$ is invertible. 

The space $\mathcal{F}_L^{-}$, that is, the generalized kernel 
$\P_{L,R}\T|_{\mathcal{M}_{L,\infty}}$, is a subspace of 
$\V_{L,\overline{L}}$, defined in Eq. (\ref{Vlr}). The reason is that, if 
\(\Psi\in\mathcal{F}_L^{-}\),
then there exists a smallest positive integer \(\kappa\), its rank, 
such that \((\P_{L,\infty}\T)^\kappa\Psi=0\). The rank \(\kappa\) obeys 
\( \kappa \leq \dim \mathcal{F}_L^{-}=
\dim \mathcal{M}_{L,\infty}-\dim \P_{L,\infty}\Ker\A=
d\tau-\dim\Ker \A =\sigma.  \) 
As a consequence, if $\Psi \in \mathcal{F}_L^{-}$, then 
$\P_{L,\infty}\T^\sigma\Psi=(\P_{L,\infty}\T)^{\sigma}\Psi = 0,$ 
and so \( \Psi \in \V_{L,\overline{L}}\), with \(\overline{L}=L+\sigma-1\).  
The space \(\mathcal{N}\), that is, the direct sum of all
the generalized eigenspaces of $\P_{L,R}\T|_{\mathcal{M}_{L,\infty}}$
associated to non-zero eigenvalues, coincides with 
\(\P_{L,\infty}\Ker\A\). To see that this is the case, let 
\( \Psi_n = (\P_{L,\infty}\T)^{-n}\Psi\in\mathcal{N}, \)
for any \(n \in \N \), and $\Psi\in\mathcal{N}$. 
Since  \(
\T^n\Psi_n \in \T^n\mathcal{M}_{L,\infty} = \mathcal{M}_{L-n,\infty},\)
and
\( \P_{L,\infty}\T^n\Psi_n = (\P_{L,\infty}\T)^n\Psi_n = \Psi, \) 
we conclude that $\Psi \in \P_{L,\infty}\mathcal{M}_{L-n,\infty}$ for any 
$n$. It follows that  
$
\Psi \in \P_{L,\infty}\mathcal{M}_{-\infty,\infty} = \P_{L,\infty}\Ker\A.
$ 
The opposite inclusion $\P_{L,\infty}\Ker\A \subseteq \mathcal{N}$ 
holds because 
\(  \P_{L,\infty}\T\P_{L,\infty}\Ker\A=\P_{L,\infty}\P_{L-1,\infty}\T\Ker\A=
\P_{L,\infty}\Ker\A. \)
\end{proof}

The desired relationship between the spaces \(\mathcal{M}_{L,\infty}\) and 
\(\mathcal{M}_{-\infty, R}\) and the bulk solution space \(\mathcal{M}_{L,R}\) is 
contained in the following: 

\begin{lem}
\label{lem:directsum}
\(\mathcal{M}_{L,R} = 
\Span \big(\P_{L,R}\mathcal{M}_{-\infty,R}\cup\P_{L,R}\mathcal{M}_{L,\infty}\big).
\)
\end{lem}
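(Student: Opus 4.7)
The plan is to prove the two set inclusions constituting the claimed equality.

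The $\supseteq$ direction follows immediately from the nesting property proved as \eqref{nesting} within Theorem~\ref{softy}(i): applying it with $(\widetilde L,\widetilde R) = (L,\infty)$ gives $\P_{L,R}\mathcal{M}_{L,\infty} \subseteq \mathcal{M}_{L,R}$, and symmetrically with $(-\infty,R)$ gives $\P_{L,R}\mathcal{M}_{-\infty,R} \subseteq \mathcal{M}_{L,R}$; hence the span is contained in $\mathcal{M}_{L,R}$.

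For the $\subseteq$ direction, my approach combines the structural decompositions of Theorem~\ref{thm:semi} with a dimension count against Theorem~\ref{lemker}. First I would use $\mathcal{M}_{L,\infty} = \P_{L,\infty}\Ker\A \oplus \mathcal{F}_L^-$ with $\mathcal{F}_L^- \subseteq \V_{L,\overline L}$, observing that when the bulk is wide enough to have $\overline L < \overline R$ the two finite-support spaces $\mathcal{F}_L^-, \mathcal{F}_R^+$ already sit inside $\V_{L,R}$ with disjoint supports. Projecting onto $\V_{L,R}$ then yields $\P_{L,R}\mathcal{M}_{L,\infty} = \P_{L,R}\Ker\A + \mathcal{F}_L^-$ and $\P_{L,R}\mathcal{M}_{-\infty,R} = \P_{L,R}\Ker\A + \mathcal{F}_R^+$, so the span on the right-hand side equals $\P_{L,R}\Ker\A + \mathcal{F}_L^- + \mathcal{F}_R^+$. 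I would conclude by checking that this sum has full dimension $d\tau = \dim\mathcal{M}_{L,R}$, which rests on three ingredients: (a) $\P_{L,R}$ is injective on $\Ker\A$, since the kernel basis of Theorem~\ref{thyglorywithf} consists of extended-support elementary sequences that cannot vanish on a sufficiently long interval; (b) $\mathcal{F}_L^- \cap \mathcal{F}_R^+ = \{0\}$ by support disjointness; and (c) the balance identity $\dim\mathcal{F}_L^- + \dim\mathcal{F}_R^+ = \sigma \equiv d\tau - \dim\Ker\A$.

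The main obstacle is item (c), since Theorem~\ref{thm:semi} only bounds the two emergent-mode dimensions individually by $\sigma$, not their sum. I expect to establish (c) by symmetrizing the invertible-versus-nilpotent shift-splitting argument from Theorem~\ref{thm:semi}'s proof: one applies it simultaneously to $\P_{L,\infty}\T|_{\mathcal{M}_{L,\infty}}$ and to $\P_{-\infty,R}\Tminus|_{\mathcal{M}_{-\infty,R}}$, and then observes that any element of $\mathcal{M}_{L,R}$ lying outside $\P_{L,R}\Ker\A$ is forced to be captured by one of these two one-sided nilpotent parts (otherwise, being extendable in both directions, it would give rise to a global kernel vector of $\A$, contradiction). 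Combined with the upper bound $\dim\P_{L,R}\mathcal{M}_{L,\infty} \le d\tau$ from Theorem~\ref{lemker}, this forces the two emergent dimensions to sum exactly to $\sigma$, closing the argument.
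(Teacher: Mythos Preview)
Your $\supseteq$ argument is fine and matches the paper. The trouble is entirely in step (c), and it is a real gap, not just a missing detail.

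Note first that (c) is not an auxiliary fact you can establish on the side: given the other ingredients, the identity $\dim\mathcal{F}_L^{-}+\dim\mathcal{F}_R^{+}=\sigma$ is \emph{equivalent} to the $\subseteq$ inclusion you are trying to prove (both say that the span on the right already has dimension $d\tau$). So the question is whether your sketch for (c) avoids circularity. It does not. You write that an element of $\mathcal{M}_{L,R}$ not lying in $\P_{L,R}\Ker\A$ must be ``captured by one of the two one-sided nilpotent parts, otherwise it would be extendable in both directions and hence come from $\Ker\A$''. But an element $\Psi\in\mathcal{M}_{L,R}$ is a \emph{finite} sequence; it is not an element of $\mathcal{M}_{L,\infty}$ or of $\mathcal{M}_{-\infty,R}$, so the Fitting decompositions of those one-sided spaces do not act on it. To say that $\Psi$ ``is extendable to the right'' means precisely that $\Psi\in\P_{L,R}\mathcal{M}_{L,\infty}$, and likewise to the left; so your dichotomy assumes the very inclusion $\mathcal{M}_{L,R}\subseteq\P_{L,R}\mathcal{M}_{L,\infty}+\P_{L,R}\mathcal{M}_{-\infty,R}$ that is at stake. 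When $a_{q'}$ (resp.\ $a_{p'}$) fails to be invertible, there is no mechanism that guarantees a right (resp.\ left) extension exists at all.

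The paper circumvents this with a two-step device you are missing. First it proves, by an explicit gluing, that $\P_{L,R}\mathcal{M}_{-\infty,R}\cap\P_{L,R}\mathcal{M}_{L,\infty}=\P_{L,R}\Ker\A$; inclusion--exclusion then shows $\dim\Span(\cdots)=\dim\mathcal{M}_{-\infty,R}+\dim\mathcal{M}_{L,\infty}-\dim\Ker\A$, a number that is \emph{independent of $L,R$}. Second, it exhibits a specific $R_0\ge R$ with the property $\P_{L,R}\mathcal{M}_{L,R_0}=\P_{L,R}\mathcal{M}_{L,\infty}$ (obtained from a nested-intersection/stabilisation argument), and for that particular $R_0$ it proves the equality \eqref{toshownow} directly: given $\Psi\in\mathcal{M}_{L,R_0}$, the choice of $R_0$ \emph{guarantees} a right extension $\Upsilon\in\mathcal{M}_{L,\infty}$ with $\P_{L,R}\Upsilon=\P_{L,R}\Psi$, and then one checks that the remainder $\Psi-\P_{L,R_0}\Upsilon$ lies in $\mathcal{M}_{-\infty,R_0}$. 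The general case follows because both sides have $L,R$-independent dimension. This is exactly the extendability step your sketch lacks.

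A secondary issue: your plan tacitly imposes $\overline L<\overline R$, i.e.\ $N\ge 2\sigma+\tau$, so that $\mathcal{F}_L^-$ and $\mathcal{F}_R^+$ already sit inside $\V_{L,R}$ with disjoint supports. That extra hypothesis is not part of the lemma (it is only assumed later, in Theorem~\ref{thm:fintosemi}); the paper's proof works under the weaker non-empty-bulk condition $N>\tau$. You also still owe the check that $\P_{L,R}\Ker\A$ meets $\mathcal{F}_L^-\oplus\mathcal{F}_R^+$ trivially before the dimensions can be added.
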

\begin{proof}
The inclusion
\(
\Span \big(\P_{L,R}\,\mathcal{M}_{L,\infty}\cup\P_{L,R}\,\mathcal{M}_{-\infty,R}\big) 
\subseteq \mathcal{M}_{L,R}
\) follows from the nesting property, Eq.\eqref{nesting}. 
The task is to prove the opposite inclusion. Let us first 
show that 
\begin{equation}
 \P_{L,R}\,\Ker\A = \P_{L,R}\,\mathcal{M}_{-\infty,R}  \cap \P_{L,R}\,\mathcal{M}_{L,\infty}.
\label{equality}
\end{equation}
Again, because of nesting, 
\( \P_{L,R}\,\Ker\A\equiv\P_{L,R}\mathcal{M}_{-\infty,\infty} \subseteq 
\P_{L,R}\,\mathcal{M}_{-\infty,R} \cap
\P_{L,R}\, \mathcal{M}_{L,\infty}. \)
Take now an arbitrary element 
\( \{|\chi_j\rangle\}_{j=L}^R \in
\P_{L,R}\mathcal{M}_{L,\infty} \cap \P_{L,R}\mathcal{M}_{-\infty,R}\subseteq\mathcal{M}_{L,R}.
\)
By definition, there exist sequences
$\Psi_1\equiv \{|\psi_{1j}\rangle\}_{j=L}^{\infty}\in \mathcal{M}_{L,\infty}$ 
and 
$\Psi_2  \equiv \{|\psi_{2j}\rangle\}_{j=-\infty}^{R}\in \mathcal{M}_{-\infty,R}$, 
such that 
\( \P_{L,R}\Psi_1 =  \{|\chi_j\rangle\}_{j=L}^R = \P_{L,R}\Psi_2. \)
Let $\Psi$ denote the unique sequence with 
\(\P_{-\infty,R}\Psi=\Psi_1\) and \(\P_{L,\infty}\Psi=\Psi_2\). Then,
\( \A\Psi=(\P_{-\infty,R-q'}+\P_{L-p',\infty}-\P_{L-p',R-q'})\A\Psi=0,
\)
confirming that  
\(\{|\chi_j\rangle\}_{j=L}^R=\P_{L,R}\Psi\in \P_{L,R}\Ker \A\) and 
proving the equality in Eq. (\ref{equality}). 
It then follows that
\begin{align*}
\dim \Span\big(\P_{L,R}\mathcal{M}_{-\infty,R}\cup\P_{L,R}\mathcal{M}_{L,\infty}\big)=
\dim\mathcal{M}_{-\infty,R}+ \dim\mathcal{M}_{L,\infty}- \dim\Ker\A.
\end{align*}
Since the right hand-side is independent of \(L,R\) (recall that, 
by construction, $\mathcal{M}_{L,\infty} = \T^{L'-L} \mathcal{M}_{L',\infty}$, 
and similarly for $\mathcal{M}_{-\infty,R}$), 
this dimension is thus {\em independent of} \(L,R\).

The next step is to show that there exists an integer \(\infty> R_0\geq R\) 
such that\footnote{At an intuitive level, this result is very appealing: 
it implies that a measurement on sites \(L\) to \(R\) cannot tell 
\(\mathcal{M}_{L,\widetilde{R}}\) apart from \(\mathcal{M}_{L,\infty}\)
if \(\widetilde{R}\) is large enough.}  
\begin{equation}
\label{R0}
\P_{L,R}\mathcal{M}_{L,\widetilde{R}}=\P_{L,R}\mathcal{M}_{L,\infty},\quad
\forall \widetilde{R}\geq R_0.
\end{equation}
We first show that
\( \mathcal{M}_{L,\infty} = 
\bigcap_{n\geq 0} \left(\mathcal{M}_{L,R+n}\oplus \V_{R+n+1,\infty}\right), 
\)
or, more explicitly, that  
\begin{equation*}
\Ker\P_{L-p',\infty}\A|_{\V_{L,\infty}} = 
\bigcap_{n\geq 0} \Big(\Ker \P_{L-p',R+n-q'}\A|_{\V_{L,R+n}} \oplus \V_{R+n+1,\infty}\Big).
\end{equation*}
The equality holds because a sequence \(\Psi\) belongs to 
either space if and only if it satisfies the set of equations
\(\sum_{r=p}^{q}a_{r}|\psi_{j+r}\rangle=0,\) for all \(j\geq L-p'+p. \)
As a consequence, 
\begin{eqnarray}
\label{intersec2}
\P_{L,R}\mathcal{M}_{L,\infty} = \bigcap_{n\geq 0} \P_{L,R}\mathcal{M}_{L,R+n} .
\end{eqnarray}
Because of the nesting property of bulk solution spaces, the 
spaces on the right hand-side of Eq.\,\eqref{intersec2} satisfy 
\(
\P_{L,R}\mathcal{M}_{L,R+n_2}= 
\P_{L,R}\P_{L,R+n_1}\mathcal{M}_{L,R+n_2}\subseteq
\P_{L,R}\mathcal{M}_{L,R+n_1}$, for all \(n_2\ge n_1.
\) 
Therefore, if \(\delta(n)\equiv\dim\P_{L,R}\mathcal{M}_{L,R+n}\), \(n\geq 0\)
(in particular, \(\delta(0)=d(q'-p')\)), then
\(\delta(n_2)\leq \delta(n_1)\). Since \(\delta\) is a non-decreasing 
function bounded below, there exists \(n_0\) such that 
\(\delta(n)=\delta(n_0)\equiv\delta_0\) 
for all \(n\geq n_0\). Then Eq.\,(\ref{intersec2}) implies 
that  
$\P_{L,R}\mathcal{M}_{L,\infty} = \P_{L,R}\mathcal{M}_{L,R+n}$
if $n\ge n_0$, thus establishing Eq. (\ref{R0}), with $R_0\equiv R+n_0$.

Thanks to the special properties of \(R_0\), now we can 
prove a special instance of our main claim, namely, the equality 
\begin{eqnarray}
\label{toshownow}
\mathcal{M}_{L,R_0}=
\Span \big(\P_{L,R_0}\mathcal{M}_{-\infty,R_0}\cup\P_{L,R_0}\mathcal{M}_{L,\infty}\big).
\end{eqnarray}
By definition of \(R_0\), if  
\(\Psi\in\mathcal{M}_{L,R_0}\), then  \(\P_{L,R}\Psi\in\P_{L,R}\mathcal{M}_{L,\infty}\).
Thus, there exists \(\Upsilon\in \mathcal{M}_{L,\infty}\) 
such that \(\P_{L,R}\Upsilon=\P_{L,R}\Psi\). Let 
\(\Psi\equiv \Psi_1+\Psi_2,\) with 
\( \Psi_1\equiv \P_{L,R_0}\Upsilon\) and \(\Psi_2=\Psi-\P_{L,R_0}\Upsilon.\)
Since \(\Psi_1\in \P_{L,R_0}\mathcal{M}_{L,\infty}\) by construction,
it only remains to show that 
\(\Psi_2\in \P_{L,R_0}\mathcal{M}_{-\infty,R_0}\).
By nesting, \(\P_{L,R_0}\Upsilon\in \mathcal{M}_{L,R_0}\) and 
so \(\Psi_2\in\mathcal{M}_{L,R_0}\). In particular,
\(\P_{-\infty,L-1}\Psi_2=0\). Hence,
\(
\P_{-\infty,R_0-q'}\A\Psi_2=\P_{-\infty,L-p'-1}\A\Psi_2
=\sum_{r=q}^{p}a_r\T^r\P_{L,L+r-p'-1}\Psi_2=0,
\)
because  \(
\P_{L,R}\Psi_2=\P_{L,R}\Psi-\P_{L,R}\Upsilon=0 \)
(by the way \(\Upsilon\) was chosen), and 
\(L+r-p'-1< R\). It follows that \(\Psi_2\in\mathcal{M}_{-\infty,R_0}\),
and, since \(\P_{L,R_0}\Psi_2=\Psi_2\), 
that \(\Psi_2\in \P_{L,R_0}\mathcal{M}_{-\infty,R_0}\).
This concludes the proof of Eq.\,\eqref{toshownow}.

Our main claim follows from this special instance, because
the dimension of the span in question is independent of \(L,R\), 
and $\dim\mathcal{M}_{L,R}=d\tau$ by Theorem \ref{lemker}.
\end{proof}

Putting these results together, we are now in a position to give the 
anticipated structural characterization of the bulk solution space: 

\begin{thm}
\label{thm:fintosemi}
If \(A_N=\P_{L,R}\A|_{\V_{L,R}}\) is regular, and
$N\geq 2\sigma+\tau$, with $\sigma = d\tau-\dim\Ker \A$, then
\(
\mathcal{M}_{L,R} = \P_{L,R}\Ker\A \oplus \mathcal{F}_L^{-} \oplus \mathcal{F}_R^{+}.  \)
\end{thm}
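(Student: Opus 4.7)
My plan is to combine Lemma \ref{lem:directsum} with Theorem \ref{thm:semi}, use the hypothesis $N\geq 2\sigma+\tau$ to place the supports of $\mathcal{F}_L^-$ and $\mathcal{F}_R^+$ in disjoint subintervals of $[L,R]$, and close directness by a dimension count built from identities already present in the preceding proofs.

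First I would apply Lemma \ref{lem:directsum} to write $\mathcal{M}_{L,R}=\P_{L,R}\mathcal{M}_{-\infty,R}+\P_{L,R}\mathcal{M}_{L,\infty}$. Substituting the semi-infinite decompositions of Theorem \ref{thm:semi}, and using $N\geq\sigma$ (implied by the hypothesis) to guarantee $\mathcal{F}_L^-\subseteq\V_{L,\overline L}\subseteq\V_{L,R}$ and $\mathcal{F}_R^+\subseteq\V_{\overline R,R}\subseteq\V_{L,R}$ so that $\P_{L,R}$ acts as the identity on each, one arrives at
\[
\mathcal{M}_{L,R}=\P_{L,R}\Ker\A+\mathcal{F}_L^-+\mathcal{F}_R^+.
\]

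To verify directness I would first observe that $N\geq 2\sigma+\tau$ forces $\overline L=L+\sigma-1<R-\sigma+1=\overline R$, so the supports $[L,\overline L]$ and $[\overline R,R]$ are disjoint and $\mathcal{F}_L^-\cap\mathcal{F}_R^+=\{0\}$. The proof of Theorem \ref{thm:semi} already shows, under the weaker condition $R-L\geq\tau$, that $\P_{L,R}$ is injective on both $\mathcal{M}_{L,\infty}$ and $\mathcal{M}_{-\infty,R}$. I would then extend this to injectivity on $\Ker\A$ itself: if $\Psi_0\in\Ker\A$ obeys $\P_{L,R}\Psi_0=0$, then $\P_{L,R}(\P_{L,\infty}\Psi_0)=0$ with $\P_{L,\infty}\Psi_0\in\mathcal{M}_{L,\infty}$ by the nesting property of Theorem \ref{softy}, forcing $\P_{L,\infty}\Psi_0=0$; since Theorem \ref{thyglorywithf} expresses every element of $\Ker\A$ as a finite linear combination of generalized-exponential sequences, none of which has bounded support, $\Psi_0=0$, and hence $\dim\P_{L,R}\Ker\A=\dim\Ker\A=d\tau-\sigma$.

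To finish, I would recycle the inclusion--exclusion identity derived inside the proof of Lemma \ref{lem:directsum},
\[
d\tau=\dim\mathcal{M}_{L,R}=\dim\mathcal{M}_{L,\infty}+\dim\mathcal{M}_{-\infty,R}-\dim\Ker\A,
\]
together with $\dim\mathcal{M}_{L,\infty}=\dim\Ker\A+\dim\mathcal{F}_L^-$ and $\dim\mathcal{M}_{-\infty,R}=\dim\Ker\A+\dim\mathcal{F}_R^+$ from Theorem \ref{thm:semi}, to conclude that $\dim\mathcal{F}_L^-+\dim\mathcal{F}_R^+=\sigma$. Combining this with the injectivity above yields
\[
\dim\P_{L,R}\Ker\A+\dim\mathcal{F}_L^-+\dim\mathcal{F}_R^+=(d\tau-\sigma)+\sigma=d\tau=\dim\mathcal{M}_{L,R},
\]
saturating the subadditivity bound and therefore forcing the sum $\P_{L,R}\Ker\A+\mathcal{F}_L^-+\mathcal{F}_R^+$ to be direct. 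The main obstacle is the extension of injectivity from $\mathcal{M}_{L,\infty}$ to $\Ker\A$, which hinges on the fact that bounded-support elements of $\Ker\A$ must vanish by the generalized-exponential structure of Theorem \ref{thyglorywithf}; once that is in hand, the rest is bookkeeping of dimensions already inherent in Lemma \ref{lem:directsum} and Theorem \ref{thm:semi}.
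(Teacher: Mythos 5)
Your proposal is correct, and it reaches the conclusion by a route that differs from the paper's at the decisive step. Both arguments start identically: Lemma \ref{lem:directsum} plus the decompositions of Theorem \ref{thm:semi}, with $N\geq 2\sigma+\tau$ ensuring $\mathcal{F}_L^{-}\subseteq\V_{L,\overline L}$ and $\mathcal{F}_R^{+}\subseteq\V_{\overline R,R}$ sit inside $\V_{L,R}$, give $\mathcal{M}_{L,R}=\P_{L,R}\Ker\A+\mathcal{F}_L^{-}+\mathcal{F}_R^{+}$. The paper then proves directness locally, by a support argument: it takes $\Psi\in(\P_{L,R}\Ker\A\oplus\mathcal{F}_L^{-})\cap\mathcal{F}_R^{+}$, notes that $\Psi$ vanishes on the window $[\overline L+1,\overline R-1]$, whose length exceeds $\tau$, and uses the injectivity-type argument of Theorem \ref{thm:semi} to kill the $\P_{L,R}\Ker\A$ component and then the $\mathcal{F}_L^{-}$ component. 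You instead prove directness globally, by saturating a dimension count: injectivity of $\P_{L,R}$ on $\Ker\A$ gives $\dim\P_{L,R}\Ker\A=d\tau-\sigma$, the inclusion--exclusion identity inside the proof of Lemma \ref{lem:directsum} together with Theorem \ref{thm:semi} gives $\dim\mathcal{F}_L^{-}+\dim\mathcal{F}_R^{+}=\sigma$, and since the three dimensions sum to $d\tau=\dim\mathcal{M}_{L,R}$ (Theorem \ref{lemker}), the sum must be direct. The paper's route is more self-contained pointwise and exhibits \emph{why} a vector in the intersection must vanish; your route buys the identity $\dim\mathcal{F}_L^{-}+\dim\mathcal{F}_R^{+}=\sigma$ as a by-product, barely uses the window condition beyond placing the supports, and makes explicit an injectivity fact ($\P_{L,R}$, and $\P_{L,\infty}$, restricted to $\Ker\A$ are injective) that the paper's Lemma \ref{lem:directsum} and Theorem \ref{thm:semi} actually rely on implicitly when they equate $\dim\P_{L,R}\Ker\A$ and $\dim\P_{L,\infty}\Ker\A$ with $\dim\Ker\A$.

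One step deserves tightening, though it is not a genuine gap. From $\P_{L,\infty}\Psi_0=0$ you conclude $\Psi_0=0$ because the basis sequences of Theorem \ref{thyglorywithf} have unbounded support; strictly, unboundedness of each basis element does not by itself exclude a cancellation of a linear combination on the half-line $j\geq L$. What you need is that a nonzero exponential-polynomial sequence with nonzero bases $z_\ell$ cannot vanish for all $j\geq L$ --- standard, since such a sequence obeys a finite-order linear recurrence, so vanishing on sufficiently many consecutive sites forces it to vanish identically. Alternatively, and entirely within the paper's toolkit: if $\Psi_0\neq 0$ is supported on $j<L$, it has a topmost nonzero entry, so its translates $\T^n\Psi_0$ are linearly independent elements of $\Ker\A$, contradicting $\dim\Ker\A<\infty$ for regular $\A$ (Theorem \ref{smithbypass}) --- exactly the device already used in the proof of Theorem \ref{thm:semi}.
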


\begin{proof} 
Because of the lower bound on $N$, 
$\P_{L,R}\mathcal{F}_L^{-} =\mathcal{F}_L^{-} $ and 
$\P_{L,R}\mathcal{F}_R^{+} = \mathcal{F}_R^{+}$. Using 
Lemma \,\ref{lem:directsum} and the direct sum decompositions 
in Thm.\,\ref{thm:semi}, the only additional result required 
to prove the theorem is
$\{\P_{L,R}\Ker\A \oplus \mathcal{F}_L^{-}\} \cap \mathcal{F}_R^{+} = \{0\}.$
By contradiction, assume that there exists some non-zero vector $\Psi$ 
in this intersection. Then $\Psi \in \mathcal{F}_L^{+}$ implies 
$\P_{L,\bar{R}-1}\Psi=0$, which can be split into two conditions 
$\P_{L,\bar{L}}\Psi=0,\ \P_{\bar{L}+1,\bar{R}-1}\Psi=0$. Since 
$\Psi \in \P_{L,R}\Ker\A \oplus \mathcal{F}_L^{-}$, we may also 
express $\Psi$ as $\Psi = \Psi_1+\Psi_2$, where $\Psi_1 \in \P_{L,R}\Ker\A$ 
and $\Psi_2 \in \mathcal{F}_L^{-}$. Now since $\P_{\bar{L}+1,\bar{R}-1}\Psi_2=0$, 
therefore the second of the two equations imply $\P_{\bar{L}+1,\bar{R}-1}\Psi_1=0$. 
The lower bound on $N$ implies that $\bar{R}-1-(\bar{L}+1)+1>\tau$, 
so that $\Psi_1=0$. Then the 
first equation leads to $\Psi_2=0$, so that $\Psi=\Psi_1+\Psi_2=0$, 
which is a contradiction. 
\end{proof}

According to the above theorem, \(\mathcal{M}_{L,R}\) consists
of three qualitatively distinct contributions: bulk solutions 
associated to \(\P_{L,R}\Ker \A\), bulk solutions ``localized''
near \(L\), and bulk solutions ``localized'' near \(R\), with the latter
two types being the emergent solutions of finite support. 
Remarkably, this characterization brings together all three 
relevant length scales of our eigensystem problem: the size \(N\equiv R-L+1\) 
of the BBT matrix of interest, the distance \(\tau\equiv q'-p'\), which 
is the lower bound on \(N\) associated to a non-trivial bulk, 
and the support bound \(\sigma \equiv d\tau-\dim\Ker \A\) for solutions
of finite support.
Formally, we may obtain a basis of the bulk solution space \(\mathcal{M}_{L,R}\) 
by combining bases of the three subspaces
$\P_{L,R}\Ker\A,\ \mathcal{F}_{L}^{-}$, and $\mathcal{F}_{R}^{+}$.
We have extensively discussed bases for \(\P_{L,R}\Ker\A\), recall in 
particular Theorem \ref{smithbypass} for \(\A\) regular. 
Let us now define square matrices
\[
K^- \equiv \P_{L-p',\overline{L}-p'}\A|_{\V_{L,\overline{L}}}\, 
\quad \mbox{and}\quad  
K^+ \equiv \P_{\overline{R}-q',R-q'}\A|_{\V_{\overline{R},R}}. 
\]
We may then further give the following characterization:

\begin{coro}
\label{cor:kmatrix}
If $N\geq\tau+2\sigma$, then 
$\mathcal{F}_L^{-} = \Ker K^-$ and $\mathcal{F}_R^{+} = \Ker K^+$.
\end{coro}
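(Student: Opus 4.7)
The two equalities are mirror images of each other, so I will sketch only $\mathcal{F}_L^- = \Ker K^-$; the statement for $\mathcal{F}_R^+$ follows by the obvious left--right reversal. The easy inclusion is immediate from Theorem \ref{thm:semi}: any $\Psi \in \mathcal{F}_L^-$ lies in $\V_{L,\overline{L}} \cap \mathcal{M}_{L,\infty}$, so $\P_{L-p',\infty}\A\Psi = 0$, and truncation by the further projector $\P_{L-p',\overline{L}-p'}$ gives $K^-\Psi = 0$.

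For the converse, take $\Psi \in \Ker K^- \subseteq \V_{L,\overline{L}}$. My first step is a short bandwidth argument promoting the finite-window condition $\P_{L-p',\overline{L}-p'}\A\Psi = 0$ to $\P_{L-p',\infty}\A\Psi = 0$: for any $j > \overline{L}-p'$ and any $r \in [p,q]$, the inequality $p \geq p' = \min(p,0)$ yields $j+r > \overline{L}$, so $|\psi_{j+r}\rangle = 0$ by support of $\Psi$, and $(\A\Psi)_j$ vanishes automatically. Hence $\Psi \in \mathcal{M}_{L,\infty}$. I then invoke the direct-sum decomposition $\mathcal{M}_{L,\infty} = \P_{L,\infty}\Ker\A \oplus \mathcal{F}_L^-$ from Theorem \ref{thm:semi} to split $\Psi = \Psi_1 + \Psi_2$ with $\Psi_1 \in \P_{L,\infty}\Ker\A$ and $\Psi_2 \in \mathcal{F}_L^-$. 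Since $\mathcal{F}_L^- \subseteq \V_{L,\overline{L}}$, both $\Psi$ and $\Psi_2$ have support in $[L,\overline{L}]$, so $\Psi_1 = \Psi - \Psi_2$ does as well, and the corollary reduces to the claim $\P_{L,\infty}\Ker\A \cap \V_{L,\overline{L}} = \{0\}$.

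This last point is the main obstacle, and my plan is to harvest it from a fact already established inside the proof of Theorem \ref{thm:semi}: the unilateral shift $\P_{L,\infty}\T$ acts invertibly on $\P_{L,\infty}\Ker\A$, since that space coincides with the direct sum of the non-zero generalized eigenspaces of $\P_{L,\infty}\T|_{\mathcal{M}_{L,\infty}}$. On the other hand, an easy induction based on $\T\V_{L,\overline{L}} \subseteq \V_{L-1,\overline{L}-1}$ followed by the projection $\P_{L,\infty}$ shows $(\P_{L,\infty}\T)^k \Psi_1 \in \V_{L,\overline{L}-k}$, a space that becomes $\{0\}$ as soon as $k \geq \sigma = \overline{L} - L + 1$. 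Invertibility of $\P_{L,\infty}\T$ on $\P_{L,\infty}\Ker\A$ then forces $\Psi_1 = 0$, giving $\Psi = \Psi_2 \in \mathcal{F}_L^-$ and closing the argument. The hypothesis $N \geq \tau + 2\sigma$ enters only to keep the analysis consistent with Theorem \ref{thm:fintosemi}, guaranteeing that $\V_{L,\overline{L}}$ and $\V_{\overline{R},R}$ sit disjointly inside $\V_{L,R}$ so that the left and right finite-support pieces do not interfere.
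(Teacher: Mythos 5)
Your proof is correct, but it follows a genuinely different route from the paper's. The paper works entirely at the level of the finite system: it invokes Theorem \ref{thm:fintosemi} to write $\mathcal{M}_{L,R} = \P_{L,R}\Ker\A \oplus \mathcal{F}_L^{-} \oplus \mathcal{F}_R^{+}$, notes that the two summands other than $\mathcal{F}_L^{-}$ meet $\V_{L,\overline{L}}$ trivially (regularity of $\A$ being what kills $\P_{L,R}\Ker\A\cap\V_{L,\overline{L}}$), concludes $\mathcal{F}_L^{-}=\Ker \P_{L-p',R-q'}\A|_{\V_{L,\overline{L}}}$, and then identifies that restricted operator with $K^-$ by the same kind of bandwidth/support estimate ($\overline{L}-r\leq\overline{L}-p'$) that you use to promote the finite-window condition to $\P_{L-p',\infty}\A\Psi=0$. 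You instead bypass Theorem \ref{thm:fintosemi} altogether: you show $\Ker K^-\subseteq \mathcal{M}_{L,\infty}$, split along the semi-infinite decomposition $\mathcal{M}_{L,\infty}=\P_{L,\infty}\Ker\A\oplus\mathcal{F}_L^{-}$ of Theorem \ref{thm:semi}, and eliminate the $\P_{L,\infty}\Ker\A$ component by replaying the unilateral-shift dichotomy from inside that theorem's proof (iterates of $\P_{L,\infty}\T$ push a sequence supported in $\V_{L,\overline{L}}$ to zero in $\sigma$ steps, while the shift is invertible on $\P_{L,\infty}\Ker\A$). Each approach buys something: the paper's is shorter once Theorem \ref{thm:fintosemi} is in hand and keeps everything in the finite window where the corollary is actually used; yours makes transparent that the identification $\mathcal{F}_L^{-}=\Ker K^-$ is intrinsic to the semi-infinite problem and does not really need the hypothesis $N\geq\tau+2\sigma$ (which in the paper's argument is what guarantees the window $[L-p',\overline{L}-p']$ sits inside $[L-p',R-q']$ and that the three summands coexist inside $\V_{L,R}$). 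Your characterization of the role of that hypothesis is therefore accurate for your argument, though slightly understated for the paper's.
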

\begin{proof}
We will show only that $\mathcal{F}_L^{-} = \Ker K^-$;
the other equality follows analogously.
By Theorem \ref{thm:fintosemi}, $\mathcal{F}_L^{-} \subseteq \V_{L,\overline{L}}$
and 
\(\mathcal{F}_R^{+}\cap\V_{L,\overline{L}}
=\{0\}=
\P_{L,R}\Ker \A\cap\V_{L,\overline{L}}
\) (remember that \(\A\) is regular). Hence, 
$\mathcal{F}_L^{-}=\Ker \P_{L-p',R-q'}\A|_{\V_{L,\overline{L}}}$, and 
\(
\P_{L-p',R-q'}\sum_{r=p}^{q}a_r\T^r|_{\V_{L,\overline{L}}}
=\sum_{r=p}^{q}\P_{L-p',\overline{L}-r}a_r\T^r|_{\V_{L,\overline{L}}}=
\P_{L-p',\overline{L}-p'}\A|_{\V_{L,\overline{L}}}\equiv
K^-,
\)
because \(\overline{L}-r\leq\overline{L}-p\leq\overline{L}-p' \). 
\end{proof}

\begin{remark}
\label{rem:singular}
If $\A$ is singular, the  kernels of $K^+$ and $K^-$ 
are still contained in the bulk solution space. Together with 
Theorem \ref{softy}, this observation implies that 
\[
\mathcal{M}_{L,R} \supseteq 
 \Span\big(\P_{L,R}\Ker\A  \cup (\mathcal{F}_L^{-} \oplus \mathcal{F}_R^{+})\big),
\quad\quad \A\ \mbox{singular}, 
\]
\end{remark}
\noindent 
however, additional solutions to the bulk equation may  exist on a case-by-case basis.

\subsection{An exact result for the boundary equation}

As we already remarked in Sec. \ref{seckerinclusion}, the boundary equation is 
generally associated to an unstructured matrix, the boundary matrix [Definition \ref{boundmat}]. 
While its solution thus relies in general on numerical methods, 
there is one exact result that follows from our work so far. We isolate it here:

\begin{thm}
\label{corboundmat}
If \(A_N\) is regular, then \(\dim \Ker P_BA_N=\dim \Range P_\partial\),
and the boundary matrix of the corner-modified BBT matrix \(C=A_N+W\)
is square independently of the corner modification \(W\). 
\end{thm}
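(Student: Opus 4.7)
The proof plan is essentially a bookkeeping exercise that ties together two prior results. The boundary matrix has, by construction, $n_\partial = \dim\Range P_\partial$ rows and $n_B = \dim \Ker P_B A_N = \dim \mathcal{M}_{L,R}$ columns (cf.\ Definition \ref{boundmat}). From Eq.\,\eqref{rangeb} and the subsequent remark we already know $n_\partial = d\tau$. So the entire claim reduces to showing $n_B = d\tau$, together with an observation that this count does not see $W$.

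First I would invoke Theorem \ref{lemker}: since $A_N$ is regular and the bulk is non-empty (this is implicit in the setup, as BBT matrices with empty bulk are explicitly excluded from consideration), we have $\dim \mathcal{M}_{L,R} = d\tau$. Recalling the identification $\mathcal{M}_{L,R} = \Ker P_B A_N$ (Definition \ref{mlr}), this immediately gives $n_B = d\tau = n_\partial$, establishing squareness of $B$.

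Next I would verify independence of $W$. Here I would simply point to the general fact recorded in Section \ref{seckerinclusion}: if $M' = M + W$ with $P_2 W = 0$, then $\Ker P_2 M' = \Ker P_2 M$. Specializing to $M = A_N$, $M' = C$, $P_2 = P_B$, the defining property $P_B W = 0$ of a corner modification gives $\Ker P_B C = \Ker P_B A_N$. Hence $n_B$ is independent of $W$, and the row count $n_\partial$ depends only on the bandwidth $(p,q)$. The boundary matrix is therefore square of size $d\tau \times d\tau$ regardless of how the corners are modified.

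There is really no main obstacle: the content of the theorem lies almost entirely in Theorem \ref{lemker}, whose proof handled the delicate part (ruling out $\dim \mathcal{M}_{L,R} > d\tau$ via a Smith-normal-form argument on $\A$). The present statement is best viewed as a structural corollary that highlights why the kernel-by-projectors approach is computationally attractive: the compatibility matrix one actually has to solve is square, with size fixed by the bandwidth and the block dimension, and entirely insensitive to system size $N$ or to the specific choice of boundary conditions encoded in $W$.
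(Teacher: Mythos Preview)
Your proposal is correct and follows essentially the same approach as the paper: both identify the row count as $\dim\Range P_\partial = d\tau$ by definition and invoke Theorem~\ref{lemker} for the column count $\dim\Ker P_BA_N = d\tau$. Your explicit remark that $P_BW=0$ forces $\Ker P_BC = \Ker P_BA_N$ is a slight elaboration on what the paper leaves implicit in writing the compatibility map as $P_\partial(A_N+W)|_{\Ker P_BA_N}$, but the substance is identical.
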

\begin{proof}
The boundary matrix is the matrix of the compatibility map
\(B=P_\partial (A_N+ W)|_{\Ker P_BA_N}\), see Section\,\ref{seckerinclusion}. 
Hence, the number of rows of $B$ is determined by 
\( \dim \Range P_\partial= d\tau\). The number of columns is determined by 
the dimension of $\text{Ker} P_BA_N$, which, by Lemma \ref{lemker}, is also 
also $d\tau$ if \(A_N\) is regular. 
\end{proof}

\subsection{Multiplication of corner-modified banded block-Toeplitz matrices }

In general, the product of two BBT matrices is not a 
BBT matrix (see Ref.\,\cite{fardad09} for a lucid discussion of this point). 
However, the product of two corner-modified BBT matrices is again a corner
modified BBT matrix, provided they are large enough.  As a consequence, the 
problem of determining the generalized kernel of a corner-modified BBT matrix \(C\) 
is equivalent to that of determining the kernel of \(C^\kappa\) in the same class. 

\begin{thm} 
\label{superduper}
Let $C_i=A_{N,i}+W_i$, $i=1,2$, denote corner-modified BBT 
matrices, with \(A_{N,i}=\P_{L,R}\A_i|_{\V_{L,R}}\) of bandwidth 
\((p_i,q_i)\) and {\(2(N-1)>q_1-p_1+q_2-p_2 \).} Then,
the product \(  C_1C_2= A_N+W \)
is a corner-modified BBT matrix of bandwidth \((p_1+p_2,q_1+q_2)\),
with \(A_N= \P_{L,R} \A_1\A_2|_{\V_{L,R}}\) and $W$ a corner modification for
this bandwidth.
\end{thm}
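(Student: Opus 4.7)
My plan is in two parts. First, I would identify $A_N := \P_{L,R}\,\A_1\A_2\big|_{\V_{L,R}}$ as a BBT matrix of bandwidth $(p_1+p_2, q_1+q_2)$. Since matrix Laurent polynomials form an algebra and the representation $\rho_d$ of Section~\ref{secbbl} is a homomorphism, the product $\A_1\A_2$ is itself a BBL matrix of bandwidth $(p_1+p_2, q_1+q_2)$; restricting to $\V_{L,R}$ then yields a BBT matrix of the same bandwidth. The size hypothesis $2(N-1) > q_1 - p_1 + q_2 - p_2$ is precisely the BBT admissibility condition $-N+1 < p_1+p_2 \le q_1+q_2 < N-1$ from Section~\ref{seccmBBT}.

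The remaining task is to prove $W := C_1C_2 - A_N$ satisfies $P_B^{(p,q)} W = 0$, where I write $(p,q) := (p_1+p_2, q_1+q_2)$. I would expand
\[ W = (A_{N,1}A_{N,2} - A_N) + A_{N,1} W_2 + W_1 A_{N,2} + W_1 W_2 \]
and show that each summand is annihilated on the left by $P_B^{(p,q)}$. For the first summand, rewriting $A_{N,1}A_{N,2} - A_N = -\P_{L,R}\,\A_1\,(\mathds{1}-\P_{L,R})\,\A_2\big|_{\V_{L,R}}$ reduces the problem to support tracking: for $\Psi \in \V_{L,R}$, the vector $(\mathds{1}-\P_{L,R})\A_2\Psi$ is supported in $[L-q_2',L-1]\cup[R+1,R-p_2']$, and applying $\A_1$ followed by projection back to $[L,R]$ confines the surviving row-support to the boundary rows $[L,L-p_1'-1]\cup[R-q_1'+1,R]$ of $(p_1,q_1)$. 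For the other three summands, the row-support of each $W_i$ is by hypothesis the boundary of $(p_i,q_i)$; left or right multiplication by the opposing BBT factor can propagate this row-support only by its bandwidth, giving a similar confinement.

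The main obstacle will be the combinatorial bookkeeping needed to verify that the four individually controlled row-supports nest inside the target boundary strip $[L, L - p' - 1] \cup [R - q' + 1, R]$ of the combined bandwidth. The key identities $p' = \min(p_1+p_2, 0)$ and $q' = \max(q_1+q_2, 0)$ yield elementary inequalities that control the nesting, and the size hypothesis on $N$ ensures that the top and bottom strips remain disjoint so that none of the four contributions can encroach on the bulk of $(p,q)$. A careful accounting of these boundary supports yields $P_B^{(p,q)} W = 0$, so $W$ is a corner modification for bandwidth $(p_1+p_2, q_1+q_2)$, completing the proof.
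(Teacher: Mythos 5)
Your proposal is correct and takes essentially the same route as the paper's own proof: the paper also reduces the claim to showing $P_B\big(C_1C_2-\P_{L,R}\A_1\A_2|_{\V_{L,R}}\big)=0$, eliminating the $W_1,W_2$ cross terms by row-support/projector identities ($P_BP_{B,1}=P_B$, $P_{B,1}W_1=0$, $P_BA_{N,1}P_{\partial,2}=0$) and treating $A_{N,1}A_{N,2}-A_N$ via precisely your support tracking of $\P_{L,R}\A_1(\mathds{1}-\P_{L,R})\A_2|_{\V_{L,R}}$. The only difference is organizational: the paper groups your four summands into two steps, so the substance of the argument is the same.
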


\begin{proof}
We proceed by first showing that 
\begin{eqnarray}
\label{times1}
P_B C_1C_2=P_B(A_{N,1}+W_1)(A_{N,2}+W_2)=P_BA_{N,1}A_{N,2},
\end{eqnarray}
where \(P_B\) is the bulk projector for bandwidth
\((p_1+p_2,q_1+q_2)\). In particular, \(P_BP_{B,1}=P_B\), where 
\(P_{B,1}\) is the bulk projectors for bandwidth \((p_1,q_1)\). 
Hence,   
$P_{B}W_1(A_{N,2}+W_2)=P_BP_{B,1}W_1(A_{N,2}+W_2)=0$ because
\(P_{B,1}W_1=0\). Morever, since
\[
P_BA_{N,1}P_{\partial,2}|\psi\rangle=P_B\sum_{j=1}^{-p_2'-p_1'}|j\rangle|\phi_j\rangle+
P_B\sum_{N-q_2'-q_1'+1}^N|j\rangle|\phi_j\rangle=0, \quad \forall 
|\psi\rangle, \]
one concludes that \(P_BA_{N,1}P_{\partial,2}=0\) 
and \(P_BA_{N,1}P_{\partial,2}W_2=P_BA_{N,1}W_2=0\) as well. 
The next step is to show that
\begin{eqnarray}\label{times2}
P_{B,1}A_{N,1}A_{N,2}=P_{B,1}\A_1\A_2|_{\V_{L,R}}.
\end{eqnarray}
By definition, \(P_{B,1}A_{N,1}A_{N,2}=P_{B,1}\P_{L,R}\A_1\pn\A_2|_{\V_{L,R}}=
P_{B,1} \A_1\P_{L,R}\A_2|_{\V_{L,R}}\). However,   
\(
P_{B,1} \A_1\A_2|_{\V_{L,R}}-P_{B,1} \A_1\P_{L,R}\A_2|_{\V_{L,R}}=
P_{B,1}\A_1({\bf 1}-\P_{L,R}) \A_2|_{\V_{L,R}}=0.
\)
The reason is that, on the one hand, the sequences in the range of  
\(({\bf 1}-\P_{L,R})\A_2|_{\V_{L,R}}\) necessarily vanish on sites 
\(L\) to \(R\). On the other hand, acting with \(\A_1\) on any 
such sequence produces a sequence that necessarily vanishes on 
sites \(L-p_1'\) to \(R-q_1'\). 
Such sequences are annihilated by \(P_{B,1}\). Combining Eqs.\,\eqref{times1} 
and \eqref{times2}, we conclude that 
\(
P_B \A_1\A_2|_{\V_{L,R}}=P_BA_{N,1}A_{N,2}=P_BC_1C_2,
\)
and so the bulk of the product \(C_1C_2\) coincides with the bulk of the
BBT matrix \(A_N=\P_{L,R}\A_1\A_2|_{\V_{L,R}}\).  
\end{proof}

\noindent 
The following corollary follows immediately, by repeated application of the above result:

\begin{coro}
\label{corproduct}
Let \(C=A_N+W\) denote a corner-modified BBT transformation of bandwidth
\((p,q)\) and \(A_N=\P_{L,R}\A|_{\V_{L,R}}\). Then,
 $C^\kappa= A_N^{(\kappa)} +W_\kappa$ is a corner-modified BBT matrix as 
as long as as long as \(2(R-L)>\kappa(q-p)\). The bandwidth 
of $A_N^{(\kappa)}\equiv \P_{L,R}\A^\kappa|_{\V_{L,R}}$ is 
\((\kappa p,\kappa q)\), and \(W_\kappa\) a corner modification for this 
bandwidth. 
\end{coro}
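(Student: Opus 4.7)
The plan is to prove Corollary \ref{corproduct} by induction on $\kappa$, using Theorem \ref{superduper} as the one-step multiplication rule. The base case $\kappa=1$ is tautological: $C^1 = A_N + W$ is a corner-modified BBT matrix of bandwidth $(p,q)$ by assumption, and the hypothesis $2(R-L) > q-p$ is built into the definition of a BBT matrix with non-empty bulk.

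For the inductive step, assume the claim for some $\kappa-1 \geq 1$, that is, $C^{\kappa-1} = A_N^{(\kappa-1)} + W_{\kappa-1}$ is a corner-modified BBT matrix of bandwidth $((\kappa-1)p,(\kappa-1)q)$, with $A_N^{(\kappa-1)} = \P_{L,R}\A^{\kappa-1}|_{\V_{L,R}}$, valid whenever $2(R-L) > (\kappa-1)(q-p)$. I write $C^\kappa = C \cdot C^{\kappa-1}$ and apply Theorem \ref{superduper} with $C_1 = C$ (bandwidth $(p,q)$) and $C_2 = C^{\kappa-1}$ (bandwidth $((\kappa-1)p, (\kappa-1)q)$). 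The theorem's hypothesis $2(N-1) > (q_1-p_1) + (q_2-p_2)$ becomes exactly $2(R-L) > (q-p) + (\kappa-1)(q-p) = \kappa(q-p)$, which is the hypothesis of the corollary; this also implies the weaker inductive hypothesis $2(R-L) > (\kappa-1)(q-p)$, so the induction is consistent.

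Theorem \ref{superduper} then yields that $C^\kappa = A_N + W$ is corner-modified BBT of bandwidth $(p + (\kappa-1)p,\, q + (\kappa-1)q) = (\kappa p, \kappa q)$, with bulk given by
\[
\P_{L,R}\, \A \cdot \A^{\kappa-1}\big|_{\V_{L,R}} = \P_{L,R}\, \A^\kappa\big|_{\V_{L,R}} \equiv A_N^{(\kappa)},
\]
where I used that $\A$ and $\A^{\kappa-1}$ are honest BBL transformations whose composition is $\A^\kappa$ (in the algebra of BBL matrices, multiplication is just multiplication in the matrix Laurent polynomial algebra, as in Section \ref{secbbl}). The residue $W_\kappa \equiv C^\kappa - A_N^{(\kappa)}$ is then a corner modification for the bandwidth $(\kappa p, \kappa q)$, as guaranteed by Theorem \ref{superduper}.

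The only subtle point is bookkeeping: one must check that the bandwidth condition of Theorem \ref{superduper} is met at the inductive step using $\kappa(q-p)$ rather than the loose sum of the individual bandwidths, and verify that the bulk really telescopes to $\P_{L,R}\A^\kappa|_{\V_{L,R}}$. The latter is immediate from associativity of the BBL algebra together with the identification $A_N^{(\kappa-1)} = \P_{L,R}\A^{\kappa-1}|_{\V_{L,R}}$ fed into the conclusion of Theorem \ref{superduper}. There is no genuine obstacle; the corollary is essentially a packaging statement, and the work was done in Theorem \ref{superduper}.
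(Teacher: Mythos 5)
Your proposal is correct and follows essentially the same route as the paper, which obtains the corollary precisely ``by repeated application'' of Theorem \ref{superduper}; your induction merely makes that repetition explicit, including the bandwidth bookkeeping $2(R-L)>\kappa(q-p)$ and the telescoping of the bulk to $\P_{L,R}\A^{\kappa}|_{\V_{L,R}}$.
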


\section{Algorithms} 
\label{sec:algorithms}

\subsection{The kernel algorithm}
\label{seckeralgo}

Having provided a rigorous foundation to our approach of kernel determination by 
projectors, we now outline an algorithmic procedure for constructing a basis of the 
kernel $\mathcal{M}_{1,N}$ of a corner-modified BBT matrix, $C= A_N+W$.
The input $C$ is given in the form of the matrix coefficients $\{a_r,\ p\le r \le q\}$ 
of $A(w,w^{-1})$, and the block-entries $[W]_{ij} \equiv \langle i | W | j \rangle$ of 
its corner-modification $W$. Throughout this section,
we assume that 
$A(w,w^{-1})$ is {\em regular}.  
We divide the algorithm in two steps: 
I. Solving the bulk equation; II. Constructing the boundary matrix and solving the 
associated kernel equation. 

\subsubsection*{I. Solution of the bulk equation.}
\label{secbulkeq}

We proved in Lemma \ref{softy} that if the principal coefficients are invertible, then 
$\mathcal{M}_{1,N}= \P_{1,N}\text{Ker}\A$. For non-invertible principal coefficients, 
$\mathcal{M}_{1,N}$ is the direct sum of   $\P_{1,N}\Ker\A$, 
$\mathcal{F}_1^{-}$, and $\mathcal{F}_N^{+}$ [Theorem \ref{thm:fintosemi}]. 
We now construct a basis of each of these subspaces.
According to Theorem \ref{smithbypass}, 
$\Ker\A = \bigoplus_{\ell=1}^{n}\Ker\A \cap \mathcal{T}_{z_\ell,s_\ell},$
where $\{z_\ell\}_{\ell=1}^{n}$ are {\em non-zero} roots of the characteristic 
equation \( \det A(w, w^{-1}) =0 \), 
and $\{s_\ell\}_{\ell=1}^{n}$ their multiplicities [see Eq. (\ref{eq:smith})].
Each of the subspaces $\mathcal{T}_{z_\ell,s_\ell}$ is invariant under the action 
of $\A$, and $\A|_{\mathcal{T}_{z_\ell,s_\ell}}$ has representation $A_{s_\ell}(z_\ell)$ 
in the canonical basis of Lemma \ref{sdamaps}. Therefore, a block-vector 
$|u\rangle = \big[|u_1\rangle \dots |u_{s_\ell}\rangle\big]^{\rm T}$, 
$\{|u_v\rangle \in \mathds{C}^d\}_{v=1}^{s_\ell}$, belonging to $\text{Ker}A_{s_\ell}(z_\ell)$, 
represents the sequence 
\begin{eqnarray*}
\Psi_{\ell s} \equiv \sum_{v=1}^{s_\ell}\Phi_{z_\ell,v}|u_v\rangle \in \Ker\A \cap \mathcal{T}_{z_\ell,s_\ell}
\end{eqnarray*}
Its projection on $\vone$, namely, 
\[ 
|\psi_{\ell s}\rangle \equiv \P_{1,N} \Psi_{\ell s} = \sum_{v=1}^{s_\ell}|z_\ell,v\rangle|u_v\rangle 
\in {\bm P}_{1,N} \Ker {\bm A}, 
 \quad |z_\ell,v\rangle = \sum_{j=1}^{N}j^{(v-1)}z^{j-v+1}|j\rangle,\]
is a solution of the bulk equation. A basis $\mathcal{B}_\text{ext}^{(\ell)} \equiv \{|\psi_{\ell s}\rangle\}$ of $\P_{1,N}\Ker\A$, corresponding to root $z_\ell$, may thus be inferred from a basis $\{|u_{\ell s}\rangle\}_{s=1}^{s_\ell}$ of ${\rm Ker}\,A_{s_\ell}(z_\ell)$, where 
\begin{eqnarray}
|u_{\ell s}\rangle = \big[|u_{\ell s 1}\rangle \dots |u_{\ell s s_\ell}\rangle\big]^{\rm T},\quad \{|u_{\ell s v}\rangle \in \mathds{C}^d\}_{v=1}^{s_\ell} \ \  \forall \ell,s.
\label{eq:b1}
\end{eqnarray}
A basis of $\P_{1,N}\Ker\A$ is given by 
$\mathcal{B}_\text{ext}\equiv \bigcup_{\ell=1}^{n}\mathcal{B}_\text{ext}^{(\ell)},$ 
with this basis being stored in the form of vectors 
$\big\{|u_{\ell s}\rangle\,|\,s=1,\dots,s_\ell;\ \ell=1,\dots, n\}$,  
with $|\mathcal{B}_\text{ext}|=\dim \P_{1,N}\Ker\A  = \sum_{\ell=1}^n s_\ell$.

If the principal coefficients of $\A$ are not invertible, one needs to additionally obtain bases of $\mathcal{F}_1^{-}$ and $\mathcal{F}_N^{+}$. According to Corollary\,\ref{cor:kmatrix}, the kernel of $\mathcal{F}_1^{-}$ ($\mathcal{F}_N^{+}$) coincides with ${\rm Ker}\,K^-$ (${\rm Ker}\,K^+$). In the standard bases $\{|j\rangle\}_{j=1}^{\sigma}$ and $\{|j\}_{j=1-p}^{\sigma-p}$ of the subspaces $\V_{1,\sigma}$ and $\V_{1-p,\sigma-p}$, respectively, $K^-$ is a block-matrix of size $d\sigma \times d\sigma$, with block-entries 
\begin{eqnarray*}
\label{matkminus}
[K^{-}]_{jj'} = \left\{ \begin{array}{cl} a_{j-j'+p'} & \text{if} \ \ j' \le j\le j'+\tau \\ 0 & \text{otherwise} 
\end{array} \right. , \quad 1\le j, j' \le \sigma.
\end{eqnarray*}
Every vector 
$|u\rangle = \big[|u_1\rangle \dots |u_{\sigma}\rangle\big]^{\rm T}$, 
with $\{|u_j\rangle\in \mathds{C}^d\}_{j=1}^{\sigma}$, 
in $\text{Ker}K^-$ provides a corresponding (finite-support) solution of the bulk equation, namely, 
\begin{eqnarray*}
|\psi\rangle = \sum_{j=1}^{\sigma}|j\rangle|u_j\rangle \in \mathcal{F}_1^{-}.
\end{eqnarray*}
Accordingly, a basis 
$\mathcal{B}^- \equiv \{|\psi_s^{-}\rangle\}_{s=1}^{s_{-}}$ 
of $\mathcal{F}_1^{-}$ may be stored as the basis $\{|u_s^{-}\rangle\}_{s=1}^{s_-}$ 
of $\Ker K^{-}$, where $s_- \equiv {\rm dim}(\Ker K^-)$ and 
\begin{eqnarray}
|u_s^{-}\rangle = 
\big[|u_{s 1}^-\rangle \dots |u_{s \sigma}^-\rangle\big]^{\rm T},\quad \{|u_{s j}^-\rangle 
\in \mathds{C}^d\}_{j=1}^{\sigma} \ \forall s.
\label{eq:b2}
\end{eqnarray}
Similarly, a basis of $\mathcal{F}_N^{+}$ can be obtained from a basis of $K^+$, with entries
\begin{eqnarray*}
\label{matkplus}
[K^{+}]_{jj'} = \left\{ \begin{array}{cl} a_{j-j'+q'} & \text{if} \ \ j \le j' \le j+\tau \\ 0 & \text{otherwise} \end{array} \right. , \quad 1\le j, j' \le \sigma.
\end{eqnarray*}
In this case, each $|u\rangle = \big[|u_1\rangle \dots |u_{\sigma}\rangle\big]^{\rm T}$, 
$\{|u_j\rangle\in \mathds{C}^d\}_{j=1}^{\sigma}$, in $\text{Ker} K^+$ represents the 
(finite-support) solution of the bulk equation given by 
\begin{eqnarray*}
|\psi\rangle = \sum_{j=1}^{\sigma}|N-\sigma+j\rangle|u_j\rangle \in \mathcal{F}_N^{+}.
\end{eqnarray*}
Then a basis 
$\mathcal{B}^+ \equiv \{|\psi_s^{-}\rangle\}_{s=1}^{s_{+}}$ of 
$\mathcal{F}_N^{+}$ is stored as the basis 
$\{|u_s^{+}\rangle\}_{s=1}^{s_+}$ of $\Ker K^{+}$, where 
$s_+ \equiv {\rm dim}(\Ker K^+)$ and 
\begin{eqnarray}
|u_s^{+}\rangle = 
\big[|u_{s 1}^+\rangle \dots |u_{s \sigma}^+\rangle\big]^{\rm T},\quad \{|u_{s j}^+\rangle\in \mathds{C}^d\}_{j=1}^{\sigma} \ \forall s.
\label{eq:b3}
\end{eqnarray}
If the principal coefficients of $\A$ are invertible, then both $\mathcal{B}^{-}$ and $\mathcal{B}^{+}$ are empty.  The procedure is summarized in box \ref{algo:bulk}.

\begin{figure}[t]
\begin{myalg}{Kernel algorithm I: Solution of the bulk equation}
\label{algo:bulk}
\hspace{3mm} {\sc Input:} Matrix Laurent polynomial $A(w,w^{-1})$ and $N$.
\begin{enumerate}[leftmargin=0.5in, rightmargin=0.2in]

\item Find all non-zero roots of $\det A(z,z^{-1})=0$. Let these be denoted by $\{z_\ell\}_{\ell=1}^{n}$.

\item \label{step1.3} For each root, construct the matrix $A_{s_\ell}(z_\ell)$.

\item \label{step1.4} Find a basis $\{|u_{\ell s}\rangle\}_{s=1}^{s_\ell}$ of the kernel of $A_{s_\ell}(z_\ell)$.  

\item If the principal coefficients are not invertible, construct the matrices $K^-$ and $K^+$ with block-entries given in Eqs.\,(\ref{matkminus})-(\ref{matkplus}).

\item Compute bases $\{|u_s^{-}\rangle\}_{s=1}^{s_-}$ and $\{|u_s^{+}\rangle\}_{s=1}^{s_+}$ of ${\rm Ker}\,K^-$ and ${\rm Ker}\,K^+$ respectively.

\end{enumerate}
\hspace{3mm} {\sc Output:} $\{z_\ell\}_{\ell=1}^{n}$, $\big\{\{|u_{\ell s}\rangle\}_{s=1}^{s_\ell}\big\}_{\ell=1}^{n}$, $\{|u_s^-\rangle\}_{s=1}^{s_-}$, $\{|u_s^+\rangle\}_{s=1}^{s_+}$.
\medskip
\end{myalg}
\end{figure}

\subsubsection*{II. Construction of the boundary matrix and solution of the boundary equation.} 
The union of the three bases, $\mathcal{B}\equiv \mathcal{B}_\text{ext} \cup  \mathcal{B}^{-}\cup \mathcal{B}^{+}$, 
provides a basis of $\mathcal{M}_{1,N}$, the entire solution space of the bulk equation. As long as the matrix Laurent polynomial $A(w,w^{-1})$ is regular, the number of basis vectors in $\mathcal{B}$ is $d\tau$ [Lemma \ref{lemker}]. Let $\{ |\psi_s\rangle, s = 1,\dots,d\tau\}$ be the basis vectors in $\mathcal{B}$, where each $|\psi_s\rangle$ is expressible as $|\psi_s\rangle = \sum_{j=1}^{N}|j\rangle|\psi_{sj}\rangle$. We next construct a matrix representation of the boundary map $B \equiv P_\partial C\big|_{\mathcal{M}_{1,N}}$, using $\mathcal{B}$ as the basis of $\mathcal{M}_{1,N}$. The entries of this matrix are then 
\begin{equation}
\label{bmatentries}
{[B]_{bs}= \langle b| B |\psi_s\rangle = \langle b| (A_N+W) |\psi_s\rangle } = 
\sum_{r=\max(p,-b+1)}^{\min(q,N-b)}a_r|\psi_{s \,b+r}\rangle + \sum_{j=1}^{N}[W]_{b\,j}|\psi_{s\, j}\rangle,
\end{equation}
where, as noted, $s=1,\dots,d\tau$ and $b$ takes the values given in Eq.\,(\ref{rangeb}).
Note that if the corner modification $W$ is {\em symmetrical}, we may further observe that 
\begin{eqnarray*}
\hspace*{-1cm}
{[B]_{bs} =  \langle b | (A_N+W) |\psi_s\rangle = 
 \langle b | A_N|\psi_s\rangle + \langle b |(P_\partial W Q_\partial) Q_\partial |\psi_s\rangle, \quad \forall s.}
\end{eqnarray*}
{The entries of $B$ may then be computed more efficiently by using 
\begin{equation}
\label{bmatentriessym}
\langle b |(P_\partial W Q_\partial) Q_\partial |\psi_s\rangle = \left(\sum_{j=1}^{-p} + \sum_{j=N-q+1}^{N} \right)[W]_{bj}|\psi_{sj}\rangle , 
\end{equation}
which makes it clear that the number of terms in each sum is independent of $N$.}

The final step is to construct a basis of ${\text{Ker}}\, C$ from the boundary matrix. We compute such a 
basis in the form $\{\bm{\alpha}_k\}_{k=1}^{n_C}$,  where each basis vector is expressed as
\begin{eqnarray*}
\bm{\alpha}_k = \big[\alpha_1 \dots \alpha_{d\tau}\big]^{\rm T},\ \{\alpha_{ks} \in \mathds{C}\}_{s=1}^{d\tau}.
\end{eqnarray*}
The entries $\alpha_{ks}$ of each $\bm{\alpha}_k$ provide the coefficients of the bulk solutions $|\psi_s\rangle$ in the linear combination, that forms a kernel vector of $C$. For instance, 
$ |\epsilon_k\rangle = \sum_{s=1}^{d\tau}\alpha_{ks} |\psi_s\rangle $
is a vector in $\text{Ker} \, C$. Then $\{|\epsilon_k\rangle\}_{k=1}^{n_C}$ 
form a basis of $\Ker C$. The block-entries of each vector $|\epsilon_k\rangle$ can be easily calculated, since 
$\langle j|\epsilon_k\rangle = \sum_{s=1}^{d\tau}\alpha_{ks} |\psi_{sj}\rangle,$ for all $k,$
and $|\psi_{sj}\rangle$ are the entries of $|\psi_s\rangle$, which are stored in a compact form as output of Algorithm \ref{algo:bulk}. The output of this part of the algorithm is the basis $\{|\epsilon_k\rangle\}_{k=1}^{n_C}$ of $\Ker C$, again in the compact form of $\{\bm{\alpha}_k\}_{k=1}^{n_C}$, along with the output of the previous algorithm. The procedure is summarized in box \ref{algo:bound}.

\begin{figure}[t]
\begin{myalg}{Kernel algorithm II: Solution of \\the boundary equation}
\label{algo:bound}
\hspace{3mm} {\sc Input:}  $A(w,w^{-1})$, $W$ and output of Algorithm \ref{algo:bulk}.
\begin{enumerate}[leftmargin=0.5in, rightmargin=0.2in]
\item Construct the boundary matrix $B$ using the formula given in 
Eq.\,(\ref{bmatentries}) for non-symmetrical and 
Eq.\,(\ref{bmatentriessym}) for symmetrical corner modifications, respectively.

\item Find a basis $\{\bm{\alpha}_k\}_{k=1}^{n_C}$ of the kernel of $B$.
\end{enumerate}
\hspace{3mm} {\sc Output:} $\{\bm{\alpha}_k\}_{k=1}^{n_C}$ and output of Algorithm \ref{algo:bulk}.
\medskip
\end{myalg}
\end{figure}

\subsection{The multiplication algorithm}
\label{seckernel}

We will now describe an efficient algorithm for multiplying two corner-modified BBT transformations 
$C_{i}=A_{N,i}+W_i$, $i=1,2$, where each $A_{N,i}$ is a BBT transformation of bandwidth $(p_i,q_i)$, and $W_{i}$ are the corresponding corner modifications. The input for the algorithm are the associated matrix Laurent polynomials $A_i(w,w^{-1})$.
Let $C=C_1 C_2$ which, by Theorem\,\ref{superduper},  
is also a corner-modified BBT transformation. Our task is to calculate efficiently the matrix Laurent polynomial $A(w,w^{-1})$ associated to the BBT matrix $A_N$, and the entries of the corner-modification $W$, that satisfy $C=A_N+W$. Theorem\,\ref{superduper} implies that $A(w,w^{-1})=A_1 (w,w^{-1}) A_2 (w,w^{-1})$, the calculation of which involves finding $(p_1+p_2,q_1+q_2)$ matrix coefficients that are easily obtained from the coefficients of $A_i(w,w^{-1})$, $i=1,2$. The remaining task is computation of the entries of $W$. 
Corollary \ref{corproduct} leads to the expression
\[W = P_\partial(A_{N,1}A_{N,2}-A_N) +  (A_{N,1}P_{\partial, 2})(P_{\partial, 2}W_2) 
+  (P_{\partial, 1}W_1)A_{N,2}
+ (P_{\partial, 1} W_1P_{\partial, 2})(P_{\partial, 2}W_2), \]
and thereby to the formula 
\begin{multline}
\label{cmentries}
{[W]_{bj} = \sum_{j'=\max(1,b-q_1)}^{\min(N,b-p_1)} 
a_{1,b-j'} a_{2, j'-j} - a_{b-j}  + \sum_{j'=\max(1,b-q_1)}^{\min(N,b-p_1)} a_{1,b-j'} [W_2]_{j'j} } \\ 
{+ \sum_{j'=\max(1,p_2+j)}^{\min(N,q_2+j)}[W_1]_{bj'}a_{2, j'-j}+ 
\left(\sum_{j'=1}^{-p_2}+\sum_{j'=N-q_2+1}^{N}\right)[W_1]_{bj'}.[W_2]_{j'j}.  }
\end{multline}
The algorithm outputs $A(w,w^{-1})$, along with all the entries of $W$, which completely 
describes the product transformation $C$. The procedure is summarized in box \ref{algo:mult}. 

\begin{figure}[t]
\begin{myalg}{Multiplication algorithm}
\label{algo:mult}

\hspace{3mm} {\sc Input:} $\{ A_i(w,w^{-1}), W_i \}_{i=1}^{2}$.
\begin{enumerate}[leftmargin=0.5in, rightmargin=0.2in]

\item Compute coefficients of the matrix Laurent polynomial 
$A(w,w^{-1})=A_1(w,w^{-1}) A_2 (w,w^{-1})$.

\item Compute all entries of $W$ using Eq.\,(\ref{cmentries}).
\end{enumerate}
\hspace{3mm} {\sc Output:} $A(w,w^{-1}), W$.
\medskip
\end{myalg}
\end{figure}

\subsection{The eigensystem algorithm}
\label{algo}

Given a corner-modified BBT transformation $C=A_N+W$, the goal 
is to obtain its spectrum and a basis of the corresponding generalized eigenvectors. Again, we divide this algorithm in two parts. Part I computes the spectrum of $C$ and finds corresponding eigenvectors. If the latter span $\vone$, then there exist no generalized eigenvectors of higher rank, and the problem is solved. If not, 
part II finds generalized eigenvectors corresponding to all the eigenvalues, already obtained in the first step. 

\vspace*{1mm}

\noindent 
{\em I. Eigenvalues and eigenvector determination.}
This is a particular instance of an appropriate root-finding algorithm on $\mathds{C}$, where eigenvalues of $C$ are the desired roots. Conventionally, the eigenvalue problem of a linear operator $M$ is viewed as a root-finding problem, since eigenvalues of $M$ are roots of its characteristic equation, $\det (M-\epsilon)=0$. The algorithm we propose does not seek roots of the characteristic equation of $C$, but instead of a {\em function whose roots coincide with those of the characteristic equation}. This function is the determinant of the boundary matrix of the corner-modified BBT transformation $C-\epsilon = (A_N-\epsilon) + W$, whose kernel is the eigenspace of $C$ corresponding to eigenvalue $\epsilon$. 
If $B(\epsilon)$ denotes the boundary matrix of $C-\epsilon$, then the problem of finding the spectrum of $C$ is equivalent to that of finding roots of the equation $\det B(\epsilon)=0$. The kernel algorithm described in Sec.\,\ref{seckeralgo} can be implemented to compute 
$B(\epsilon)$ for each value of $\epsilon$. Whenever $\epsilon$ is an eigenvalue, the kernel algorithm also provides the corresponding eigenvectors. Typically, $\det B(\epsilon)$ is a continuous complex-valued function of $\epsilon$, a feature that can be leveraged in implementing an appropriate root-finding algorithm of choice. 

\begin{figure}[t]
\begin{myalg}{Eigensystem algorithm I: Solution of \\ eigenvalue problem} 
\label{algo:eig}
\hspace{3mm} {\sc Input:} $A(w,w^{-1}), W$.
\begin{enumerate}[leftmargin=0.5in, rightmargin=0.2in]

\item \label{stepone} Find all values of $\epsilon$ for which $A(w,w^{-1})-\epsilon$ is singular. 

\item \label{4.2} If $W$ is symmetrical, output all values in step\,(\ref{stepone}) as eigenvalues. If not, then compute $\det(C-\epsilon)$ for each, and output those that have zero determinant.

\item For each eigenvalue found in step\,(\ref{4.2}), find and output a basis of $\Ker (C-\epsilon)$.
\item Choose a seed value of $\epsilon$ different from any of the values found in step (\ref{stepone}). 

\item \label{loop} Find $B(\epsilon)$ using the kernel algorithm, with $A(w,w^{-1})-\epsilon$ and $W$ as inputs. 

\item If $\det B(\epsilon)=0$, then output $\epsilon$ as an eigenvalue. Output a basis of $\Ker (C-\epsilon)$ from $B(\epsilon)$ as described in the kernel algorithm. This is a basis of the eigenspace of $C$ corresponding to eigenvalue $\epsilon$.

\item Choose a new value of $\epsilon$ as dictated by the relevant root-finding algorithm. Go back to step (\ref{loop}).  

\end{enumerate}
\hspace{3mm} {\sc Output:} All eigenvalues of $C$ and bases of corresponding eigenspaces.
\medskip
\end{myalg}
\end{figure}

The kernel of $C-\epsilon$ coincides with $B(\epsilon)$ provided that the associated matrix Laurent polynomial is regular, which is the case generically. If there exist some values of $\epsilon$ for which $A(w,w^{-1})-\epsilon$ is singular, then whether or not those are part of the spectrum may be found by computing $\det (C-\epsilon)$ directly\footnote{Note that if the corner-modification is symmetrical, 
then the boundary equation is trivially satisfied by the bulk solutions 
localized sufficiently away from either boundary, implying that these values of $\epsilon$ are always part of the spectrum in such cases.}. Remarkably, such a singular behavior can occur  only at a few isolated value of $\epsilon$ \cite{matrix_polynomials}. 
The procedure is summarized in box \ref{algo:eig}.

{\em II. Generalized eigenvectors determination}.
In this case, the eigenvectors obtained in part I do not span the entire space $\vone$. 
For each eigenvalue $\epsilon$, let $(C-\epsilon)^\kappa = A_N^{(\epsilon,\kappa)}+W_{\epsilon,\kappa}$ define the relevant matrix Laurent polynomial $A^{(\epsilon,\kappa)}(w,w^{-1})$ and the corner-modification $W_{\epsilon,\kappa}$. Starting from $\kappa=2$, we first compute $A^{(\epsilon,\kappa)}(w,w^{-1})$ and $W_{\epsilon,\kappa}$ using the multiplication algorithm. Next, we construct its boundary matrix $B(\epsilon,\kappa)$ using the kernel algorithm. The dimension of  $\Ker (C-\epsilon)^\kappa$ is the same as that of $\Ker B(\epsilon,\kappa)$. If $\dim\{\Ker B(\epsilon,\kappa)\} >\dim\{\Ker B(\epsilon,\kappa-1)\}$,
then there exists at least one generalized eigenvector of $C$ of rank $\kappa$. In this case, we compute $A^{(\epsilon,\kappa+1)}(w,w^{-1})$ and $W_{\epsilon,\kappa+1}$. We repeat this process, until we find a value $\kappa_{\max}$ for which $\dim\{\Ker B(\epsilon,\kappa_{\max})\} = \dim\{\Ker B(\epsilon,\kappa_{\max}+1)\}$. This indicates that there are no generalized eigenvectors of $C$ of rank greater than $\kappa_{\max}$ corresponding to eigenvalue $\epsilon$. Then, from the boundary matrix of $(C-\epsilon)^{\kappa_{\max}}$, we compute a basis of the generalized eigenspace corresponding to $\epsilon$. This process is repeated for every eigenvalue $\epsilon$ to obtain bases of the corresponding eigenspaces.  A basis of $\vone$ 
is obtained by combining all these bases.

In the non-generic case where $A(w,w^{-1})-\epsilon$ is singular for some eigenvalue $\epsilon$, we can still use the multiplication algorithm to find $(C-\epsilon)^\kappa$, but the corresponding kernel and its dimension are found using some conventional algorithm. The algorithm summarized box \ref{algo:geneig} is {\em provably complete} for those eigenvalues for which $A(w,w^{-1})-\epsilon$ is regular.

\begin{figure}[th]
\begin{myalg}{Eigensystem algorithm II: Solution of \\ generalized eigenvalue problem}
\label{algo:geneig}
\hspace{3mm} {\sc Input:} All eigenvalues and the dimensions of corresponding eigenspaces.
\begin{enumerate}[leftmargin=0.5in, rightmargin=0.2in]

\item \label{start} Choose any of the eigenvalues, call it $\epsilon$.

\item \label{loop2} Set $\kappa = 2$.

\item \label{product} Find $A^{(\epsilon,\kappa)}(w,w^{-1})$ and $W_{\epsilon,\kappa}$ using the multiplication algorithm.

\item Construct the corresponding boundary matrix $B(\epsilon,\kappa)$.

\item If $\dim\Ker B(\epsilon,\kappa)>\dim\Ker B(\epsilon,\kappa-1)$, then increment $\kappa$ by one and go back to step (\ref{product}). If not, set $\kappa_{\max} = \kappa$.

\item Find a basis of the kernel of $(C-\epsilon)^{\kappa_{\max}}$ from the boundary matrix, as described in the Kernel algorithm. Choose a new eigenvalue and go back to step (\ref{loop2}).

\end{enumerate}
\hspace{3mm} {\sc Output:} Bases for all generalized eigenspaces of $C$.
\medskip
\end{myalg}
\end{figure}

\subsection{Efficiency considerations}
\label{secbettercorner}

It is important to ensure that both the number of steps 
(time complexity) and the memory space (space complexity) required by our  eigensystem 
algorithm scale favorably with the size $N\gg1$ of the corner-modified BBT matrix $C=A_N+W$ of interest. 

\subsubsection*{The kernel algorithm.}
The first part of the kernel algorithm concerning the solution of the bulk equation 
does not make any reference to the size of $A_N$. Specifically, 
we store the basis vectors in 
$\mathcal{B}^+,\mathcal{B}^-$ and $\mathcal{B}_\text{ext}$ 
in the form of vectors 
$\{|u_s^{+}\rangle\}_{s=1}^{s_+}, \{|u_s^{-}\rangle\}_{s=1}^{s_-}$ 
and 
$\{\{|u_{\ell,s} \rangle\}_{s=1}^{s_\ell}\}_{\ell=1}^{n}$, 
respectively, along with the roots 
$\{z_\ell, \ 1\le \ell \le n\}$. 
The same is true about the solution of the boundary equation, since the obtained basis 
of $\mathcal{M}_{1,N}$ is outputted in the form of vectors 
$\{|\alpha_k\rangle\}_{k=1}^{n_C}$. 
Thus, it is the second step, involving the construction 
of the boundary matrix, that determines the time and space complexity 
of the algorithm. If $W$ is not symmetrical, then computation of each entry of $B$ according to Eq.\,(\ref{bmatentries}) involves a summation of $N$ terms. Therefore, in the worst case, the algorithm requires $\mathcal{O}(N)$ time steps for complete kernel determination. However, in the important case where $W$ is symmetrical, the summation is only over $2d\tau$ terms, which is {\em independent of $N$}. In these cases, the time-complexity is $\mathcal{O}(1)$. Note that the storage units required by the algorithm scale as $\mathcal{O}(N)$ 
in the general case, because of the entries of the corner modification that need to be stored. The auxiliary space required is only $\mathcal{O}(1)$. For symmetrical corner modifications, the space required to store $W$ is $\mathcal{O}(1)$, which is also the space complexity of the kernel algorithm. 

\subsubsection*{The multiplication algorithm.}
Calculation of the matrix coefficients of the product matrix Laurent polynomial is a trivial task from the point of view of complexity. Also, according to Eq.\,(\ref{cmentries}), computing each entry of the resulting corner modification involves summations that do not grow with $N$. In the 
general case, the number of entries of the corner modification scales linearly with $N$, therefore the time and space complexities of multiplication algorithm are $\mathcal{O}(N)$. If the given transformations have both symmetrical corner modifications, then the resulting corner modification is also symmetrical. In these cases, the number of non-trivial entries of the resulting corner modification does not scale up with $N$, implying that both time and space complexities are $\mathcal{O}(1)$.

\section{Applications}
\label{sec:applications}

\subsection{An Ansatz for the eigenvectors of a corner-modified block-Toeplitz matrix}

Based on the analysis in Sec.\,\ref{secexact}, we may formulate an 
exact eigenvalue-dependent Ansatz for the eigenvectors 
of a given corner-modified BBT transformation. An
Ansatz of similar form, catering to some special circumstances, 
was introduced in $\cite{abc}$. 

Any eigenvector of a corner-modified BBT transformation $J$, 
corresponding to eigenvalue $\epsilon$, is a kernel vector of 
the transformation $C \equiv J-\epsilon$, which is also a 
corner-modified BBT transformation. 
Thanks to Eq.\,({\ref{boundarymap}), any kernel vector of $C$
satisfies its bulk equation, that is, the kernel equation for
\(P_BC=P_BA_N=P_B\P_{1,N}\A|_{\V_N}\), where we assume henceforth that 
$L=1$, $R=N$.
Further, if the principal 
coefficients of $C$ (that is, \(A_N\)) are invertible, then by 
Lemma \ref{softy} the solution space $\mathcal{M}_{1,N}$ of the 
bulk equation for $C$ is identical to $\P_{1,N}\Ker \A$. 
Otherwise, by Theorem \ref{thm:fintosemi},  
the solution space is 
$\mathcal{M}_{1,N}=\P_{1,N}\Ker\A \oplus \mathcal{F}_N^{+} 
\oplus \mathcal{F}_1^{-}$. In any case, as long as the BBL 
transformation \(\A\) associated to $C$ is {\em regular},  
$\mathcal{M}_{1,N}$ is $d\tau$-dimensional and its basis may 
be obtained by the union of the bases of the constituent 
subspaces. It follows that any kernel vector of $C=J-\epsilon$ may be expressed as
a linear combination 
\begin{equation}
\label{ansatz}
|\epsilon\rangle = 
\sum_{\ell=1}^{n}\sum_{s=1}^{s_\ell}\alpha_{\ell s}|\psi_{\ell s}\rangle
+\sum_{s=1}^{s_+}\alpha^{+}_{s}|\psi^{+}_{s}\rangle  
+\sum_{s=1}^{s_-}\alpha^{-}_{s}|\psi^{-}_{s}\rangle\ ,
\end{equation}
where the complex coefficients 
\(\alpha_{\ell s},\alpha^{+}_{s},\alpha^{-}_{s} \in \mathds{C}\) are parameters to be determined 
and 
\begin{eqnarray*}
|\psi_{\ell s}\rangle = 
\sum_{v=1}^{s_\ell}|z_\ell,v\rangle|u_{\ell s v}\rangle \in \P_{1,N}\Ker\A\ ,\label{ansinf}\\
|\psi^{+}_{s}\rangle = 
\sum_{j=1}^{d\tau}|N-d\tau+j\rangle|u^{+}_{s j}\rangle\in \mathcal{F}_N^{+}\ ,\\
|\psi^{-}_{s}\rangle = 
\sum_{j=1}^{d\tau}|j\rangle|u^{-}_{s j}\rangle \in \mathcal{F}_1^{-} \ ,
\end{eqnarray*} 
for basis vectors described in Eqs. (\ref{eq:b1})-(\ref{eq:b3}).
An Ansatz for generalized eigenvectors of rank $\kappa>1$ 
can be obtained similarly, since $C^\kappa = (J-\epsilon)^\kappa$ is 
also a corner modified BBT matrix, as shown in Corollary \ref{corproduct}.

\begin{remark}
Theorem \ref{thm:fintosemi} applies only to those cases where the 
matrix Laurent polynomial under consideration is regular. Therefore, 
the Ansatz in Eq.\,(\ref{ansatz}) is {\em provably complete} only for 
those corner-modified BBT matrices $A_N+W$, where the associated matrix 
Laurent polynomial $A(w,w^{-1})-\epsilon$ is {\em regular for every $\epsilon$}, 
which is usually the case. If $A(w,w^{-1})-\epsilon$ is singular for {\em some} 
$\epsilon$,  we know that 
$\P_{1,N}\Ker\A$ and $\mathcal{F}_1^{-}\oplus \mathcal{F}_N^{+}$ are 
subspaces of $\mathcal{M}_{1,N}$ [Remark \ref{rem:singular}].
However, they need {\em not} span the entire $\mathcal{M}_{1,N}$. Such cases are 
important but rare, and typically correspond to some exactly solvable limits. 
In these cases, $\epsilon$ is a highly degenerate eigenvalue, with $\mathcal{O}(N)$ 
eigenvectors of the form given in Theorem \ref{thyglorywithf}, that have 
{\em finite support in the bulk}. For example, in free-fermionic Hamiltonians as considered 
in \cite{abc,prb1}, {\em dispersionless (``flat'') energy bands} form for such eigenvalues. 
A summary of our results on the structural characterization of the
bulk solution space $\mathcal{M}_{1,N}$ is given in Table \ref{table1}. 
\end{remark}

\begin{table}
\setlength\extrarowheight{10pt}
\centering
\begin{tabular} {| c | c | c |}
\hline
\backslashbox{$G(w)$}{$a_{p'},a_{q'}$} & Invertible & Non-invertible \\ \hline
Regular & $\mathcal{M}_{1,N}=\P_{1,N}\Ker\A$ & $\mathcal{M}_{1,N}=\P_{1,N}\Ker\A \oplus \mathcal{F}_1^{-} \oplus \mathcal{F}_N^{+}$ \\[10pt] \hline
Singular & ---
& $\mathcal{M}_{1,N}\supseteq\Span\big(\P_{1,N}\Ker\A  \cup (\mathcal{F}_1^{-} \oplus \mathcal{F}_N^{+})\big)$\\[10pt] \hline
\end{tabular}
\caption{\label{table1}
Structural characterization of the bulk solution space, 
depending on the invertibility of the principal coefficients and regularity of the corresponding matrix polynomial.}
\end{table}

If the principal coefficients of the associated matrix Laurent polynomial 
$G(w)$ are invertible, as considered in \cite{abc},
both the second and third terms in the Ansatz of 
Eq.\,(\ref{ansatz}) vanish. Irrespective of the invertibility of the 
principal coefficients, a simplification in the first term occurs if 
$s_\ell = \dim\,{\rm Ker}\,A(z_\ell,z_\ell^{-1})$ for some $\ell$, where $s_\ell$ 
is the algebraic multiplicity $z_\ell$ as a root of \(\det A(w,w^{-1})\) (recall
that \(s_\ell=\dim \Ker A_{s_\ell}(z_\ell)\)).  In these cases, 
Lemma \ref{smithbypass} 
implies that each of the 
$|\psi_{\ell,s}\rangle$ in Eq.\,(\ref{ansatz}) has the simple form
\begin{eqnarray*}
|\psi_{\ell , s}\rangle = |z_\ell, 1\rangle |u_{\ell s 1}\rangle \quad \Rightarrow \quad
\big|\langle j|\langle m |\psi_{\ell s}\rangle\big| = 
\big|\langle m|u_{\ell s 1}\rangle z_\ell^j\big| \propto |z_\ell^j| ,
\label{expsols}
\end{eqnarray*}
with no contributions from terms $ |z_\ell, v \rangle |u_{\ell s 1}\rangle$ with $v>1$. We call vectors
of the above form {\it exponential solutions}, because their amplitude as a function of 
the lattice coordinate varies exponentially with $j$, for any \(m=1,\dots,d\). If $z$ lies on the unit circle, 
these solutions correspond to plane waves, with amplitude that is independent of $j$.

The condition 
$s_\ell=\dim \Ker A(z_\ell,z_\ell^{-1})$ is satisfied under generic situations by all roots 
$z_\ell$. In those special situations where $s_\ell > \dim\Ker A(z_\ell,z_\ell^{-1})$ for some $\ell$, one 
must allow for the possibility of $|\psi_{\ell,s}\rangle$ in Eq. \,(\ref{ansatz}) to describe what we refer to as {\it power-law solutions}, whose amplitude varies with $j$ as 
\begin{eqnarray*}
\big|\langle j|\langle m |\psi_{\ell s}\rangle\big| = 
\big|\sum_{v=1}^{s_\ell}\langle m|u_{\ell s v}\rangle j^{(v-1)}z_\ell^{j-v+1}\big| \propto |j^{s_\ell-1}z_\ell^{j-v+1}| ,
\quad \forall m .
\end{eqnarray*}
The Ansatz presented in Ref.\,\cite{abc} excludes power-law solutions by assuming that 
$s_\ell = \dim\Ker A(z_\ell,z_\ell^{-1})$ for every root $z_\ell$ of \(\det A(z,z^{-1})=0\). 

If the principal coefficients of the matrix Laurent polynomial 
$A(w,w^{-1})$ are not invertible, the contributions to the Ansatz in Eq. (\ref{ansatz})
that belong to  $\mathcal{F}_N^{+}$ and 
$\mathcal{F}_1^{-}$ are {\it finite-support solutions}.  This refers to the fact that,  
for all $m$, their  amplitude
\begin{eqnarray*}
\big|\langle j|\langle m |\psi_{s}^-\rangle\big| = 
\left\{ \begin{array}{cl} \big|\langle m|u_{s j}^-\rangle\big| & {\rm if}\ \  1 \le j \le d\tau \\ 
0 & {\rm if}\ \ j>d\tau \end{array} \right. , 
\end{eqnarray*}
for $j>d\tau$ in the case of $\mathcal{F}_1^{-}$, and similarly for $j<N-d\tau$ 
in the case of $\mathcal{F}_N^{+}$, respectively. The support of these solutions clearly does not 
change with $N$.

\subsection{The open-boundary Majorana chain revisited}
\label{secmajorana}

The Majorana chain \cite{kitaev01,bernevigbook} is the simplest tight-binding model of 
a (quasi) one-dimensional $p$-wave topological superconductor. For open boundary
conditions, and in second-quantization, the many-body Hamiltonian for a chain 
of length $N$ reads 
\[ \widehat{H}_N=-\sum_{j=1}^{N}\mu\,c_{j}^{\dagger}c_{j}-
\sum_{j=1}^{N-1}\left(t \, c_{j}^{\dagger}c_{j+1}-\Delta
\, c_{j}^{\dagger}c_{j+1}^{\dagger}+ \text{h.c.} \right),
\]
where $c^\dagger_j (c_j)$ are fermionic creation (annihilation) operators for the $j$th lattice 
site, and the parameters $\mu,t,\Delta \in \mathds{R}$ denote chemical potential, hopping 
and pairing strengths, respectively. Since the many-body Hamiltonian is 
quadratic, it is well-known that it suffices to diagonalize the corresponding single-particle 
Hamiltonian in Nambu space \cite{blaizot}. 
Following the derivation in \cite{abc}, the latter, $H_N \in \M_{2N}$, 
is found to be 
\begin{eqnarray*}
\label{kitaevHam}
H_N =T\otimes h_1+\mathds{1}\otimes h_0+T^\dagger\otimes h_1^\dagger=
\P_{1,N}\H|_{\V_N}, 
\end{eqnarray*}
with 
\begin{eqnarray*}
\hspace*{-1cm}
\H= h_1\T + h_0 \mathds{1} + h^\dagger_1 \Tminus\quad \mbox{and}\quad
h_0=-\begin{bmatrix}
\mu & 0\\ 
0 & -\mu 
\end{bmatrix}, \ \
h_1=-
\begin{bmatrix}t & -\Delta\\
 \Delta & -t
\end{bmatrix}.
\label{Kit}
\end{eqnarray*}
When restricted to the Hilbert space, $\H$ is precisely the Hamiltonian of the infinite Majorana chain.   
$H_N$ and $\H$ correspond, respectively, to a corner-modified BBT matrix on 
$\mathds{C}^N \otimes \mathds{C}^2$, with $p=p'=-1$ and $q=q'=1$  
and a symmetrical corner modification, 
and the associated BBL transformation. The principal coefficients, $a_{-1}=h_1$ and
$a_1=h_1^\dagger$, are invertible (hence the associated matrix 
Laurent polynomial is regular) in the generic 
regime $|t| \ne |\Delta|$, with arbitrary $\mu$, which we considered in Ref. \cite{abc}.

Here, we will diagonalize $H_N$ in the parameter regime $t=\Delta$, for arbitrary 
values of $\mu$ and $t$, corresponding to {\em non-invertible} principal coefficients. In particular, this will 
include the special case where, additionally, the system is tuned at $\mu=0$, which is referred to as the ``sweet spot'' in parameter space \cite{bernevigbook,fulga}.
Since $H_N$ is Hermitian, its eigenvectors span the entire $2N$-dimensional single-particle space $\mathds{C}^N \otimes \mathds{C}^2$, thus there are no generalized eigenvectors of rank greater than one.
In the following, we implement the kernel algorithm analytically with input $H_N-\epsilon$ for an arbitrary real value of $\epsilon$, 
leading to a closed-form solution of the eigensystem of interest. 
 
\subsubsection*{$\bullet$ Kernel determination for generic $\epsilon$.}  
The matrix Laurent polynomial $H(z,z^{-1})-\epsilon = z h_1 + h_0-\epsilon + z^{-1}h_1^\dagger$ and $N$ are provided as inputs to the  Algorithm \ref{algo:bulk}. In particular, 
the evaluation of $H(z,z^{-1})$ on the unit circle, $z \equiv e^{i k}, k\in\mathds{R}$, yields the 
Hamiltonian in momentum space, 
$$H_k \equiv \begin{bmatrix}-\mu-2t\cos k & i \, 2t \sin k \\ -i \, 2t \sin k & \mu +2t\cos k\end{bmatrix}.$$

\begin{enumerate}
\item The characteristic equation for $H(z,z^{-1})$ is 
\begin{eqnarray*}
\label{kitaevdispersion}
(z+z^{-1})(2\mu t)+(\mu^2+4t^2-\epsilon^2)=0.
\end{eqnarray*}
This is indeed an analytic continuation of the standard dispersion relation, 
which for $t=\Delta$ simply reads 
$\epsilon = \pm\sqrt{\mu^2+4t^2+4\mu t\cos k}$. The above equation 
has two non-zero roots, that we denote by $\{z_1,z_2\}$. For fixed values of $\mu, t$ and $\epsilon$, these two roots may be expressed analytically as $z_1=\zeta, z_2=\zeta^{-1}$, 
where
\begin{eqnarray*}
\zeta^{\pm1} = \frac{\epsilon^2-\mu^2-4t^2 \pm \sqrt{(\mu^2+4t^2-\epsilon^2)^2-16\mu^2t^2}}{4\mu t}.
\end{eqnarray*}

\item In order to construct the matrices $(H-\epsilon)_{s_\ell}(z_\ell)$ [recall Eqs. (\ref{Azs})-(\ref{AzsBlock})], 
we need to know the number of {\it distinct} roots and their multiplicities. The two roots coincide if and only if $\zeta=\pm1$, which 
happens if $\epsilon$ assumes one of the following special values:
\begin{equation}
\epsilon \in {\cal S} \equiv \{ \pm (\mu + 2t),\pm (\mu - 2t)\} .
\label{set}
\end{equation}
In these cases, there is only one distinct root ($1$ or $-1$), with algebraic multiplicity two. The implementation of the kernel algorithm for  $\epsilon \in {\cal S}$ will be shown separately. 
For $\epsilon \not\in {\cal S}$, the two roots have multiplicity one each. Then, 
with $z=\zeta,\zeta^{-1}$,  
\[  (H-\epsilon)_1 (z) = H(z,z^{-1})-\epsilon=
\begin{bmatrix}
-t (z + z^{-1}) - (\mu+\epsilon)  & t(z  - z^{-1}) \\ 
-t(z  - z^{-1}) & t(z +z^{-1}) + (\mu-\epsilon) 
\end{bmatrix}.\]

\item The kernel of $(H-\epsilon)_1(z)$ is spanned by 
\begin{eqnarray*}
|u(z)\rangle = \begin{bmatrix} t(z-z^{-1}) \\ \epsilon + \mu +t(z+z^{-1})\end{bmatrix},\quad z=\zeta,\zeta^{-1}.
\end{eqnarray*}
Then, $|u_{1,1}\rangle = |u(\zeta)\rangle$ and $|u_{2,1}\rangle = |u(\zeta^{-1})\rangle$ span $\Ker 
(H-\epsilon)_1(\zeta)$ and $\Ker (H-\epsilon)_1 (\zeta^{-1})]$, respectively. We have thus obtained the solutions of the bulk equation arising from the infinite problem, which are spanned by $\{|\zeta,1\rangle|u_{1,1}\rangle,|\zeta^{-1},1\rangle|u_{2,1}\rangle\}$, in the notation of Eq. (\ref{ansatz}). 
These are extended, exponential solutions. In particular, if $|\zeta|=1$, they correspond to plane waves. Note that these are {\it not} the kernel vectors of $H_N-\epsilon$, since the boundary conditions are not yet imposed. 

\item 
Since $d\tau=2\cdot 2=4$ and $\dim\Ker(\H-\epsilon)=2$, we have 
\begin{equation}
\label{kitaevK}
K_\epsilon^{-}=
\begin{bmatrix} 
h_1^\dagger & h_0-\epsilon \\ 
0 & h_1^\dagger 
\end{bmatrix}, \quad
K_\epsilon^{+}=
\begin{bmatrix} 
 h_1 & 0\\
 h_0-\epsilon & h_1
\end{bmatrix}.
\end{equation}

\item For any $\epsilon \not \in {\cal S}$, 
$s_-=s_+=1$, and the two one-dimensional kernels are spanned by 
$\{|u_1^{-}\rangle\}$ and $\{|u_1^{+}\rangle\}$, respectively, where
$$  | u_1^{-}\rangle \equiv \begin{bmatrix} |-\rangle \\ 0
\end{bmatrix}, \quad 
|u_1^{+}\rangle \equiv \begin{bmatrix} 0 \\  |+\rangle  \end{bmatrix}, $$
\noindent 
and $|+\rangle \equiv |1\rangle + |2\rangle$, $|-\rangle \equiv |1\rangle - |2\rangle,$ $|1\rangle,|2\rangle \in \mathds{C}^2.$
Therefore, $|1\rangle|-\rangle$ and $|N\rangle|+\rangle$ are the localized solutions of the bulk equation, 
one on each edge of the chain.
\end{enumerate}

\noindent
$\bullet$ {\em Boundary equation}. Now we will construct the boundary matrix, and 
determine the kernel of $H_N-\epsilon$ using Algorithm \ref{algo:bound}.
\begin{enumerate}
\item From Eq.\,(\ref{bmatentriessym}), the boundary matrix is found to be
\begin{eqnarray*}
\hspace*{-2.3cm}
B(\epsilon) = \begin{bmatrix}
-\mu-\epsilon & 0	& 2t^2\zeta+t(\epsilon+\mu) & 2t^2\zeta^{-1}+t(\epsilon+\mu) \\
-\mu+\epsilon & 0 & -2t^2\zeta-t(\epsilon+\mu) & -2t^2\zeta^{-1}-t(\epsilon+\mu) \\
0 & -\mu-\epsilon & \zeta^{N+1}(-2t^2\zeta^{-1}-t(\epsilon+\mu)) & \zeta^{-N-1}(-2t^2\zeta-t(\epsilon+\mu)) \\
0 & \mu-\epsilon & \zeta^{N+1}(-2t^2\zeta^{-1}-t(\epsilon+\mu)) & \zeta^{-N-1}(-2t^2\zeta-t(\epsilon+\mu))
\end{bmatrix}.
\end{eqnarray*}

\item The determinant of $B(\epsilon)$ is
\[ \det B(\epsilon)=4\mu^2t^2\Big(z^{N+1}(2tz^{-1}+\epsilon+\mu)^2 - z^{-(N+1)}(2tz+\epsilon+\mu)^2\Big).\]
Therefore, $B(\epsilon)$ has a non-trivial kernel if either of the conditions 
\begin{eqnarray}
\label{kitaevcondition1}
2t\zeta+\epsilon+\mu=\pm\zeta^{(N+1)}(2t\zeta^{-1}+\epsilon+\mu)
\end{eqnarray}
is satisfied. When this happens, the kernel of $B(\epsilon)$ 
is one-dimensional and is spanned by $|\alpha\rangle = [0 \; 0 \; 1 \; \mp\zeta^{N+1}] ^{\rm T}$. 
In turn, this implies that, when Eq.\,(\ref{kitaevcondition1}) is satisfied,
the kernel of $H_N-\epsilon$  is spanned by the vector 
\[ |\epsilon\rangle = |\zeta,1\rangle|u_{1,1}\rangle \mp \zeta^{N+1}|\zeta^{-1},1\rangle|u_{2,1}\rangle. \]
\end{enumerate}

\subsubsection*{$\bullet$ Solution for $\epsilon \in {\cal S}$.} We illustrate the case 
$\epsilon=\mu+2t$, as the analysis is similar for the other values in Eq.\,(\ref{set}).
Compared to the previous case, the implementation of Algorithm \ref{algo:bulk} differs only in the steps (\ref{step1.3})-(\ref{step1.4}). Since $\zeta=\zeta^{-1}=1$ has multiplicity two, the only matrix to be constructed in step (\ref{step1.3}) is [recall again Eqs. (\ref{Azs})-(\ref{AzsBlock})]
\begin{eqnarray*}
\hspace*{-1.4cm}
(H-\epsilon)_2(1)= \begin{bmatrix} H(1,1)-\epsilon & H^{(1)}(1,1) \\ 0 & H(1,1)-\epsilon \end{bmatrix} = 
2\begin{bmatrix} -2t-\mu & 0 & 0 & t \\ 0 & 0 & -t & 0 \\
0 & 0 & -2t-\mu & 0 \\ 0 & 0 & 0 & 0 \end{bmatrix}.
\end{eqnarray*}
The kernel of this matrix, computed in step (\ref{step1.4}), is spanned by $\{|u_{1,1}\rangle, |u_{1,2}\rangle\}$, where
\begin{eqnarray*}
|u_{1,1}\rangle \equiv\begin{bmatrix}|2\rangle \\ 0 \end{bmatrix},\quad |u_{1,2}\rangle \equiv\begin{bmatrix} |1\rangle  \\ (2+\mu/t)|2\rangle\end{bmatrix}, \quad |1\rangle,|2\rangle \in \mathds{C}^2,
\end{eqnarray*}
which correspond to the exponential solution $|\psi_{1,1}\rangle= |1,1\rangle|2\rangle$ and the power-law solution 
$|\psi_{1,2}\rangle = |1, 1\rangle |1\rangle +
(2+\mu/t) |1, 2\rangle |2\rangle $
of the bulk equation, in the notation of Eq. (\ref{ansatz}).
The boundary matrix computed in the first step of Algorithm \ref{algo:bound} now reads 
\begin{eqnarray*}
B = 2\begin{bmatrix}
-t-\mu & 0	& t & t \\
t & 0 & -t & -t \\
0 & -t-\mu & -t & -(2N+1)t-(N+1)\mu  \\
0 & -t & -t & -(2N+1)t-(N-1)\mu 
\end{bmatrix}, 
\end{eqnarray*}
which has a non-trivial kernel only if the parameter $\mu,t$ satisfy 
\begin{eqnarray*}
2N t + (N+1) \mu=0.
\end{eqnarray*}
Then, the corresponding kernel of $H_N-\epsilon$, or eigenspace of $H_N$ corresponding to 
eigenvalue $\epsilon=\mu+2t$, is spanned by 
\begin{eqnarray*}
|\epsilon\rangle = |\psi_{1,1}\rangle - |\psi_{1,2}\rangle = 
|1,1 \rangle (|2\rangle - |1\rangle) - (2+\mu/t) |1,2\rangle |2\rangle .
\end{eqnarray*} 
Note that, while as in the case of generic $\epsilon$, the eigenvector has contributions {\em only} from extended 
solutions, a power-law solution $|\psi_{1,2}\rangle$ now enters explicitly. 
Similar conclusions hold for other values of $\epsilon \in {\cal S}$.

\subsubsection*{$\bullet$ Majorana modes at the sweet spot, $t=\Delta$, $\mu=0$.}
For $\mu=0$, $H(z,z^{-1})-\epsilon$ can be verified to be {\em singular} for $\epsilon=\pm2t$, so that the 
kernel algorithm is inapplicable for those values. However, it is regular for all other values of $\epsilon$. 
We now diagonalize $H_N$ for $\epsilon \ne \pm 2t$ using the kernel algorithm. Since the characteristic equation in this case has no non-zero roots, there are {\em no} solutions with extended support. 
Therefore, we only need to find the kernels of the matrices $K_\epsilon^-$ and $K_\epsilon^+$, given by
\begin{eqnarray*}
\label{kitaevK2}
\hspace*{-10mm}
K_\epsilon^{-}=
\begin{bmatrix} 
h_1^\dagger & h_0-\epsilon & h_1 & 0\\ 
0 & h_1^\dagger & h_0-\epsilon & h_1\\
0 & 0 & h_1^\dagger & h_0-\epsilon\\
0 & 0 & 0 & h_1^\dagger
\end{bmatrix}, \quad
K_\epsilon^{+}=
\begin{bmatrix} 
h_1 & 0 & 0 & 0\\ 
h_0-\epsilon & h_1 & 0 & 0\\
h_1^\dagger & h_0-\epsilon & h_1 & 0\\
0 & h_1^\dagger & h_0-\epsilon & h_1
\end{bmatrix}.
\end{eqnarray*}
They are found to be spanned by $\{|u_1^{-}\rangle,|u_2^{-}\rangle\}$ and $\{|u_1^{+}\rangle,|u_2^{+}\rangle\}$, respectively, where
\begin{eqnarray*}
\hspace*{-1cm}
|u_1^{-}\rangle \equiv\begin{bmatrix} |-\rangle \\ 0 \\ 0 \\ 0\end{bmatrix}, \quad |u_2^{-}\rangle \equiv\begin{bmatrix} -\epsilon|+\rangle \\ 2t|-\rangle \\ 0 \\ 0\end{bmatrix}, \quad |u_1^{+}\rangle \equiv \begin{bmatrix} 0 \\ 0 \\ 0 \\ |+\rangle\end{bmatrix}, \quad |u_2^{+}\rangle \equiv \begin{bmatrix} 0 \\ 0 \\ 2t|+\rangle\\ -\epsilon|-\rangle \end{bmatrix},
\end{eqnarray*}
corresponding to the emergent solutions 
\begin{eqnarray*}
|\psi_1^-\rangle = |1\rangle|-\rangle, \quad|\psi_2^-\rangle = -\epsilon|1\rangle|+\rangle+ 2t|2\rangle|-\rangle, \\
|\psi_1^+\rangle = |N\rangle|+\rangle, \quad |\psi_2^+\rangle = -\epsilon|N\rangle|-\rangle+ 2t|N-1\rangle|+\rangle.
\end{eqnarray*}
In the first step of Algorithm \ref{algo:bound}, $B(\epsilon)$ is constructed using this as the basis, 
yielding
\begin{eqnarray*}
B (\epsilon) = \begin{bmatrix}
-\epsilon & \epsilon^2 -4 t^2 & 0 & 0 \\
 \epsilon & \epsilon^2 -4 t^2 & 0 & 0 \\
0 & 0 & -\epsilon & \epsilon^2 - 4 t^2 \\
0 & 0 & -\epsilon & - \epsilon^2 - 4 t^2
\end{bmatrix}.
\end{eqnarray*}
Since we assumed $\epsilon\ne\pm2t$, $B(\epsilon)$ has 
a non-trivial kernel only if 
$\epsilon=0$, in which case it is spanned by 
$\{|\alpha_1\rangle, |\alpha_2\rangle\}$, where
$|\alpha_1\rangle \equiv [1 \; 0 \; 0\; 0]^{\rm T},$ $|\alpha_2\rangle \equiv [0 \; 0\; 1\; 0]^{\rm T}. $
These represent the basis vectors of the zero-energy eigenspace 
of $H_N$,
$|\epsilon_1\rangle = |1\rangle|-\rangle,$ and  $|\epsilon_2\rangle 
= |N\rangle|+\rangle.$
They correspond to Majorana excitations, that have zero energy 
eigenvalue, and are perfectly localized only on the first or the last 
fermionic degrees of freedom in the chain. 

While, as noted, the kernel algorithm is inapplicable, one may 
verify that the remaining $2N-2$ eigenvectors of $H_N$ at the 
sweet spot belong to the eigenspace corresponding to $\epsilon=\pm 2t$. 
These eigenvectors, which arise from the countable kernel of $\H$ 
according to Theorem \ref{thyglorywithf}, are also perfectly 
localized, but in the bulk.

\section{Summary and Outlook}

Corner-modified BBT matrices describe a very large class
of tractable yet realistic boundary value problems of physics 
and engineering, from tight-binding models of fermions and
bosons to linear discrete-time dynamical systems. In this paper, 
we have characterized the spectral properties of corner-modified
matrices by purely {\em algebraic methods}, and have provided algorithms 
for exactly solving the eigensystem problem of {\em large and regular} 
matrices in this class. The regularity condition need not be 
a serious restriction in practice. In the language  of condensed-matter physics, 
it means that the single-particle Hamiltonian (the 
corner-modified BBT in question) displays no dispersionless energy 
bands. Nonetheless, with minor modifications, our algorithms do apply to 
singular BBT matrices as input, except, the output will be correct 
but not necessarily complete: some eigenvalues and/or generalized 
eigenvectors may be missed. 
Remarkably, our approach allows for all banded Toeplitz matrices, 
block or non-block, to be treated on an equal footing, 
without requiring the underlying matrix to be Hermitian nor 
excluding the strictly upper or lower triangular ones.
 
Our analysis of the eigensystem problem for corner-modified BBT 
matrices is unusual for its reliance on {\em symmetry}. 
Our starting point is to rewrite the eigenvalue equation as a system 
of two equations, the bulk and boundary equations, that we solve in 
succession. There are two types of solutions of the bulk equation.
One type of solution can be computed by reference to an
auxiliary BBL matrix. The resulting doubly-infinite matrices are 
{\em translation-invariant}, and so the associated eigensystem 
problem can be solved by a symmetry analysis. The latter, 
however, is highly unconventional from a quantum-mechanical
perspective, because the representation of the translation
symmetry is {\it not} unitary. As a consequence, the {\em extended 
solutions} of the bulk equation obtained via a BBL matrix can 
display three (and only three) possible behaviors: oscillatory, exponential 
decay, or exponential decay with a power-law correction. In addition, there
may exist {\em emergent solutions} of the bulk equation with {\em finite 
support}, localized near in the top and bottom entries of the 
eigenvector. Their relationship to translation symmetry is also 
striking, although less direct: they belong to the generalized 
kernel of a {\em truncated} translation symmetry. 
Finally, the boundary 
equation takes the solutions of the bulk equation as input, in order to  
select linear combinations that are the actual (possibly generalized) 
eigenvectors of the corner-modified BBT matrix of interest. While we have presented 
some exact results for the boundary equation as well, the latter need not be associated 
to a structured matrix and so a closed-form solution is 
not available in general.  
Notwithstanding, from a numerical efficiency standpoint, the key
observation is that the complexity of solving the bulk equation, and
also that of solving the boundary equation in the practically important case of 
symmetrical corner modifications, is {\em independent of the size} 
of the input matrix under consideration. 

In hindsight, one of our contributions can be interpreted, in physical parlance, as 
a generalization of the well-known Bloch's theorem for single-particle eigenfunctions 
to a class of boundary conditions 
not restricted to the standard periodic case,  with Eq. (\ref{ansatz}) providing a 
``generalized Bloch Ansatz'' \cite{prb1}. Interestingly, at least for the Hermitian case, 
the situation is reminiscent in many ways to the technique known as algebraic Bethe Ansatz, in the 
sense that one may solve for the eigensystem by finding the roots of associated 
polynomial equations, as opposed to the usual Bethe equations \cite{sutherland}. 

A number of promising directions for future research are prompted by 
our present investigation. From the point of view of applied mathematics,
it would be interesting to extend our approach to {\em multilevel} 
corner-modified BBT matrices: roughly speaking, these may be associated to 
sums of tensor products of our corner-modified BBT matrices; or, physically, 
to tight-binding models that cannot be reduced to one 
dimension. While much of our formalism goes through in higher dimensions, 
one conspicuous obstruction stems from the fact that there is no known 
equivalent of the Smith factorization for multivariate matrix 
Laurent polynomials \cite{krish85}, to the best of our knowledge. 
From the point of view of condensed-matter physics, 
{this work was prompted by the quest for exactly  
characterizing localized boundary modes and, more broadly, the   
role played by boundary conditions toward establishing a bulk/boundary
correspondence. On the one hand, this naturally prompts for the present 
mathematical tools to be applied to more general physical scenarios than  
equilibrium Hamiltonian systems as considered so far \cite{abc,prb1} -- 
including spectral properties of {\em non-equilibrium}, coherently or 
dissipatively driven fermionic matter, described 
by appropriate quadratic Floquet Hamiltonians 
\cite{EPL} or Lindblad master equations 
\cite{prosen,peter}.  
On the other hand, it is intriguing, as noted, that our
analysis brings to the fore translation symmetry, albeit 
in a non-unitary guise. Perhaps this is the starting point 
for formulating a symmetry principle behind the bulk/boundary
correspondence. But, what would be place of such a symmetry
principle in the light of the topological classification of
free fermions? }

\section*{Acknowledgements}

We acknowledge stimulating discussions with Cristiane Morais Smith, 
Bernard van Heck, and Shinsei Ryu.
Work at Dartmouth was supported in part by the National
Science Foundation through Grant No. PHY-1104403
and the Constance and Walter Burke Special Projects
Fund in Quantum Information Science.

\section{Appendix: Infinite banded block-Toeplitz transformations}
\label{ibbtwhynot}

In this appendix we will solve a physically motivated
problem associated to linear transformations of 
\(\V_{1,\infty}\equiv \P_{1,\infty}\V_d^S\)
of the form \(A=\P_{1,\infty}\A|_{\V_{1,\infty}}\).
The task is to compute the square-summable sequences
in \(\Ker A\), or some closely related corner-modified
version of \(A\). We will make this problem precise
after some preliminaries. 

Elements of \(\V_{1,\infty}\) can be seen as
half-infinite sequences. We will use the letter 
\(\Upsilon \in \V_{1,\infty}\) to denote one such sequence, 
and write 
\(\Upsilon \equiv \{|\upsilon_j\rangle\}_{j\in \N}\). 
If \(\A\) has bandwidth \((p,q)\), with \(p\leq q\), then \(A\) is 
induced by the ``infinite-downwards'' square array
\[  A=
\begin{bmatrix}
a_0 		& \dots  	& a_q   	&       		&					&0   				&\ddots		\\
\vdots  	&\ddots  	&\ddots 	&\ddots  	&					&            	&\ddots		\\ 
a_p   		&\ddots  	&\ \ a_0	&\ddots   &\ \ a_q	&		    			&						\\
       		&\ddots  	&\ddots	&\ddots	&	\ddots	&\ddots		&						\\
       		&          	&\ \ a_p 	&\ddots	&\ddots	&      			&						\\
0		  		&		  			&  				&\ddots	&					& 					&						\\
\ddots  	&\ddots	&					&					&					&						&
\end{bmatrix}.
\]
We will call \(A\) an {\it infinite} BBT matrix,
or IBBT for short. The transformation induced by $A$ is a IBBT transformation. 

\begin{defn}
\label{bbprojectors}
Let \(p' \equiv {\rm min}(p,0)\) and \(q' \equiv {\rm max}(0,q)\) for 
integers \(p\leq q\). The projector 
\[
P_B \Upsilon \equiv 
\{|\upsilon_j'\rangle\}_{j\in\N},\quad
|\upsilon_j'\rangle=\left \{
\begin{array}{lcl}
0& \mbox{if}& j=1,\dots,-p'\\
|\upsilon_j\rangle& \mbox{if} & -p'<j
\end{array}
\right., 
\]
is the {\it right bulk projector} for bandwidth \((p,q)\). 
The projector
\[  Q_B \Upsilon \equiv 
\{|\upsilon_j'\rangle\}_{j\in\N},\quad
|\upsilon_j'\rangle=\left \{
\begin{array}{lcl}
0& \mbox{if}& j=1,\dots,q'\\
|\upsilon_j\rangle& \mbox{if} & q'<j
\end{array}
\right., 
\]
is the {\it left bulk projector}. The corresponding left and 
right boundary projectors are \(P_\partial \equiv \mathds{1}-P_B\) and 
\(Q_\partial \equiv \mathds{1}-Q_\partial\), respectively.
\end{defn}
\noindent
With this definition, it follows that 
if \(p'=0\) (\(q'=0\)), then \(P_B=\mathds{1}\) (\(Q_B=\mathds{1}\)).

\begin{defn}
A linear transformation \(C\) of \(\V_{1,\infty}\) is a 
{\it corner-modified IBBT transformation} if there exists an 
IBBT transformation \(A=P_{1,\infty}\A|_{\V_{1,\infty}}\),
necessarily unique, such that \(P_BC=P_BA\). \(C\) is 
{\em symmetrical} if, in addition, \(CQ_B=AQ_B\). 
\end{defn}

\begin{lem}
\label{opkerinc}
If the principal coefficient \(a_{p'}\) of \(A\) is invertible, 
then \(\Ker P_BA=\P_{1,\infty}\Ker\A\). Otherwise, 
\(\P_{1,\infty}\Ker\A \subset \Ker P_BA\).  
\end{lem}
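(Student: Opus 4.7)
The plan is to prove the two parts of the lemma separately, mirroring the structure of Theorem \ref{softy} but adapted to the one-sided geometry of $\V_{1,\infty}$.

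For the universal inclusion $\P_{1,\infty}\Ker\A \subseteq \Ker P_BA$, I unfold $P_BA = \P_{1-p',\infty}\A|_{\V_{1,\infty}}$ and compute, for any $\Psi \in \Ker\A$ with $\Upsilon \equiv \P_{1,\infty}\Psi$,
\[
P_BA\Upsilon = \P_{1-p',\infty}\A\P_{1,\infty}\Psi = \sum_{r=p}^{q}\P_{1-p',\infty}\P_{1-r,\infty}\,a_r\T^r\Psi.
\]
Since $r \geq p \geq p' \equiv \min(p,0)$ forces $1-r \leq 1-p'$, each factor $\P_{1-p',\infty}\P_{1-r,\infty}$ collapses to $\P_{1-p',\infty}$, and the sum reduces to $\P_{1-p',\infty}\A\Psi = 0$. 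This is in essence the nesting step from the proof of Theorem \ref{softy}(i), with the right projector absent because $\V_{1,\infty}$ has no right boundary.

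For the reverse inclusion under the hypothesis that $a_{p'}$ is invertible, I note first that this forces $p \leq 0$, whence $p' = p$ and $a_p$ itself is invertible. Given $\Upsilon = \{|\upsilon_j\rangle\}_{j\in\N} \in \Ker P_BA$, I will construct a unique $\Psi \in \V^S_d$ satisfying $\A\Psi = 0$ and $\P_{1,\infty}\Psi = \Upsilon$. Writing the target equations as $\sum_{r=p}^{q}a_r|\psi_{j+r}\rangle = 0$ for $j \in \Z$, the subfamily indexed by $j \geq 1-p$ involves only entries with index $\geq 1$ and thus coincides with the equations supplied by $\Upsilon \in \Ker P_BA$; setting $|\psi_k\rangle \equiv |\upsilon_k\rangle$ for $k \geq 1$ handles these. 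The remaining equations, at $j \leq -p$, are used to define $|\psi_k\rangle$ for $k \leq 0$ by downward recursion on $j$: the equation at $j$ reads $a_p|\psi_{j+p}\rangle = -\sum_{r=p+1}^{q}a_r|\psi_{j+r}\rangle$, whose right-hand side involves only entries with strictly larger index, already determined at earlier stages of the recursion or inherited from $\Upsilon$. Invertibility of $a_p$ then uniquely specifies $|\psi_{j+p}\rangle$, and iterating from $j = -p$ downwards yields the full extension $\Psi \in \Ker\A$ with $\P_{1,\infty}\Psi = \Upsilon$.

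The main point to handle with care is the asymmetry with Theorem \ref{softy}(ii): only the lower principal coefficient $a_{p'}$ enters the hypothesis, because $\V_{1,\infty}$ extends freely to the right and no rightward extension of $\Upsilon$ is ever required. For a brief sharpness comment, if $p > 0$ then $a_{p'} = 0$ is automatically singular, and the equations at $1-p \leq j \leq 0$, which involve only indices in $\N$, impose genuine extra constraints on $\Upsilon$ beyond $\Ker P_BA$; similarly, if $p \leq 0$ but $a_p$ has nontrivial kernel, the recursion step fails to determine $|\psi_{j+p}\rangle$ uniquely, so in both cases the inclusion is typically strict.
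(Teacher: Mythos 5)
Your proof is correct and follows essentially the same route as the paper, which simply adapts the argument of Theorem \ref{softy} to the half-infinite setting: the nesting computation gives $\P_{1,\infty}\Ker\A \subseteq \Ker P_BA$, and invertibility of $a_{p'}$ (hence of $a_p$ with $p\le 0$) allows a unique leftward extension of any bulk solution, the right principal coefficient $a_{q'}$ playing no role because $\V_{1,\infty}$ has no right boundary. Writing out the downward recursion explicitly is exactly what the paper's reference to Theorem \ref{softy} intends.
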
 
\begin{proof}
See the proof of Thm.\,\ref{softy}.
In contrast to the situation for \(A_N\), the principal coefficient
\(a_{q'}\) plays no role here. As noted,  if \(p'=0\), then
\(P_B=\mathds{1}\).
\end{proof}

From here onwards, we denote the solution space of the bulk equation by
\(\mathcal{M} \equiv \Ker P_BA. \)
Since \(A\) is now a linear transformation of an infinite-dimensional 
vector space,  \(\mathcal{M}\) may also, in principle, be 
infinite-dimensional if  \(a_p'\) is not invertible. 
{We show next that  
it is finite-dimensional, although there is no guarantee that 
\(\dim \Range P_\partial\) matches \( \dim \mathcal{M}\). 
The proof of Theorem \ref{lemker} breaks down for IBBT transformations, 
and so one may expect the boundary matrix to be rectangular in general. }

Recall, from Thm.\,\ref{thm:semi},
that if \(\A\) is regular, then \(\mathcal{M}=\mathcal{F}_{1}^{-} \oplus \P_{1,\infty}\Ker\A \ ,
\) 
where $\mathcal{F}_1^{-} \in \V_{1,\sigma}$ 
for any $N>\tau+2\sigma$. If the principal coefficient \(a_{p'}\) of \(A\) 
is invertible, then \(\mathcal{F}_{1}^{-}=\{0\}\).  
The subspace 
\[  {\cal H} \equiv \Big\{\{|\upsilon_j\rangle\}_{j\in\N} \,|\,
\sum_{j=1}^\infty \langle\upsilon_j|\upsilon_j\rangle<\infty\Big\} \subset \V_{1,\infty}
\]
is the Hilbert space of square-summable 
sequences. We will denote square-summable sequences 
as \(|\Upsilon\rangle\in{\cal H}\), so that 
\(\langle \Upsilon_1|\Upsilon_2\rangle=\sum_{j\in \N}\langle \upsilon_{1,j}|
\upsilon_{2,j}\rangle <\infty\).
Our final task in this appendix is to compute a basis of  
\(\Ker C\cap {\cal H}\) for an arbitrary corner modified IBBT 
transformation. Physically, these states correspond to normalizable, 
bound states. 

\begin{lem}
\label{lemhilb}
Let $z \in \mathds{C}$, $z\ne 0$, and $s\in\mathds{N}$. 
If $|z|<1$, then $\P_{1,\infty}\mathcal{T}_{z,s}\subset\mathcal{H}$. 
\end{lem}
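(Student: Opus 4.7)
The plan is to reduce the claim to a termwise estimate on the generating sequences of $\mathcal{T}_{z,s}$ and then use finite-dimensionality. Recall from Lemma \ref{scalarker} and Lemma \ref{sdamaps} that
\[
\mathcal{T}_{z,s} = \Span\{\Phi_{z,v}|m\rangle \,|\, v=1,\dots,s;\ m=1,\dots,d\},
\]
where $\Phi_{z,v} = \{j^{(v-1)} z^{j-v+1}\}_{j\in\Z}$. Since $\mathcal{H}$ is a linear subspace of $\V_{1,\infty}$ and $\mathcal{T}_{z,s}$ is $ds$-dimensional, it suffices to verify that each generator $\P_{1,\infty}\Phi_{z,v}|m\rangle$ lies in $\mathcal{H}$.

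The check reduces to a single scalar convergence question: I would show that
\[
\sum_{j=1}^{\infty} \bigl| j^{(v-1)} z^{j-v+1} \bigr|^2 < \infty
\]
whenever $|z|<1$. This is routine: $j^{(v-1)} = j(j-1)\cdots(j-v+2)$ is a polynomial in $j$ of degree $v-1$, so $|j^{(v-1)}|^2$ grows at most polynomially, while $|z|^{2(j-v+1)} = |z|^{-2(v-1)} |z|^{2j}$ decays geometrically in $j$; the standard ratio test (or the elementary fact that $\sum_j j^{2(v-1)} r^j$ converges for $0\le r<1$) then gives the claim. Equivalently, the generating function $\sum_{j\ge 1} j^{(v-1)} x^j$ has radius of convergence $1$, so its square-sum converges for $|x|<1$.

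With the generators in $\mathcal{H}$, the inclusion follows because a finite linear combination of vectors in the Hilbert subspace $\mathcal{H}$ remains in $\mathcal{H}$; the basis vectors $|m\rangle\in\C^d$ contribute only a finite multiplicative constant in the norm, by orthonormality of the canonical basis. I do not anticipate any real obstacle here, as the statement is essentially a decay estimate for polynomially-weighted geometric sequences; the only subtlety is bookkeeping the shift $j-v+1$ in the exponent, which merely introduces a benign finite factor $|z|^{-2(v-1)}$.
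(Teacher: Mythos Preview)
Your proposal is correct and follows essentially the same approach as the paper: reduce to the finite spanning set $\{\Phi_{z,v}|m\rangle\}$, then apply the ratio test to the scalar series $\sum_j (j^{(v-1)})^2|z|^{2j}$. The paper additionally records the divergence for $|z|\ge 1$, but that is not needed for the inclusion stated in the lemma.
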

\begin{proof}
Let 
$\P_{1,\infty}\Phi_{z,v}|m\rangle \equiv \{j^{(v-1)}z^{j}|m\rangle\}_{j\in\N} 
= \{|\phi_j\rangle\}_{j\in\N}. $
The \(l^2\)-norm of this sequence would be 
given by \(
\sum_{j\in\N}\langle \phi_j|\phi_j\rangle = 
\sum_{j\in\N} (j^{(v-1)})^2|z|^{2j}  \),  
if convergent. The limit 
\[ \lim_{j\rightarrow \infty}
\left|\frac{((j+1)^{(v-1)})^2|z|^{2(j+1)}}{(j^{(v-1)})^2|z|^{2j}}\right|=|z|^2 \]
and so, by the ratio test, the series converges if \(|z|<1\) 
and diverges \(|z|>1\). It is immediate to check that it also diverges
if \(|z|=1\), in which case the series is attempting to sum a non-decreasing 
sequence of strictly positive numbers.
\end{proof}

\begin{thm}
If \(\A\) is regular, the space of the square-summable solutions of the bulk equation is 
given by
\[
\mathcal{M}\cap{\cal H}=
\mathcal{F}_{1}^{-}\oplus \P_{1,\infty}\bigoplus_{|z_\ell|<1}\, 
\Ker \A\cap \mathcal{T}_{z_\ell,s_\ell}.
\]
\end{thm}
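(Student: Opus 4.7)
The plan is to refine the structural decomposition of $\mathcal{M}$ already available and then separately check which of its constituent subspaces survive intersection with $\mathcal{H}$. Setting $L=1$ and $R=\infty$ in Theorem \ref{thm:semi} and combining with the spectral decomposition of $\Ker\A$ from Theorem \ref{smithbypass}(i) yields
\[
\mathcal{M} = \mathcal{F}_1^{-} \oplus \bigoplus_{\ell=1}^n \P_{1,\infty}\bigl(\Ker\A \cap \mathcal{T}_{z_\ell,s_\ell}\bigr),
\]
where the outer direct sum is legitimate because $\P_{1,\infty}$ is injective on $\Ker\A$: a nonzero element of $\Ker\A$ supported in $(-\infty,0]$ would generate, via the translations $\T^{-n}$, infinitely many independent kernel vectors, contradicting the finite-dimensionality of $\Ker\A$ guaranteed by regularity (Theorem \ref{smithbypass}).

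For the ``$\supseteq$'' inclusion, I would first observe that $\mathcal{F}_1^{-}\subset\V_{1,\sigma}$ consists of finitely supported sequences and so trivially sits inside $\mathcal{H}$. For every $\ell$ with $|z_\ell|<1$, Lemma \ref{lemhilb} gives $\P_{1,\infty}\mathcal{T}_{z_\ell,s_\ell}\subset\mathcal{H}$, so a fortiori $\P_{1,\infty}(\Ker\A\cap\mathcal{T}_{z_\ell,s_\ell})\subset\mathcal{H}$, and the right-hand side of the claim is therefore contained in $\mathcal{M}\cap\mathcal{H}$.

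The hard part is the reverse inclusion. Decomposing any $\Psi\in\mathcal{M}\cap\mathcal{H}$ along the direct sum above as $\Psi=\Phi+\Psi_{<}+\Psi_{\geq}$, with $\Phi\in\mathcal{F}_1^{-}$, $\Psi_{<}$ the contribution from roots with $|z_\ell|<1$ and $\Psi_{\geq}$ from $|z_\ell|\geq 1$, both $\Phi$ and $\Psi_{<}$ already lie in $\mathcal{H}$; since $\mathcal{H}$ is additively closed, it suffices to show that no nonzero $\Psi_{\geq}$ is square-summable. Writing the block-entry at site $j$ as $\Psi_{\geq}(j)=\sum_{|z_\ell|\geq 1}\sum_{v=1}^{s_\ell} j^{(v-1)}z_\ell^{j-v+1}|u_{\ell,v}\rangle$ and, assuming that some $|u_{\ell,v}\rangle\neq 0$, I let $R\geq 1$ be the largest $|z_\ell|$ with a nontrivial $|u_{\ell,v}\rangle$ and $v^\ast$ the largest index $v$ at which roots on the circle $|z|=R$ contribute; then
\[
\Psi_{\geq}(j)= j^{(v^\ast-1)}R^{j}\!\!\sum_{\ell:\,|z_\ell|=R}\!\!(z_\ell/R)^j z_\ell^{1-v^\ast}|u_{\ell,v^\ast}\rangle \; + \; o\bigl(j^{v^\ast-1}R^j\bigr).
\]
The main obstacle is ruling out cancellations among distinct roots of equal modulus on $|z|=R$, which I plan to handle via a Cesaro (Weyl equidistribution) argument: the squared norm of the leading bracket is a trigonometric polynomial in $j$ whose mean equals $\sum_{\ell:\,|z_\ell|=R}R^{2(1-v^\ast)}\|u_{\ell,v^\ast}\|^2>0$. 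Since $R\geq 1$, this positive-mean lower bound forces $\sum_j\|\Psi_{\geq}(j)\|^2=\infty$, contradicting $\Psi_{\geq}\in\mathcal{H}$ and completing the argument.
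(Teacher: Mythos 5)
Your proposal is correct and follows the same skeleton as the paper's proof: the decomposition $\mathcal{M}=\mathcal{F}_{1}^{-}\oplus\P_{1,\infty}\Ker\A$ from Theorem \ref{thm:semi}, the trivial inclusion of the finitely supported part in $\mathcal{H}$, and Lemma \ref{lemhilb} for the roots with $|z_\ell|<1$. The difference lies in the necessity direction: the paper reduces square-summability to the pointwise condition $\lim_{j\to\infty}|\psi_j\rangle=0$ and then simply asserts that an exponential-polynomial sum $\sum_{\ell}y_{\ell,m}(j)z_\ell^{j}$ with a nonzero contribution from some $|z_\ell|\geq 1$ cannot satisfy it, whereas you prove the stronger statement that the $\ell^2$ tail diverges, isolating the dominant modulus $R$ and degree $v^\ast$ and disposing of possible cancellations among distinct equal-modulus roots via the Ces\`aro mean of the trigonometric polynomial $\|B_j\|^2$ (plain averaging of $e^{ij(\theta_\ell-\theta_{\ell'})}$ suffices; Weyl equidistribution is not really needed). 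This buys a self-contained justification of the one step the paper leaves implicit, and your remark that $\P_{1,\infty}$ is injective on $\Ker\A$ (via translates and finite-dimensionality) likewise makes explicit something the paper uses silently. The only compressed point is the final inference from ``positive mean'' to $\sum_j\|\Psi_{\geq}(j)\|^2=\infty$: since $\|B_j\|^2$ is bounded, a positive fraction of sites in each long block has $\|B_j\|^2$ bounded below by half the mean, the weight $\bigl(j^{(v^\ast-1)}R^{j}\bigr)^2\geq 1$ there because $R\geq 1$, and the $o\bigl(j^{v^\ast-1}R^{j}\bigr)$ error is uniformly negligible on those sites; with that line of bookkeeping added, the block sums grow linearly and the argument is complete.
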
 
\begin{proof}
The sequences in \(\mathcal{F}_{1}^{-}\) have finite support, 
and so they are square-summable. Then, $\mathcal{F}_{1}^{-} \subset \mathcal{H}$ 
implies that \(
\mathcal{M}\cap{\cal H} = 
\mathcal{F}_{1}^{-} \oplus \big(\P_{1,\infty}\Ker\A\cap {\cal H}\big).
\)
For every $\Psi \in \Ker\A$, 
\begin{eqnarray*}
\P_{1,\infty}\Psi =
\sum_{\ell=1}^{n}\sum_{v=1}^{s_\ell}\sum_{m=1}^{d}\alpha_{\ell,v,m}
\P_{1,\infty}\Phi_{z_\ell,v}|m\rangle
\equiv
\{|\psi_j\rangle\}_{j\in\N}, 
\end{eqnarray*}
so that $|\psi_j\rangle = \sum_{m=1}^{d}|m\rangle\sum_{\ell=1}^{n}
y_{\ell,m}(j)z_{\ell}^{j},$ with 
$y_{\ell,m}(j)=\sum_{v=1}^{s_\ell}\alpha_{\ell,v,m}j^{(v-1)}$ 
polynomials in $j$ of degree at most $s_\ell$. The sequence 
$\P_{1,\infty}\Psi$ cannot be square-summable unless 
\(\lim_{j\rightarrow \infty}\langle \psi_j |\psi_j\rangle=0\), 
which in turn implies \(\lim_{j\rightarrow \infty}|\psi_j\rangle=0. 
\)
Hence, for any \(m=1,\dots,d\),   
\[
\lim_{j\rightarrow \infty}\langle m |\psi_j\rangle=0
\ \ \implies \ \ 
\lim_{j\rightarrow \infty}\sum_{\ell=1}^{n}y_{\ell,m}(j)z_\ell^j=0.
\]
The necessary condition for square-summability can be met if 
and only if $\alpha_{\ell,s,m}=0$ whenever \(|z_\ell|\geq 1\), for all $s,m$. 
\end{proof}

Based on the above characterization of the solution space $\mathcal{M}$, the 
Ansatz for the kernel vectors of $C$ in the space of square summable 
sequences may be written by suitably truncating the general 
Ansatz presented in Eq.\,\eqref{ansatz}.

\section*{References}

\end{document}